\documentclass[sigconf]{aamas}

\usepackage{balance} % for balancing columns on the final page

\doi{JAGQ1479}

\makeatletter
\gdef\@copyrightpermission{
  \begin{minipage}{0.2\columnwidth}
   \href{https://creativecommons.org/licenses/by/4.0/}{\includegraphics[width=0.90\textwidth]{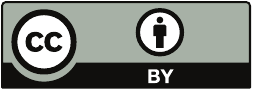}}
  \end{minipage}\hfill
  \begin{minipage}{0.8\columnwidth}
   \href{https://creativecommons.org/licenses/by/4.0/}{This work is licensed under a Creative Commons Attribution International 4.0 License.}
  \end{minipage}
  \vspace{5pt}
}
\makeatother

\setcopyright{ifaamas}
\acmConference[AAMAS '26]{Proc.\@ of the 25th International Conference
on Autonomous Agents and Multiagent Systems (AAMAS 2026)}{May 25 -- 29, 2026}
{Paphos, Cyprus}{C.~Amato, L.~Dennis, V.~Mascardi, J.~Thangarajah (eds.)}
\copyrightyear{2026}
\acmYear{2026}
\acmDOI{}
\acmPrice{}
\acmISBN{}

\title[Resilient Strategies for Stochastic Systems]{Resilient Strategies for Stochastic Systems:\\How Much Does It Take to Break a Winning Strategy?}

\subtitle{AAAI Track}

\author{Kush Grover}
\affiliation{
  \institution{Fondazione Bruno Kessler}
  \country{Italy}}

\author{Markel Zubia}
\affiliation{
  \institution{Ruhr University Bochum}
  \country{Germany}}

\author{Debraj Chakraborty}
\affiliation{
  \institution{Nanyang Technological University, Singapore}
  \country{}}

\author{Muqsit Azeem}
\affiliation{
  \institution{ 
	Technical University of Munich \&
	University of Konstanz
   }
  \country{Germany}}

\author{Nils Jansen}
\affiliation{
  \institution{Ruhr University Bochum
  \& Radboud University Nijmegen}
  \country{Germany}\\
  }

\author{Jan K\v{r}etinsk\'y}
\affiliation{
  \institution{Masaryk University}
  \country{Czech Republic}}

\begin{abstract}
We study the problem of resilient strategies in the presence of uncertainty. 
Resilient strategies enable an agent to make decisions that are robust against disturbances.
In particular, we are interested in those disturbances that are able to flip a decision made by the agent. 
Such a disturbance may, for instance, occur when the intended action of the agent cannot be executed due to a malfunction of an actuator in the environment.
In this work, we introduce the concept of resilience in the stochastic setting and present a comprehensive set of fundamental problems.
Specifically, we discuss such problems for Markov decision processes with reachability and safety objectives, which also smoothly extend to stochastic games.
To account for the stochastic setting, we provide various ways of aggregating the amounts of disturbances that may have occurred, for instance, in expectation or in the worst case. Moreover, to reason about infinite disturbances, we use quantitative measures, like their frequency of occurrence.
 
\end{abstract}

\keywords{Markov Decision Processes; Stochastic Games; Verification}

\usepackage{algorithm}
\usepackage{newfloat}
\usepackage{listings}

\usepackage{amsthm, amsfonts, mathtools, thmtools}
\newtheorem{definition}{Definition}
\newtheorem{example}{Example}

\newtheorem{lemma}{Lemma}

\newtheorem{corollary}{Corollary}
\newtheorem{remark}{Remark}

\usepackage{amsmath}
\usepackage{bm}
\usepackage{booktabs}
\usepackage{proof}
\usepackage{tikz}
\usetikzlibrary{arrows}
\usetikzlibrary{arrows.meta}
\usepackage{csquotes}
\usepackage{graphics}
\usepackage{stmaryrd}
\usepackage{soul}
\usepackage{array}
\usepackage{enumitem}
\usepackage{dsfont}
\usepackage{tcolorbox}

\usepackage{ellipsis, mparhack, ragged2e} % bugfixing
\usepackage[l2tabu, orthodox]{nag} % avoid typical LaTeX errors

\usepackage{multirow}
\usepackage{xparse}
\usepackage{mathtools}
\usepackage{environ}
\usepackage{lscape}

\usepackage{lipsum}

\usepackage{subcaption}
\usepackage{capt-of}

\usepackage[acronym]{glossaries}
\usepackage{algpseudocode}

\usepackage{marginfix}
\usepackage{xifthen}

\usepackage{longtable} % table over several pages

\usepackage{afterpage}

\usepackage{xfrac}

\usetikzlibrary{automata,positioning,fit,positioning,arrows}
\tikzset{
	state/.style={
		circle,
		draw=black, very thick,
		minimum height=2em,
		inner sep=2pt,
		text centered,
	},
	squarednode/.style={rectangle, draw=black, very thick, minimum size=2em},
	diamondnode/.style={diamond, draw=red!60, fill=red!5, very thick, minimum size=7mm},
	trianglenode/.style={regular polygon,regular polygon sides=3, draw=blue!60, fill=blue!5, very thick, minimum s ize=5mm},
	smallcircle/.style={circle, draw=black, fill=black!5, very thick, scale=0.5}
}

\usepackage[capitalise]{cleveref}

\usepackage{placeins} % for FloatBarrier

\usepackage{pgfplots}
\usepgfplotslibrary{fillbetween}
\pgfplotsset{compat=newest}

\algnewcommand\algorithmicswitch{\textbf{switch}}
\algnewcommand\algorithmiccase{\textbf{case}}
\algnewcommand\algorithmicassert{\texttt{assert}}
\algnewcommand\Assert[1]{\State \algorithmicassert(#1)}%
\algdef{SE}[SWITCH]{Switch}{EndSwitch}[1]{\algorithmicswitch\ #1\ \algorithmicdo}{\algorithmicend\ \algorithmicswitch}%
\algdef{SE}[CASE]{Case}{EndCase}[1]{\algorithmiccase\ #1}{\algorithmicend\ \algorithmiccase}%
\algtext*{EndSwitch}%
\algtext*{EndCase}%

\newcommand{\para}{\paragraph}

\definecolor{blizzardblue}{rgb}{0.67, 0.9, 0.93}
\definecolor{lightcyan}{rgb}{0.9, 1.0, 1.0}

\newcommand{\distribution}{\mathsf{Dist}}

\newcommand{\mdpstates}{S}
\newcommand{\mdpinitstates}{s_0}

\newcommand{\game}{\mathcal{G}}
\newcommand{\gamestates}{S}
\newcommand{\gamestatesone}{S_1}
\newcommand{\gamestatestwo}{S_2}

\newcommand{\gameactions}{A}
\newcommand{\gameavailactions}{Av}
\newcommand{\gametransitions}{T}
\newcommand{\gamedisturbanceactions}{A^D}
\newcommand{\gameavaildisturbanceactions}{Av^D}
\newcommand{\gamedisturbancetransitions}{T^D}
\newcommand{\gamepath}{\rho}

\newcommand{\newgame}{\tilde{\mathcal{G}}}
\newcommand{\newgamestates}{\tilde{S}}
\newcommand{\newgamestatesone}{\tilde{S_1}}
\newcommand{\newgamestatestwo}{\tilde{S_2}}
\newcommand{\newgameactions}{\tilde{A}}
\newcommand{\newgameavailactions}{\tilde{Av}}
\newcommand{\newgametransitions}{\tilde{T}}

\newcommand{\strategyone}{\pi}
\newcommand{\strategytwo}{\sigma}
\newcommand{\disturbancestrategy}{{\delta}}

\newcommand{\objective}{\phi}
\newcommand{\targetstates}{G}

\newcommand{\breakpoint}{\mathcal{B}}
\newcommand{\expectedbreakpoint}{\mathcal{EB}}
\newcommand{\expected}{\mathbb{E}}
\newcommand{\prob}{\mathbb{P}}
\newcommand{\disturbances}{\mathcal{D}^{T}}
\newcommand{\longrundisturbances}{\mathcal{D}^{F}}

\DeclareGraphicsExtensions{.pdf, .png, .jpg}

\usepackage{etoolbox}
\makeatletter
\AfterEndEnvironment{definition}{\@doendpe}
\AfterEndEnvironment{theorem}{\@doendpe}
\AfterEndEnvironment{corollary}{\@doendpe}
\AfterEndEnvironment{lemma}{\@doendpe}
\AfterEndEnvironment{proposition}{\@doendpe}
\AfterEndEnvironment{example}{\@doendpe}

\AfterEndEnvironment{definition}{\everypar{\setbox\z@\lastbox\everypar{}}}
\AfterEndEnvironment{theorem}{\everypar{\setbox\z@\lastbox\everypar{}}}
\AfterEndEnvironment{corollary}{\everypar{\setbox\z@\lastbox\everypar{}}}
\AfterEndEnvironment{lemma}{\everypar{\setbox\z@\lastbox\everypar{}}}
\AfterEndEnvironment{proposition}{\everypar{\setbox\z@\lastbox\everypar{}}}
\AfterEndEnvironment{example}{\everypar{\setbox\z@\lastbox\everypar{}}}
\makeatother

\RequirePackage{marvosym}

\newcommand{\Naturals}{\mathbb{N}}

\DeclareDocumentCommand{\post}{D<>{} O{} D(){}}{\mathsf{Post}_{#1}^{#2}\ifthenelse{\isempty{#3}}{}{(#3)}}

\newcommand{\polytime}{\mathbf{P}}
\newcommand{\NP}{\mathbf{NP}}

\newcommand{\PSPACE}{\mathbf{PSPACE}}

\newcommand{\N}{\mathbb{N}}

\makeatletter
\newcommand{\algcolor}[2]{%
	\hskip-\ALG@thistlm\colorbox{#1}{\parbox{\dimexpr\linewidth-2\fboxsep}{\hskip\ALG@thistlm\relax #2}}%
}

\makeatother

\renewcommand{\algorithmiccomment}[1]{\bgroup\hfill//~#1\egroup}

\definecolor{algo-teal}{rgb}{0.84, 0.85, 0.97}
\definecolor{algo-lime}{rgb}{0.898, 1, 0.53}
\definecolor{algo-pink}{rgb}{1, 0.827, 0.878}
\definecolor{algo-yellow}{rgb}{1, 0.98, 0.74}
\definecolor{persianindigo}{rgb}{0.2, 0.07, 0.48}
\definecolor{lincolngreen}{rgb}{0.11, 0.35, 0.02}
\definecolor{mulberry}{rgb}{0.77, 0.29, 0.55}
\definecolor{goldenpoppy}{rgb}{0.99, 0.76, 0.0}
\definecolor{our-red}{rgb}{1,0.7,0.7}
\definecolor{our-green}{rgb}{0.7,1,0.7}

\newtoggle{showplots}
\toggletrue{showplots}

\newenvironment{new}{}{}
\newenvironment{added}{}{}

\begin{document}

%%% The following commands remove the headers in your paper. For final 
%%% papers, these will be inserted during the pagination process.

\pagestyle{fancy}
\fancyhead{}

%%% The next command prints the information defined in the preamble.

\maketitle 

%%%%%%%%%%%%%%%%%%%%%%%%%%%%%%%%%%%%%%%%%%%%%%%%%%%%%%%%%%%%%%%%%%%%%%%%

\section{Introduction and Motivation}

In many areas such as machine learning, robotics, automated planning, and game theory, a notion of optimality is adopted to argue about the best possible behavior in a given environment.
While optimality captures the best possible performance an agent may achieve in a fixed, well-defined environment, real-world settings are rarely stable or predictable. Then, resilience becomes a more meaningful objective: it reflects the agent's ability to maintain desirable behavior despite changes or adversarial disturbances. 
The latter includes notions such as fault tolerance of algorithms, robustness to disturbances in control, or trembling hands equilibria in game theory.
As \citet{vardi2020efficiency} argues, this notion is spread not only over computer science, but also economics, evolution, or other dynamic systems; and while considerable effort has been spent in the area, still 
\emph{``We must recognize the trade-off between efficiency and resilience. It is time to develop the discipline of resilient algorithms.''}

\para{Resilient Strategies.}
This paper focuses on resilience of \emph{strategies} (a.k.a.\ policies, schedulers, or controllers, depending on the context) in terms of the number of decisions that need to be subverted, which we refer to as \emph{disturbances}, to violate the property otherwise satisfied by the original strategy.
From the control perspective, such disturbances may represent actuator faults, control noise, or environmental perturbations that prevent the agent from executing its intended action.
For instance, consider an autonomous drone navigating a grid world toward a target (\Cref{fig:drone}): the strategy corresponding to the red path would reach the target in the ideal situation. 
However, it passes close to the trees, making it vulnerable to collisions from disturbances (due to wind). 
A resilient strategy, in contrast, would follow the green path which maintains a buffer from the trees. This leads to the target with probability $1$, but it can also handle a disturbance.
Understanding how sensitive a strategy is to such disturbances is essential for deploying agents in uncertain or adversarial environments.
\begin{figure}[ht]
	\centering
	\includegraphics[width=78pt,height=64pt]{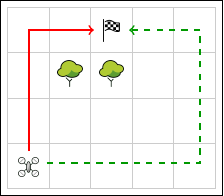}
	\caption{Drone navigation task with wind disturbances.}
	\label{fig:drone}
\end{figure}

From a theoretical perspective, analyzing resilience reveals which regions of the state space are critical to maintaining correct behavior. 
That is, it helps identify the brittle points of the agent's strategy where a few disturbances can lead to failure. 
This information is relevant in the context of \emph{explaining} its key decisions \cite{ashok2020dtcontrol}, in the context of \emph{strategy repair}~\cite{DBLP:conf/nfm/PathakAJTK15}, and in the modern on-the-fly automated synthesis approaches \cite{kvretinsky2023guessing}.
Notably, standard modeling approaches such as incorporating noise probabilistically or assuming a fully adversarial environment fall short of capturing this nuance. 
Probabilistic models mask the fragility of decisions by averaging over expected behaviors, 
while adversarial models are overly pessimistic and can rule out efficient paths unnecessarily. 
In contrast, our disturbance-based view offers a finer robustness metric by asking how many deviations are needed to cause failure.

\begin{table*}[t]
	\caption{Overview of strategy evaluation and optimal strategy computation under different semantics and breaking point types.}
	\centering
	\resizebox{\textwidth}{!}{
	\begin{tabular}{l|l|c|c|c|c}
		\toprule
		\multirow{2}{*}{\textbf{Semantics}} & 
		\multirow{2}{*}{\shortstack{\textbf{Breaking} \\ \textbf{Point Type}}} & 
		\multicolumn{2}{c|}{\textbf{Strategy Evaluation}} & 
		\multicolumn{2}{c}{\textbf{Strategy Synthesis}} \\
		\cmidrule(lr){3-4} \cmidrule(lr){5-6}
		& & \textbf{Safety} & \textbf{Reachability} & \textbf{Safety} & \textbf{Reachability} \\
		\midrule
		
		\multirow{2}{*}{Expected} 
		& Transient 
		& SSP for MDP \textbf{(P)} 
		& SSP for MDP \textbf{(P)} 
		& SSP for SG 
		& SSP for SG \\
		
		& Frequency 
		& Cannot happen
		& Collapse MECs $+$ SSP for MDP \textbf{(P)} 
		& Cannot happen 
		& Collapse MECs $+$ SSP for SG \\
		
		\midrule
		
		\multirow{2}{*}{Worst-case} 
		& Transient 
		& Iterative LP \textbf{(PSPACE)}
		& Iterative LP \textbf{(PSPACE)}
		& Iterative QP \textbf{(PSPACE)} 
		& Iterative QP \textbf{(PSPACE)} \\
		
		& Frequency 
		& Always $0$
		& Collapse MECs $+$ Worst-case analysis \textbf{(P)} 
		& Always 0 
		& Collapse MECs $+$ Worst-case analysis \textbf{(NP)}  \\
		
		\bottomrule
	\end{tabular}
	}
	\label{tab:breaking_point}
\end{table*}

\para{Resilience in Stochastic Systems.}
Previous work has investigated the resilience of strategies in (non-quantitative) graph games \cite{DBLP:journals/acta/NeiderWZ20}.

While the landscape is rather bland there, it becomes very vibrant in the stochastic context.
Indeed, in the former setting, 
(i)~quantifying resilience by the number of disturbances (decisions to be subverted) during a play is a straightforward choice, 
(ii)~it results in an integer bounded by the size of the state space or infinity,
(iii)~algorithmically, it boils down to iterative, graph-search-based procedures that identify regions from which an agent can enforce reaching a goal regardless of disturbances.
In stark contrast, for Markov decision processes and stochastic games,
(i)~resilience purely in terms of expected disturbance counts may not be adequate, and we also analyze worst cases happening with positive probability;
(ii)~the number of disturbances may now be larger than the state space size and is, in fact, unbounded; in such cases, we refine the resilience measure by considering the frequency of disturbances rather than their total count;
(iii)~algorithmically, only some cases can be easily reduced to simple graph search or mean-payoff computations.

\begin{new}
\para{Assumptions on Model Availability.}
Our work relies on the standard assumption in the model-based planning paradigm that the underlying model is available in explicit form or can be obtained through established learning methods. 
This convention is well documented in the model-checking literature; see, for example~\citep{DBLP:books/daglib/0020348}, where verification algorithms are defined relative to a given system model. 
When the model is not provided a priori but is learnable from data, statistical model checking offers a widely used pathway for reconstructing or approximating the relevant stochastic behavior~\citep{agha2018survey,ashok2019pac}. 
Once such an approximation is obtained, the resulting model fits within the standard verification framework and our method applies directly. 
\end{new}

\begin{new}
	\para{Disturbances with Different Costs.}
	In this work, we focus on symmetric disturbances, i.e. all disturbances have the same cost, for two reasons:
	(i)~Prior work on resilience in non-stochastic settings \cite{cdc2016,DBLP:journals/acta/NeiderWZ20} has considered only symmetric disturbances, and we aim to extend that line of work to stochastic settings.
	(ii)~From the practical perspective, assigning precise costs to disturbances is often challenging, as it requires detailed domain knowledge.
	On the other hand, symmetric disturbances appropriately model several realistic scenarios, for example network routing problems \cite{paxson1997end,hopps2000rfc2992}.
	Furthermore, our framework can naturally extend to asymmetric disturbance costs by assigning different weights to disturbance transitions.
	This would change the resilience metric from ``how many disturbances'' to ``total disturbance cost''.
	This would require adapting some algorithms presented here, while the rest would remain unchanged, and we leave this for future work.
\end{new}

\para{Summary of Our Contribution:}
\begin{itemize}
	\item We extend the notion of strategy resilience to the stochastic setting with safety and reachability objectives.
	To that end, we consider expected and positive-measure worst cases.
	When infinitely many disturbances are required to break a strategy, we refine the notion by computing their frequency.
	\item We provide algorithms for computing the resilience of a given strategy (for each definition of resilience introduced) and to compute optimally resilient strategies. 
	From the perspective of the efficiency-resilience trade-off, these are satisfying strategies (satisfy the functional property with at least a given probability threshold) with optimal resilience (requiring the most disturbances to break them, i.e., decrease the probability below the threshold).
\end{itemize}

\begin{new}
The resilience of a strategy is measured by calculating its \emph{breaking point}, defined as the maximum number of disturbances that are necessary to break it.
\end{new}
\cref{tab:breaking_point} provides an overview of the algorithms to compute the breaking point of a pure memoryless strategy in the worst and expected cases, and for finding a strategy with the maximum breaking point.
Definitions of terms used in the tables are provided in Section \ref{sec:prelims}.

\para{Related Work.}
The concept of system robustness against errors has been extensively studied in various contexts. 
One common approach is to model uncertainties in estimated probabilities using an uncertainty set within which the true probability resides. 
To ensure absolute safety, a worst-case analysis is often employed, yielding results that are resilient to such disturbances~\cite{reactive-synthesis}. 
\begin{new}
    Another approach is to perform a sensitivity analysis over the uncertainty sets to measure how resilient the system is with respect to different variables~\cite{sensitivity-analysis-mdp}.
\end{new}
For a comprehensive comparison of different notions of resilience and robustness, we refer the reader to~\cite{cdc2016, DBLP:journals/acta/NeiderWZ20}.

\begin{new}
Apart from uncertainty sets, notions of differential privacy and deception for MDPs have also been explored, which utilize and optimize for different measures of detection and resilience~\cite{differential-privacy-1, differential-privacy-2, deception-mdp}.    
\end{new}
Another way to model disturbances is by treating them as random events with small probabilities, often called \emph{trembling hand}~\cite{Marchesi_Gatti_2021, ijcai2024p402}.
However, this might not always be suitable, and defining an accurate stochastic error model may be challenging, also argued in~\cite{DBLP:journals/acta/NeiderWZ20}.

Therefore, the idea of strategies that are resilient to unmodeled intermittent disturbances was first introduced in~\cite{cdc2016} within the context of safety games. 
This concept was later extended to prefix-independent winning conditions, including parity objectives, in~\cite{DBLP:journals/acta/NeiderWZ20}. 
It was demonstrated that computing optimally resilient strategies for parity conditions incurs only a polynomial overhead compared to solving traditional parity games. 
These ideas were further applied in~\cite{cdc2020} to find resilient controllers for a continuous dynamical system. 
Additionally, the notion of resilient strategies was extended to infinite arenas, particularly pushdown graphs, in~\cite{DBLP:conf/mfcs/NeiderT020}.

However, none of these works consider the potential benefits of leveraging partial knowledge of stochasticity within the model. 
To our knowledge, this work is the first to study unmodeled intermittent disturbances in the context of stochastic systems such as Markov decision processes and stochastic games. 
Moreover, prior works do not investigate resilience in terms of the frequency of disturbances required to compromise a strategy, particularly in scenarios where infinite disturbances can be handled by the controller.

\section{Preliminaries}\label{sec:prelims}

For a set $A$, we denote its power set by $\mathit{Pow}(A)$. 
% \nj{set $\mathit{Pow}(A)$}
We use Von Neumann ordinals, with $\langle n \rangle = \{0, \dots, n-1\}$ for all $n \in \mathbb{N}$.
The set of all (discrete) probability distributions on the set $S$ is denoted by $\distribution(S)$.
We denote by $S^*$ the set of all \emph{finite strings} over $S$.

\subsection{Stochastic Games}
\begin{definition}
	A {\new{(2-player) }} \emph{Stochastic Game} (SG) is a tuple $\game = (\gamestates, \gamestatesone, \gamestatestwo, \gameactions, \gameavailactions, \gametransitions)$ where 
	$\gamestates$ is a finite set of states partitioned into Player~1 states $\gamestatesone$ and Player~2 states $\gamestatestwo$;
	$\gameactions$ is a finite set of actions;
	$\gameavailactions:\gamestates\rightarrow \mathit{Pow}(\gameactions)$ is a total function which maps the set of available actions to each state, and
	$\gametransitions: \gamestates\times \gameactions\rightharpoonup \distribution(\mdpstates)$ is a partial transition function, which, given a state and an available action, returns a probability distribution over the successor states.
\end{definition}
We also define an initial state, $s_0 \in \gamestates$ from which the play starts.
With a slight abuse of notation, we use $T(s,a,s')$ to denote $T(s,a)(s')$.
For this paper, we assume that the stochastic game is turn-based, i.e., in each state, only one player can move.

\paragraph{Paths or Runs}
The set of successor states for $s\in \gamestates$ that can be reached by taking the action $a$ is $\post[\game](s,a)=\{s'\mid \gametransitions(s,a,s')>0\}$. 
A \emph{finite path} (or \emph{finite run}) $\varrho = s_0a_0s_1\ldots s_i$ of length $i\ge 0$ is a sequence of states and actions such that for all $t\in[0,i-1]$, $a_t\in \gameavailactions(s_t)$ and $s_{t+1}\in \post[\game](s_t,a_t)$.
We can define infinite paths (or infinite runs) $\gamepath =s_0a_0s_1a_1s_2\ldots$ analogously.

\paragraph{Strategies}
A Player~1 strategy is defined as a function $\strategyone:(\gamestates\times \gameactions)^*\times \gamestatesone\rightarrow \distribution(\gameactions)$ mapping the history to a distribution over available actions.
A strategy is \emph{pure} if the distribution is always a Dirac delta function, and it is \emph{mixed} otherwise.  
Player~2 strategies have analogous definitions.

\paragraph{Markov Decision Processes (MDP) and Markov Chains (MC)} MDPs are a special case of stochastic games where the states of one player are empty i.e. $\gamestatestwo = \emptyset$.
If the strategy of Player~1, $\strategyone$, is fixed in the SG $\game$, it induces an MDP $\game^{\strategyone}$.
An MC is an MDP with $|\gameavailactions(s)|=1$ for all $s\in\mdpstates$. 
Fixing a strategy $\strategytwo$ on an MDP $\game^{\strategyone}$ induces a Markov chain $\game^{\strategyone, \strategytwo}$. 
An MC $M$ and an initial state $s_0$ define a unique probability measure $\prob_{M,s_0}$ over infinite paths~\cite{puterman2014markov}.
For any random variable $X$ defined over the infinite paths of MC $M$, its expected value with respect to $\prob_{M,s_0}$ is  $\expected_{M,s_0}[X]$.

\paragraph{End Component (EC)}
	A set of states and actions is an \emph{end component} if the play never leaves that set and it is a \emph{maximal end component} (MEC) if it cannot be extended by adding more states and actions.
	The states in an MDP can be partitioned into MECs, and every play is guaranteed to eventually enter a MEC.
	We refer to standard literature~\cite{puterman2014markov} for the formal definitions.
	We use $\mathsf{MEC}(M)$ to denote the set of maximal end components of $M$.
	This notion of MECs can be extended to SGs.

\paragraph{Rewards or Costs.} We model the costs associated with each state-action pair using a function
$C: \gamestates \times \gameactions \rightarrow \Naturals$.
Given a Markov chain (MC) $M$, let $C_i$ be a random variable that, for an infinite path $\rho = s_0 a_0 s_1 a_1 \dots$, returns
$C_i(\rho) = C(s_i, a_i)$; i.e., the cost (or reward) incurred at the $i$-th step of the path.
The total cost for an infinite path $\rho$ is defined as
$
C_\rho := \sum_{j=0}^{\infty} C(s_j, a_j)
$.
The \emph{expected total cost} for $M$ starting from state $s_0$ is defined as
$
\mathsf{TR}(M, s_0, C) := \sum_{k=0}^\infty \expected_{M, s_0}(C_k) = \expected_{M, s_0}(C_\rho)
$.
The $k$-step average cost from state $s_0$ in $M$ is defined as
$
v_k(s_0) := \expected_{M, s_0} \left( \frac{1}{k+1} \sum_{j=0}^{k} C_j \right)
$.
The \emph{expected mean payoff} for $M$ from state $s_0$ is defined as
$
\mathsf{MP}(M, s_0, C) := \liminf_{k \rightarrow \infty} v_k(s_0)
$.
The mean payoff for a path $\rho$ is defined as $\mathsf{MP}(\rho) := \liminf_{k \rightarrow \infty} \frac{1}{k+1} \sum_{j=0}^{k} C(s_j, a_j)$.

\paragraph{Specifications.}
We use the standard temporal operators $\square$ (globally) and $\lozenge$ (eventually). 
In an MC $M$ with initial state $s_0$, we 
let $\mathbb{P}_{\game,s_0}(\square\neg B)$ denote the probability of \emph{never} visiting $B \subseteq S$, 
and $\mathbb{P}_{\game,s_0}(\lozenge G)$ denote the probability of \emph{eventually} reaching $G \subseteq S$.
We focus on two standard quantitative specifications and their negations: \emph{safety}, where 
$\phi_{safety} := \prob_{\game,s_0}(\square \neg B) > p$; and \emph{reachability}, where 
$\phi_{reach} := \prob_{\game, s_0}(\lozenge G) > q$.

The properties $\phi_{safety}$ and $\phi_{reach}$ could equivalently be stated with $\geq$; the analysis carries over with only minor adaptations. 
Under expected semantics, the same computation yields \emph{resilience} rather than the \emph{breaking point}, while the worst-case semantics remains unchanged.
W.l.o.g., we assume that all states in $G$ or $B$ are sink states with only a self loop.

\paragraph{Algorithms to Solve MDPs and SGs.}
For solving \emph{reachability} and \emph{safety} problems in MDPs and SGs, key algorithms includes
Value Iteration (VI), Linear (LP) and Quadratic Programming (QP),
and Policy Iteration (PI).
We refer the reader to standard texts~\cite{puterman2014markov, condon1990algorithms} for details of these algorithms. 
For solving \emph{stochastic shortest path}~(SSP) problems, that ask to minimize the cost of reaching a target state, one can use LP for MDPs and VI/PI for SGs.

\section{Stochastic Games with Disturbances}
\label{sec:sgd}
In this section, we define stochastic games with disturbances.
In these systems, disturbances may occur at runtime and override the decisions of Player~1. 
For example, as seen in \cref{fig:drone}, if the drone wants to move upwards, wind can disturb its actions and push it to the right, completely changing its decision.
\begin{definition}
	A \emph{Stochastic Game with Disturbances} (SGD) is a tuple $\game = (\gamestates, \gamestatesone, \gamestatestwo, \gameactions, \gameavailactions,$ $\gametransitions, \gamedisturbanceactions, \gameavaildisturbanceactions, \gamedisturbancetransitions)$ where 
	$(\gamestates, \gamestatesone, \gamestatestwo, \gameactions, \gameavailactions, \gametransitions)$ is an SG, $\gamedisturbanceactions$ is a finite set of \emph{disturbance actions} (disjoint from $A$), $\gameavaildisturbanceactions : \gamestatesone\rightarrow Pow(\gamedisturbanceactions)$ is a function specifying the \emph{available disturbance actions} in Player~1 states, and 
	$\gamedisturbancetransitions: \gamestatesone\times \gamedisturbanceactions \rightharpoonup \distribution(\gamestates)$ defines a \emph{disturbance transition function}.
\end{definition}
We use $|\gamedisturbancetransitions|$ to denote the number of disturbance transitions in the game.
We can define an initial state and strategies for both players in the same way as for an SG.
However, since we have disturbance actions here, we also define a disturbance strategy.
A \emph{disturbance strategy} is defined as $\disturbancestrategy: (S\times A)^* \times S_1 \rightarrow Dist(\gamedisturbanceactions \cup \{\bot\})$.
Here, $\bot$ represents that no disturbance action is taken.
A run of the game is a sequence of states and actions $\gamepath = s_0 a_0 s_1 a_1 \dots$ where $\forall i, s_i \in \gamestates$, $a_i \in \gameavailactions(s_i)\cup \gameavaildisturbanceactions(s_i)$, and $s_{i+1} \in \gametransitions(s_i, a_i)$.
For a run $\gamepath = s_0 a_0 s_1 a_1,\dots$, we define the total number of disturbances as 
$\disturbances(\gamepath):= \big|\{a_i\mid a_i\in\gameavaildisturbanceactions(s_i)\}\big|$.
This number may not be finite, and in that case, we can define the frequency of disturbances as 
$\longrundisturbances(\gamepath):=\liminf_{k\rightarrow\infty} \frac{1}{k}\big|\disturbances(\gamepath[0,k])\big|$,
where $\gamepath[0,k]$ denotes the $k$-length prefix of an infinite run $\gamepath$.
We need $\liminf$ here because the limit in general might not exist.

\paragraph{Modeling Disturbances as Costs}
\label{sec:cost}
We can model the disturbances as costs in an SGD by letting $C(s, a) = 1$ if 
$s \in \gamestatesone, a \in \gamedisturbanceactions$, and letting $C(s, a) = 0$
otherwise.
\begin{remark}
	\label{rem:disturbance_cost_equivalence}
	Note that for a run $\gamepath$, the number of disturbances and the frequency of disturbances can be seen as the total cost and the mean payoff of the run respectively, i.e.
	\[ \disturbances(\gamepath) = C_\gamepath \text{ and } \longrundisturbances(\gamepath) = \mathsf{MP}(\gamepath)\,. \]
\end{remark}

\subsection{Resilience, Breaking Point, and Their Different Semantics}
Since disturbances are adversarial, we can assume that they are performed by Player~2.
The \emph{resilience} of a strategy is defined as the maximum number of disturbances that the strategy can withstand while satisfying a given specification.
However, this maximum might not exist, for instance if it can withstand $<2.5$ disturbances on average but cannot withstand 2.5 disturbances. 
Therefore, a more suitable, well-defined notion would be the \emph{breaking point} of a strategy, which refers to the minimum number of disturbances required to break it.
In this setting, \emph{an \textbf{optimally resilient strategy} is a strategy that has the greatest breaking point.}

There are various natural ways to quantify how many disturbances are required to break a strategy.
One approach is to consider the expected number of disturbances needed to break said strategy.
Another approach is to determine the maximum number of disturbances required over all possible paths, which we refer to as the worst-case measure throughout this paper.
The choice between these measures depends on the application context.
Expected resilience captures average-case behavior and is appropriate when disturbances occur frequently.
However, it may obscure rare but critical scenarios in which only a few disturbances suffice to break the strategy.
For robustness guarantees, it is often more informative to know how many disturbances the strategy can withstand in the worst case.

\begin{new}
\begin{figure}[h]
	\tiny
	\centering

	\begin{tikzpicture}[scale=0.5]
		
		\node[state, initial, initial text=] at (0,0)     (0)   {$s_0$};
		\node[state] at (0,-2)     (1)   {$s_1$};
		\node[state, fill=our-green] at (-3,-2)     (2)   {G};
		%		\node[state, fill=our-green] at (4,0)    (3)   {$s_3$};
		\node[state, fill=our-red] at (3,-2)    (3)   {B};
		\node[smallcircle] at (1.5,-0.5)    (4)   {};
		\node[smallcircle] at (1.5,-3.5)     (5)   {};
		
		\draw
		(0) edge[->, above] node {$a$} (2)
		(0) edge[-, above, dashed] node {$d$} (4)
		(4) edge[->, left, dashed] node {$0.5$} (1)
		(4) edge[->, right, dashed] node {$0.5$} (3)
		(1) edge[->, above] node {$a$} (2)
		(1) edge[-, above, dashed] node {$d$} (5)
		(5) edge[->, below, dashed] node {$0.5$} (2)
		(5) edge[->, right, dashed] node {$0.5$} (3)
		
		;
	\end{tikzpicture}
	\caption{For this SGD, if the objective is to reach $G$ with probability $> 0.4$, then the worst-case breaking point is 2, as disturbing just once will not break the policy, and the expected breaking point is 1.1 via an adversary that always disturbs in $s_0$ and with a probability of 0.2 in $s_1$.}
	\label{fig:worst_vs_average_breaking_point}
\end{figure}
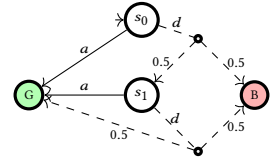
\end{new}

\subsection{Induced MDP Under a Player~1 Strategy}
\label{sec:induced-MDP}
Given an SGD $\game$ and a memoryless Player~1 strategy $\strategyone$, we use $M_{\strategyone}$ to denote the \emph{induced MDP} under $\strategyone$, described next. 
For every state $s \in \gamestatesone$, we remove all Player~1 actions except for the one picked by $\strategyone(s)$, while the disturbance actions remain unchanged.
Now, we can think of all states as Player~2 states.

\begin{restatable}{lemma}{theoremInducedMDPEquivalence}
	\label{lem:induced_mdp_equivalence} 
	For a memoryless Player~1 strategy $\strategyone$, there exist a bijective function $h$ that maps pairs $(\strategytwo, \delta)$ of Player~2 and disturbance strategies in $\game$ to a strategy $\mu$ in $M_\strategyone$ such that
	\[
	\forall \rho \in (S \times A)^* \times S, \quad \prob^{\pi, \sigma, \delta}_{\game, s_0}(\rho) = \prob^{\mu}_{M_\strategyone, s_0}(\rho)
	.\]
\end{restatable}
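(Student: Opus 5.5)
We define $h$ explicitly, check that it is a bijection, and then prove the equality of path probabilities by induction on the length of $\rho$.

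First we fix the semantics of a play in $\game$ under $(\strategyone,\strategytwo,\disturbancestrategy)$, since the paper leaves it implicit. In a state $s\in\gamestatesone$ reached after history $w$, the disturbance strategy samples from $\disturbancestrategy(w,s)\in\distribution(\gamedisturbanceactions\cup\{\bot\})$. If it samples some $d\in\gameavaildisturbanceactions(s)$, then $d$ is played and the successor is drawn from $\gamedisturbancetransitions(s,d)$. If it samples $\bot$, the action $\strategyone(s)$ is played and the successor is drawn from $\gametransitions(s,\strategyone(s))$. In a state $s\in\gamestatestwo$, the action is drawn from $\strategytwo(w,s)$.

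Since $\gameactions\cap\gamedisturbanceactions=\emptyset$, the recorded action tells us unambiguously whether a disturbance occurred. Histories in $\game$ and in $M_\strategyone$ are therefore words over the same alphabet, and the histories that have positive probability coincide in the two models.

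We define $\mu=h(\strategytwo,\disturbancestrategy)$ pointwise.
\begin{itemize}
\item For $s\in\gamestatestwo$ we set $\mu(w,s)=\strategytwo(w,s)$.
\item For $s\in\gamestatesone$ we set $\mu(w,s)(d)=\disturbancestrategy(w,s)(d)$ for every $d\in\gameavaildisturbanceactions(s)$, and $\mu(w,s)(\strategyone(s))=\disturbancestrategy(w,s)(\bot)$.
\end{itemize}
In $M_\strategyone$ the only actions available at $s\in\gamestatesone$ are $\strategyone(s)$ and the disturbance actions, so this is a well-defined strategy.

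To check that $h$ is bijective, we exhibit the inverse $h^{-1}(\mu)$. It takes $\strategytwo$ to be $\mu$ restricted to $\gamestatestwo$. It takes $\disturbancestrategy$ to be $\mu$ restricted to $\gamestatesone$, with the mass on $\strategyone(s)$ renamed to $\bot$.

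The one point needing care is the domain of strategies. We restrict $\strategytwo$ to histories ending in $\gamestatestwo$ and $\disturbancestrategy$ to histories ending in $\gamestatesone$, which is exactly the typing in the definitions. The two restrictions then recombine without overlap, and $h$ is a genuine bijection rather than one up to irrelevant values.

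For the probability identity we induct on the length of $\rho$. The base case is $\rho=s_0$, which has probability $1$ in both models. For the step, take $\rho' = \rho\, a\, s'$ with $\rho$ ending in a state $s$. Both probabilities factor as $\prob(\rho)$ times the probability of choosing $a$ times the probability of moving to $s'$. We split into cases.
\begin{itemize}
\item If $s\in\gamestatestwo$, both the action factor and the transition factor agree by definition.
\item If $s\in\gamestatesone$ and $a\in\gamedisturbanceactions$, the action factor is $\disturbancestrategy(\rho)(a)=\mu(\rho)(a)$. The transition factor is $\gamedisturbancetransitions(s,a,s')$ in both models, because $M_\strategyone$ keeps the disturbance transitions unchanged.
\item If $s\in\gamestatesone$ and $a=\strategyone(s)$, the action factor is $\disturbancestrategy(\rho)(\bot)=\mu(\rho)(a)$, and the transition factor is $\gametransitions(s,a,s')$ in both models.
\item If $s\in\gamestatesone$ and $a$ is any other Player~1 action, the probability is $0$ on both sides.
\end{itemize}

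We expect the main obstacle to be bookkeeping rather than mathematics. It lies in making the combined semantics of $\strategyone,\strategytwo,\disturbancestrategy$ precise, and in handling the domains of the strategies so that $h$ is literally a bijection.
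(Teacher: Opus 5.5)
Your proof is correct and follows the same route as the paper's. Both define $h$ by splitting on whether the current state belongs to Player~2 or Player~1 (and in the latter case on whether $\delta$ disturbs), exhibit the inverse by restriction, and match path probabilities; yours is more careful, since the paper's case split on ``$\delta(w\cdot s)=\bot$'' implicitly treats $\sigma,\delta$ as pure, whereas your transfer of the mass $\delta(w,s)(\bot)$ onto $\pi(s)$ and your explicit induction on $|\rho|$ also cover randomized strategies.
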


This lemma implies that the pair of strategies $(\strategytwo, \delta)$ for $\game$ is equivalent to strategy $\mu$ for MDP $M_\strategyone$.

For a Player~1 strategy $\strategyone$, a Player~2 strategy $\strategytwo$, a disturbance strategy $\disturbancestrategy$ and an initial state $\mdpinitstates$, we can define the induced MC as follows: 
In Player~2 states, the action is selected by $\strategytwo$.
In Player~1 states, if $\disturbancestrategy(s) = \bot$ then we pick action $\pi(s)$, whereas we pick $\disturbancestrategy(s)$ otherwise.
We denote the probability measure in the induced MC as $\prob^{\pi, \sigma, \delta}_{\game, s_0}$.
For any random variable $X$ defined over infinite paths in $\game$, its expected value with respect to $\prob^{\pi, \sigma, \delta}_{\game, s_0}$ is  $\expected^{\pi, \sigma, \delta}_{\game, s_0}(X)$.

\subsection{From SGDs with Finite Disturbances to SGs}
\label{sec:unfolded-SG}
Here, we show that if the number of disturbances is bounded (by some $k\in \N$) on all runs, then we can transform an SGD $\game$ into an equivalent SG $\game^{\dagger k}$.
This is achieved by unfolding the state space of the SGD to account for the number of disturbances that have already occurred.
Specifically, we encode the remaining number of disturbances into the state space, i.e., the new states are of the form $(s, i)$, where $s$ is a state in $\game$ and $i \in \langle k + 1 \rangle$ represents the number of remaining disturbances. 
\cref{fig:gadget_unfold} illustrates this transformation. 

Intuitively, after Player~1 chooses an action in a state $s \in S_1$, a disturbance may or may not occur. 
To embed this extra step into a standard stochastic game, we create an intermediate state $(s,i,a)$, controlled by Player~2, that is reached whenever Player~1 plays action $a$ from $s$.
Player~2 then chooses either $\bot$ (no disturbance) or a disturbance action $d$. 
If $\bot$ is chosen, we proceed with the usual transition; if $d$ is chosen, the transition follows $\gamedisturbancetransitions$ and we decrement the disturbance counter~$i$. 
Hence, each disturbance possibility becomes a normal turn-based move of Player~2, making the disturbance mechanism explicit in the resulting unfolded game.
It can easily be shown that this transformation is sound by showing the equivalence of the two games.
\begin{added}
	\begin{restatable}{lemma}{theoremUnfoldedSGEquivalence}
		\label{thm:unfolded_sg_equivalence} 
		There exist bijective functions $f$ and $g$ where 
		$f$ maps triplets $(\pi, \sigma, \delta)$ of strategies in $\game$ to pairs $(\tilde \pi, \tilde \sigma)$ of strategies in $\game^{\dagger k}$, 
		and $g$ maps paths $\rho$ in $\game$ to paths $\tilde \rho$ in $\game^{\dagger k}$, such that
		$
		\forall \rho \in (S \times A)^* \times S, \quad \prob^{\pi, \sigma, \delta}_{\game, s_0}(\rho) = \prob^{\tilde \pi, \tilde \sigma}_{\game^{\dagger k}, (s_0, k)}(\tilde \rho)
		$,
		as long as $\delta$ disturbs at most $k$ times on each run.
	\end{restatable}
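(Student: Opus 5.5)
The plan is to make the unfolding $\game^{\dagger k}$ fully explicit, define $g$ on paths by induction, define $f$ on strategy triples by history translation, and then prove the probability equality by induction on the length of $\rho$.

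\textbf{The unfolded game.} Its states are of three kinds:
\begin{itemize}
\item $(s,i)$, with $s\in\gamestates$ and $i\in\langle k+1\rangle$, owned by whoever owns $s$;
\item $(s,i,a)$, with $s\in\gamestatesone$ and $a\in\gameavailactions(s)$, owned by Player~2;
\item a Player~2 state $(s,i,a,d)$ for each disturbance choice, or equivalently direct edges labelled $d$ from $(s,i,a)$.
\end{itemize}
Player~2 actions at $(s,i,a)$ are $\bot$, which moves with $\gametransitions(s,a)$ to states $(s',i)$, and $d\in\gameavaildisturbanceactions(s)$ when $i>0$, which moves with $\gamedisturbancetransitions(s,d)$ to states $(s',i-1)$.

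\textbf{The path map $g$.} Define $g$ inductively, keeping the counter $i$ equal to $k$ minus the number of disturbances so far in the prefix:
\begin{itemize}
\item a Player~2 step $s\,b\,s'$ maps to $(s,i)\,b\,(s',i)$;
\item a non-disturbed Player~1 step $s\,a\,s'$ maps to $(s,i)\,a\,(s,i,a)\,\bot\,(s',i)$.
\end{itemize}
A disturbed step $s\,d\,s'$ is the delicate case. The path in $\game$ does not record which action $a$ Player~1 intended. So $g$ is a bijection only if either:
\begin{itemize}
\item we restrict to paths carrying the intended action, e.g.\ writing disturbed steps as $s\,(a,d)\,s'$; or
\item Player~1 commits before Player~2 and the intended action is projected out.
\end{itemize}
I would therefore state, and rely on, the reading that $g$ is a bijection onto the set of paths of $\game^{\dagger k}$ that start in $(s_0,k)$, are consistent in the counter, and end in a non-intermediate state, with the intended action recorded. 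Under that reading, the disturbed step maps to $(s,i)\,a\,(s,i,a)\,d\,(s',i-1)$. Injectivity and surjectivity then follow by stripping intermediate states and counters, since the counter is determined by the prefix.

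\textbf{The strategy map $f$.} Given $(\pi,\sigma,\delta)$, set
\[
\tilde\pi(g(h)(s,i)) = \pi(h s), \qquad \tilde\sigma(g(h)(s,i)) = \sigma(h s)
\]
at original states. At intermediate states set $\tilde\sigma(g(h)(s,i)\,a\,(s,i,a)) = \delta(hs)$, where $\bot$ stays $\bot$, conditioned on $a$ if $\delta$ is allowed to see it. The hypothesis that $\delta$ disturbs at most $k$ times ensures that $\delta$ never puts mass on $d$ when $i=0$, so $\tilde\sigma$ is a legal strategy. The inverse is read off the same equations, which makes $f$ a bijection between triples with $\le k$ disturbances and pairs in $\game^{\dagger k}$. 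Distinct triples give distinct pairs because $g$ is onto all reachable histories.

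\textbf{The probability equality.} Induct on the length of $\rho$. The base case is the Dirac mass at $s_0$ versus $(s_0,k)$. For the step:
\begin{itemize}
\item A Player~2 step contributes $\sigma(hs)(b)\cdot\gametransitions(s,b,s')$ on both sides.
\item A Player~1 step in $\game$ contributes $\pi(hs)(a)\cdot\delta(hs)(\bot)\cdot\gametransitions(s,a,s')$ or $\pi(hs)(a)\cdot\delta(hs)(d)\cdot\gamedisturbancetransitions(s,d,s')$.
\item In $\game^{\dagger k}$ the same step is two moves, $\tilde\pi$ then $\tilde\sigma$, whose probability product is identical by construction.
\end{itemize}

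The main obstacle is the precise semantics of the disturbed step: whether $\delta$ observes Player~1's chosen action, and whether paths record it. Getting this convention consistent is exactly what makes $f$ and $g$ bijections rather than merely surjections. I would fix the convention first, matching the gadget in \cref{fig:gadget_unfold}, and then the remaining induction is routine.
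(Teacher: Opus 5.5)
Your proposal is correct and follows essentially the same route as the paper: $g$ inserts the intermediate Player~2 state $(s,i,a)$ and tracks the disturbance counter, $f$ transfers strategies along $g$ (with $\tilde\sigma$ combining $\sigma$ at original states and $\delta$ at intermediate states), and the path probabilities agree step by step. You make that last step an explicit induction where the paper only says ``by construction''; the one real difference is the disturbed step, where the paper fills in the intended action as Player~1's choice under $\pi$ (so its $g$ implicitly depends on $\pi$ being pure), whereas you record the intended action in the path itself, which is the cleaner way to make $f$ and $g$ genuine bijections.
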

\end{added}

\begin{figure}[t]
	\scriptsize
	\centering
	\begin{tikzpicture}[scale=0.6]
		
		\node[state] at (0,0)     (0)   {$s$};
		\node[] at (-1,1) (1) {};
		\node[] at (-1,-1) (2) {};
		\node[] at (3,1) (3) {$s'$};
		\node[] at (3,-1) (4) {$s''$};
		\node[smallcircle] at (1.5,-0.5) (circle1) {};
		\node[smallcircle] at (1.5,0.5) (circle2) {};

		\node[state] at (5.4,0)     (5)   {$(s, i)$};
		\node[] at (4.5,1) (6) {};
		\node[] at (4.5,-1) (7) {};
		\node[squarednode] at (8.0,0)     (9)   {$(s, i, a)$};
		\node[] at (11.7,-1) (10) {$(s'', i-1)$};
		\node[] at (11.2,1) (11) {$(s', i)$};
		\node[smallcircle] at (9.7,-0.5) (circle3) {};
		\node[smallcircle] at (9.7,0.5) (circle4) {};

		\node[] at (3.9,0) {$\rightsquigarrow$};
		
		\draw
		(1) edge[->, dotted] node {} (0)
		(2) edge[->, dotted] node {} (0)
		(0) edge[-, above] node {$a$} (circle2)
		(0) edge[-, dotted, below] node {$d$} (circle1)
		(circle2) edge[->, above] node {$p_1$} (3)
		(circle1) edge[->, dotted, below] node {$p_2$} (4)

		(6) edge[->, dotted] node {} (5)
		(7) edge[->, dotted] node {} (5)
		(9) edge[-, below] node {$d$} (circle3)
		(circle3) edge[->, below] node {$p_2$} (10)
		(9) edge[-, above] node {$\bot$} (circle4)
		(circle4) edge[->, above] node {$p_1$} (11)
		(5) edge[->, above, pos=0.4] node {$a, 1$} (9)
		
		;
	\end{tikzpicture}
	\caption{A gadget in the unfolded stochastic game.}
	\label{fig:gadget_unfold}
\end{figure}
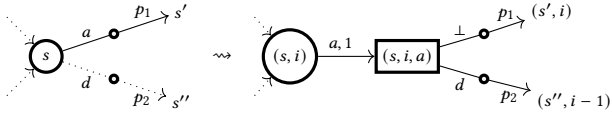

\section{Expected Breaking Point}
\label{sec:expected-breaking-point}

For a given Player~1 strategy $\strategyone$, we define the set of pairs of strategies that break $\strategyone$ as
$U_\strategyone:= \{(\strategytwo, \disturbancestrategy ) \mid \game^{\strategyone,\strategytwo,\disturbancestrategy}_{s_0}\models \neg\phi\}$.
We define the \emph{expected transient breaking point} of $\strategyone$ as
\[\expectedbreakpoint^T_{\game,\objective}(\strategyone):=
\inf_{(\strategytwo,\disturbancestrategy) \in U_\strategyone} \expected_{\game,s_0}^{\strategyone,\strategytwo,\disturbancestrategy}\big(\disturbances(\rho) \big).
\]
When set $U_\strategyone$ is empty, which means that the strategy cannot be broken, we refer to it as $\natural$. When there exist no finite values in $U_\strategyone$, we refer to it as $\omega$.
If breaking the strategy requires infinitely many disturbances on average, we can still ask how frequently disturbances are required to break the strategy.
For such cases, the \emph{expected frequency breaking point} is defined as
\[\expectedbreakpoint^F_{\game,\objective}(\strategyone):=
\inf_{(\strategytwo,\disturbancestrategy) \in U_\strategyone} \expected_{\game,s_0}^{\strategyone,\strategytwo,\disturbancestrategy}\big(\longrundisturbances(\rho) \big).\]
When $U_\strategyone$ is empty, we also denote it as $\natural$.
\begin{definition}
	Given an SGD $\game$ and an objective $\phi$, the \emph{expected breaking point} of a Player~1 strategy $\pi$ is given by the pair 
	\[\expectedbreakpoint_{\game,\objective}(\strategyone):=\big(\expectedbreakpoint^T_{\game,\objective}(\strategyone), \expectedbreakpoint^F_{\game,\objective}(\strategyone)\big).\]
\end{definition}

Because of the observation in \cref{rem:disturbance_cost_equivalence} and \cref{lem:induced_mdp_equivalence}, we get the following lemma.
\begin{restatable}{lemma}{lemmaExpectedBPStrategy}
	\label{thm:expectedBPStrategy}
	Given an SGD, an objective $\phi$, and a memoryless Player~1 strategy $\strategyone$, it holds that
	$\expectedbreakpoint^T_{\game,\objective}(\strategyone) = \inf_{\mu\in U_\pi} \mathsf{TR}(M_\pi^\mu, s_0, C)$, and $\expectedbreakpoint^F_{\game,\objective}(\strategyone) = \inf_{\mu\in U_\pi} \mathsf{MP}(M_\strategyone^\mu,s_0, C)$.
\end{restatable}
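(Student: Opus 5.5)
The plan is to push the infimum defining $\expectedbreakpoint^T_{\game,\objective}(\strategyone)$ through the bijection $h$ of \cref{lem:induced_mdp_equivalence}. The integrand is then rewritten via \cref{rem:disturbance_cost_equivalence}, which identifies disturbance counts with total cost. The frequency case follows by the same argument with $\mathsf{MP}$ in place of $\mathsf{TR}$. Here $U_\pi$ on the right-hand side is read as the image $h(U_\pi)$, i.e.\ the set of MDP strategies $\mu$ with $M_\pi^\mu \models \neg\phi$.

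\textbf{Step 1: lift the path-level identity to the full path measure.} \cref{lem:induced_mdp_equivalence} gives equality of probabilities on all finite paths $\rho \in (S\times A)^*\times S$. These are the cylinder sets generating the $\sigma$-algebra on infinite runs, and both sides are probability measures. By Carathéodory extension (uniqueness of the extension from the generating $\pi$-system of cylinders), $\prob^{\pi,\sigma,\delta}_{\game,s_0}$ and $\prob^{\mu}_{M_\pi,s_0}$ therefore coincide on all measurable sets of infinite runs, where $\mu = h(\sigma,\delta)$. Note that the runs of $\game$ under $\pi$ and the runs of $M_\pi$ are the same syntactic objects: the Player~1 actions are those of $\pi$ or disturbance actions.

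\textbf{Step 2: transfer the breaking sets and the expectations.} The objective $\phi$ (safety or reachability) is a threshold on the probability of a measurable set of runs ($\square\neg B$ or $\lozenge G$). By Step~1, $(\sigma,\delta)\in U_\pi$ if and only if $h(\sigma,\delta)$ violates $\phi$ in $M_\pi$. Since $h$ is a bijection, $h$ restricts to a bijection from $U_\pi$ onto the set of breaking strategies $\mu$.

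For the expectations, the cost $C$ assigns $1$ exactly to disturbance actions in both models. By \cref{rem:disturbance_cost_equivalence}, $\disturbances(\rho)=C_\rho$ and $\longrundisturbances(\rho)=\mathsf{MP}(\rho)$ hold pathwise. These are nonnegative measurable functions (a countable sum, respectively a $\liminf$ of measurable functions). Equal measures therefore give equal expectations, possibly $+\infty$:
\[
\expected^{\pi,\sigma,\delta}_{\game,s_0}(\disturbances)=\expected^{\mu}_{M_\pi,s_0}(C_\rho)=\mathsf{TR}(M^\mu_\pi,s_0,C),
\]
where the last equality is by monotone convergence for $\sum_k \expected(C_k)$.

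\textbf{Step 3: take infima and handle the frequency case.} The infimum over $U_\pi$ is re-indexed by the bijection, which yields the transient claim; the conventions $\natural$ and $\omega$ transfer verbatim. The frequency claim needs care, and I expect it to be the main obstacle. The paper defines $\mathsf{MP}(M,s_0,C)$ as $\liminf_k$ of expected $k$-step averages, whereas $\expectedbreakpoint^F$ uses the expectation of the pathwise $\liminf$. Fatou's lemma only gives $\expected(\liminf)\le\liminf\expected$.

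To close this gap, I would first try to show that the infimum over breaking strategies can be restricted to strategies $\mu$ under which both quantities agree. Candidates are memoryless or finite-memory strategies in the finite MDP $M_\pi$, for which the induced finite Markov chain makes the pathwise average converge almost surely. The two expected values then coincide by dominated convergence, since costs are bounded by $1$. Failing that, I would interpret $\mathsf{MP}$ in the statement as the expected pathwise mean payoff consistent with \cref{rem:disturbance_cost_equivalence}, in which case Step~2 applies directly.
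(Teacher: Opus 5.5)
Your argument follows the paper's route: the pathwise identity of \cref{rem:disturbance_cost_equivalence}, transferred through the bijection of \cref{lem:induced_mdp_equivalence}, and then infima. The transient half is complete. Your cylinder-extension and monotone-convergence details are more careful than the paper's proof, which just asserts that equal pathwise costs give equal expectations.

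In the frequency half you have spotted something the paper's proof also glosses over. It says ``similar'' and so silently identifies $\expected^\mu[\mathsf{MP}(\rho)]$ with $\mathsf{MP}(M_\pi^\mu,s_0,C)=\liminf_k v_k(s_0)$, which fails for general strategies. But your proposal does not close this hole. The sentence ``the infimum over breaking strategies can be restricted to strategies under which both quantities agree'' is exactly the nontrivial content. The fallback of reinterpreting $\mathsf{MP}$ changes the statement rather than proving it.

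Write $\alpha(\mu)=\expected^\mu_{M_\pi,s_0}[\mathsf{MP}(\rho)]$, which equals the expected disturbance frequency by your Steps~1--2, and $\beta(\mu)=\mathsf{MP}(M_\pi^\mu,s_0,C)$. Fatou gives $\alpha(\mu)\le\beta(\mu)$ for every $\mu$, and, as you note, $\alpha=\beta$ on finite-memory strategies. Hence $\inf_{U_\pi}\alpha\le\inf_{U_\pi}\beta\le\inf_{U_\pi\cap\mathrm{FM}}\alpha$. What is missing is $\inf_{U_\pi\cap\mathrm{FM}}\alpha\le\inf_{U_\pi}\alpha$: an infinite-memory breaking strategy must not be able to push the expected \emph{pathwise} frequency below what finite memory achieves.

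The missing idea is a pathwise lower bound. Almost surely the run eventually stays forever in some MEC $U$ of $M_\pi$; the sinks in $G$, resp.\ $B$, are MECs of cost~$0$. On that event, almost surely $\mathsf{MP}(\rho)\ge v_U$, the minimal mean payoff achievable inside $U$. The reason is that a martingale argument shows every limit point of the empirical state-action frequencies satisfies the flow equations of $U$, and the cost minimized over normalized flows of $U$ is $v_U$. Thus $\alpha(\mu)\ge\sum_U v_U\,y_U$ with $y_U=\prob^\mu(\text{the run eventually stays in }U)$.

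Conversely, the vector $(y_U)_U$ is realized by a finite-memory strategy $\mu'$. It plays memoryless randomized in the MEC quotient, then a memoryless $v_U$-optimal strategy inside $U$; this is what the weighted-quotient construction encodes. This $\mu'$ still violates $\phi$, since $\prob(\lozenge G)$, resp.\ $\prob(\lozenge B)$, is a sum of entries of $(y_U)_U$. Moreover $\alpha(\mu')=\beta(\mu')=\sum_U v_U\,y_U\le\alpha(\mu)$. With this step added, your argument proves the frequency claim as stated.
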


\subsection{Computing the Expected Breaking Point}
\label{sec:expected_breaking_point_strategy}
Next, we describe the algorithm for expected transient and frequency breaking points for both safety and reachability objectives.
We compute the expected breaking point for memoryless strategies here\footnote{It can easily be extended to finite memory strategies using standard techniques, e.g., encoding the memory in state space by taking product of the game with a finite automaton representing the strategy.}.
We use the induced MDP $M_\strategyone$ and the cost function $C$ defined previously to compute it.

\paragraph{Safety}
To violate the safety objective, we need to compute the minimum expected cost while reaching states in $B$ with probability $\geq 1-p$.  
This is a variant of the standard SSP for which the solution can be found using the LP described in~\cite{SSP-tradeoff}.
If there is no solution to the linear program, it is not possible to reach $B$ with the required probability, which implies that the expected breaking point does not exist.
If a solution is found, the frequency breaking point is 0 and the transient breaking point is the solution to the LP.
Note that the strategy generated by the LP can be mixed.

\paragraph{Reachability}
\label{sec:expected_reachability}
To violate the reachability objective, we need to ensure the avoidance of states in the set $G$ with probability $\geq 1-p$ while minimizing cost. 
We proceed as follows.
Define $B$ to be the set of MECs in $M_\pi\setminus G$ where Player~2 can ensure to remain indefinitely without invoking any disturbances. 
These MECs can be identified by excluding any MEC of $M_\pi\setminus G$ that have exits via actions suggested by strategy $\strategyone$,
\begin{align}
	\label{eq:states_B}
	B:= &\{U\in \mathsf{MEC}(M_\strategyone \setminus G) \mid \forall s'\in U\cap \gamestatesone: \notag \\ 
	&\post[M_\strategyone](s',\strategyone(s')) \subseteq U\}.
\end{align}
We also define the set $R$ of MECs in which Player~2 can remain indefinitely but require disturbance actions to do so\footnote{We slightly abuse notation by using $B$ and $R$ to also refer to the union of these MECs.},
\begin{align}
	\label{eq:states_R}
	R:= &\{U\in \mathsf{MEC}(M_\strategyone\setminus G) \mid \forall s'\in U\cap \gamestatesone: \notag \\ 
	&\post[M_\strategyone](s',\strategyone(s')) \nsubseteq U \implies \gameavaildisturbanceactions(s)\neq \emptyset \}.
\end{align}
The probabilities of reaching $B$ and $R\cup B$ can be calculated through standard model checking algorithms.
Then, we identify 3 cases:

\textbf{Case 1: Probability to reach $B$ is $\geq 1-p$.}
The expected cost to reach $B$ is finite here, making the transient breaking point finite and the frequency breaking point 0.
Therefore, the transient breaking point is computed using an SSP formulation with respect to the reachability probability. 

\textbf{Case 2: Probability to reach $R\cup B$ is $< 1-p$.}
In this case, it is not possible to break the strategy as the goal states cannot be avoided with the required probability. 
Both the transient and frequency breaking points are $\natural$.

\textbf{Case 3: Probability to reach $B$ is $<1-p$ but $R\cup B$ is $\geq 1-p$.}
This is the intermediate case, where finitely many disturbances are not enough to break the strategy.
Intuitively, Player~2 cannot force the play to reach states (with the required probability) from which no more disturbances are needed to break the Player~1 strategy.
Therefore, the transient breaking point is $\omega$, and the expected frequency breaking point is computed as follows.
For each MEC in $R$, the minimum expected mean payoff of staying inside the MEC is computed. 
This mean payoff represents the frequency of disturbances required to stay within the MEC.
We then construct the weighted MEC quotient of $M_{\strategyone}$ (similar to \cite{ashok17meanpayoff}), by collapsing each MEC in $R$ to a single abstract state. 
A new cost function is defined for this quotient MDP.
Each collapsed state is augmented with an outgoing transition to a fresh terminal state $s_+$, where the new cost of this transition corresponds to the mean payoff of the original MEC. 
The new cost assigned to all other transitions in the MDP is zero.
Finally, we solve an SSP problem with the new cost on this modified MDP to compute the minimum expected cost of reaching $B\cup \{s_+\}$.
The solution to this SSP is precisely the expected frequency breaking point of $\strategyone$.

\begin{restatable}{theorem}{theoremExpectedBPStrategyReach}
	Given an SGD $\game$, objective $\phi$, Player~1 strategy $\strategyone$, and pair of values $(t,f)$, deciding whether $\expectedbreakpoint_{\game,\objective}(\strategyone)\geq (t,f)$ is in $\polytime$.
\end{restatable}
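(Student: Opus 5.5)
The plan is to reduce everything to polynomial-size computations on the induced MDP $M_\strategyone$. By \cref{thm:expectedBPStrategy}, both components of $\expectedbreakpoint_{\game,\objective}(\strategyone)$ are infima of total cost or mean payoff over those MDP strategies $\mu$ in $M_\strategyone$ that break $\phi$. $M_\strategyone$ is built from $\game$ and $\strategyone$ in linear time, since we only delete non-chosen Player~1 actions. We then compute the pair exactly, as a rational number or one of the symbols $\omega,\natural$, and compare it with $(t,f)$. The comparison is componentwise, with the conventions $\natural$ above all values and $\omega$ above all finite transient values.

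For \emph{safety}, breaking means reaching $B$ with probability $\geq 1-p$ while minimising expected cost. This is the constrained SSP of~\cite{SSP-tradeoff}, which is a single LP of polynomial size in $M_\strategyone$, hence solvable in polynomial time with a rational optimum. I will argue two facts about it. First, infeasibility is equivalent to $U_\strategyone=\emptyset$: the maximal reachability probability of $B$ is itself an LP, so feasibility is exactly $\max\prob(\lozenge B)\geq 1-p$. Second, when it is feasible the infimum is attained by a (possibly mixed) strategy with finite cost. After reaching $B$ the run sits in a sink, so the mean payoff of disturbances is $0$. The frequency component is therefore $0$, and the LP value is the transient component.

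For \emph{reachability}, the steps follow the case analysis above. First I compute the MEC decomposition of $M_\strategyone\setminus G$, which is polynomial by standard algorithms, and classify MECs into $B$ and $R$ by checking the conditions \eqref{eq:states_B} and \eqref{eq:states_R} locally. Next I compute the maximal probabilities of reaching $B$ and $R\cup B$ by LP. These decide which of the three cases holds.
\begin{itemize}
\item In Case~1, the transient value is a constrained SSP towards $B$, again an LP, and the frequency value is $0$.
\item In Case~2 both values are $\natural$.
\item In Case~3, the transient value is $\omega$. For the frequency value I compute, for each MEC in $R$, the minimal mean payoff of staying inside it. This is an LP on the sub-MDP restricted to the MEC. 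I then build the weighted quotient and solve a final constrained SSP, requiring that $B\cup\{s_+\}$ be reached with probability $\geq 1-p$.
\end{itemize}
Each step is a polynomial number of LPs of polynomial size, so the whole procedure runs in polynomial time.

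The main obstacle is correctness rather than complexity, in two places. The first is the claim in Case~3 that the frequency breaking point equals the quotient SSP value. This requires showing that any breaking strategy, with probability mass at least $1-p$, ends up staying forever in $B\cup R$ while avoiding $G$. Its expected mean payoff is then at least the probability-weighted MEC mean payoffs, since the mean payoff is prefix-independent and is determined by the end component in which the run eventually stays. Conversely, combining an optimal quotient strategy with an optimal memoryless in-MEC mean-payoff strategy achieves that value. The second is that the transient value is genuinely $\omega$ there: any strategy breaking $\phi$ must stay forever in $R$ with positive probability, so it incurs infinitely many disturbances on a set of positive measure. I would prove these via the standard fact that almost every run eventually stays in an end component, applied to $M_\strategyone^\mu$.
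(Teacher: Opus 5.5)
\paragraph{Shared flaw in the definition of $B$.}
Your proposal follows the paper's proof essentially step for step: the induced MDP, the constrained-SSP LP for safety, and for reachability the $B$/$R$ classification via \eqref{eq:states_B}--\eqref{eq:states_R}, the three cases, and the weighted quotient with a final SSP. The complexity accounting is fine. However, the step you flag as your second obstacle is false, and it inherits a flaw in the definition of $B$ that the paper's proof also relies on.

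\eqref{eq:states_B} tests a condition on \emph{maximal} end components of $M_\strategyone\setminus G$. A MEC can fail it at one Player~1 state and still contain a smaller end component that uses no disturbance action. Here is a counterexample.
\begin{itemize}
\item Take Player~1 states $s_1$ (initial) and $s_2$.
\item $\strategyone(s_1)$ leads to $G$, and a disturbance at $s_1$ leads to $s_2$.
\item $\strategyone(s_2)$ is a self-loop, and a disturbance at $s_2$ leads back to $s_1$.
\end{itemize}
Then $\{s_1,s_2\}$ is the unique MEC of $M_\strategyone\setminus G$, and it violates \eqref{eq:states_B} at $s_1$. So $B=\emptyset$, and for every $p<1$ your procedure lands in Case~3 and reports transient value $\omega$. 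Yet disturbing at $s_1$ with probability $1-p$ and never again avoids $G$ with probability $1-p$, so $\expectedbreakpoint^T_{\game,\objective}(\strategyone)=1-p$ is finite. The run does stay forever in the $R$-MEC, but only in its disturbance-free part $\{s_2\}$. Hence ``staying forever in $R$ forces infinitely many disturbances'' fails.

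The same defect corrupts Case~1. Add a Player~1 state $s_3$ with a $\strategyone$-self-loop and no disturbances, reachable from $s_2$ by a second disturbance. Now $B=\{s_3\}$ and Case~1 applies, but the SSP towards $B$ returns $2(1-p)$ instead of the true $1-p$.

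\paragraph{Repair and a smaller point.}
Let $B$ be the union of all end components of $M_\strategyone\setminus G$ that contain no disturbance action. These are the MECs of the sub-MDP obtained by additionally deleting every disturbance action, which is still polynomial. With this $B$ your argument goes through:
\begin{itemize}
\item A breaking strategy with finite expected cost has almost surely finitely many disturbances.
\item So for almost every run avoiding $G$, the set of state-action pairs taken infinitely often is a disturbance-free end component, and hence lies in $B$.
\item Therefore $\prob(\lozenge B)\ge 1-p$. This gives both the SSP characterisation in Case~1 and the value $\omega$ in Case~3.
\end{itemize}
The quotient step is unaffected, since every MEC containing such a component has minimal mean payoff $0$.

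Separately, the mean payoff of a run is not ``determined by'' the end component it eventually stays in. What you actually need is a lower bound: almost surely, the run's mean payoff is at least the minimal mean payoff of the MEC containing its infinitely-often set. This holds, for example, because the limit points of the empirical state-action frequencies are flows on that end component.
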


\subsection{Optimal Strategy: Expected Breaking Point}
\label{sec:expected-synthesis}

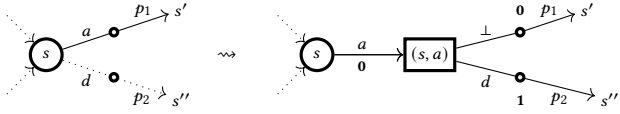
\begin{figure}[t]
	\scriptsize
	\centering

	\begin{tikzpicture}[scale=0.6]
		
		\node[state] at (0,0)     (0)   {$s$};
		\node[] at (-1,1) (1) {};
		\node[] at (-1,-1) (2) {};
		\node[] at (3,1) (3) {$s'$};
		\node[] at (3,-1) (4) {$s''$};
		\node[smallcircle] at (1.5,-0.5) (circle1) {};
		\node[smallcircle] at (1.5,0.5) (circle2) {};

		\node[state] at (6,0)     (5)   {$s$};
		\node[] at (5,1) (6) {};
		\node[] at (5,-1) (7) {};
		\node[squarednode] at (8.5,0)     (9)   {$(s, a)$};
		\node[] at (12.5,-1) (10) {$s''$};
		\node[] at (12,1) (11) {$s'$};
		\node[smallcircle] at (10.5,-0.5) (circle3) {};
		\node[smallcircle] at (10.5,0.5) (circle4) {};

		\node[] at (4,0) {$\rightsquigarrow$};

		\node at (10.5,1) {\textbf{0}};
		\node at (10.5,-1) {\textbf{1}};
		
		\draw
		(1) edge[->, dotted] node {} (0)
		(2) edge[->, dotted] node {} (0)
		(0) edge[-, above] node {$a$} (circle2)
		(0) edge[-, dotted, below] node {$d$} (circle1)
		(circle2) edge[->, above] node {$p_1$} (3)
		(circle1) edge[->, dotted, below] node {$p_2$} (4)
		
		(6) edge[->, dotted] node {} (5)
		(7) edge[->, dotted] node {} (5)
		(9) edge[-, below] node {$d$} (circle3)
		(circle3) edge[->, below] node {$p_2$} (10)
		(9) edge[-, above] node {$\bot$} (circle4)
		(circle4) edge[->, above] node {$p_1$} (11)
		(5) edge[->, above, pos=0.4] node {$a$} (9)

		(5) edge[->, below, pos=0.4] node {\textbf{0}} (9)
		;
	\end{tikzpicture}

	\caption{Gadget for converting an SGD into an SG for the Expected Breaking Point.
	}
	\label{fig:gadget_expected_breaking_point}
\end{figure}

The optimally resilient strategy is defined as the strategy that achieves the breaking point 

\[
\expectedbreakpoint_{\game,\objective} = \big(\max_\strategyone \expectedbreakpoint^T_{\game,\objective}(\strategyone), \max_\strategyone \expectedbreakpoint^F_{\game,\objective}(\strategyone)\big).
\]
To compute this, we construct a transformed stochastic game $\newgame$ from the given SGD $\game$ using the gadget in \cref{fig:gadget_expected_breaking_point}. 
This transformation follows the unfolded SG structure introduced in \cref{sec:unfolded-SG}, but omits the use of duplicated copies.
Specifically, for each state $s \in \gamestatesone$ and action $a \in \gameavailactions(s)$, we introduce an intermediate Player~2 state $(s, a)$, reached deterministically when Player~1 selects action $a$ in state $s$. 
From $(s, a)$, Player~2 chooses either $\bot$, that follows the distribution $T(s, a)$ simulating the undisturbed execution of $a$, or a disturbance action $d$, which follows the distribution $T^D(s, d)$, simulating the effect of a disturbance. 
To capture the cost, disturbance actions are assigned cost 1, and all others cost 0, as indicated by the boldface labels in \cref{fig:gadget_expected_breaking_point}. 
The minimum cost incurred while violating $\phi$ in $\newgame$ corresponds to the breaking point.
\begin{new}
Thus, optimally resilient Player~1 strategies are generally memoryless and randomized for the expected breaking point.
Optimal disturbance strategies are also memoryless and randomized; for instance, see \Cref{fig:worst_vs_average_breaking_point}.
\end{new}

\begin{restatable}{lemma}{theoremExpectedBPSynthesis}
	\label{thm:expectedBPSynthesis}
	Given an SGD $\game$, objective $\phi$ and memoryless Player~1 strategy $\strategyone$, it holds that
	$\expectedbreakpoint^T_{\game,\objective}(\pi) = \inf_{\mu\in U_\pi} \mathsf{TR}(\newgame^{\strategyone, \mu}, s_0, C)$ and $\expectedbreakpoint^F_{\game,\objective}(\pi) = \inf_{\mu\in U_\pi} \mathsf{MP}(\newgame^{\strategyone, \mu},s_0, C)$.
\end{restatable}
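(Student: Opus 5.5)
Fixing the memoryless Player~1 strategy $\strategyone$ in $\newgame$ yields an MDP $\newgame^{\strategyone}$ in which only Player~2 chooses, and this MDP closely resembles the induced MDP $M_\strategyone$ of \cref{sec:induced-MDP}. We reduce the statement to \cref{thm:expectedBPStrategy}, which already gives $\expectedbreakpoint^T_{\game,\objective}(\strategyone) = \inf_{\mu\in U_\pi} \mathsf{TR}(M_\pi^\mu, s_0, C)$ and the analogous mean-payoff identity. It therefore suffices to exhibit a correspondence between strategies $\mu$ of $M_\strategyone$ and strategies $\tilde\mu$ of $\newgame^{\strategyone}$ that preserves three things: the probability of the objective, the expected total cost, and the expected mean payoff. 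Membership in $U_\pi$ is then preserved, and the two infima coincide.

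\textbf{Step 1: the correspondence.} In $\newgame^{\strategyone}$, each Player~1 state $s$ moves deterministically, with cost $0$, to the intermediate state $(s,\strategyone(s))$. There Player~2 picks $\bot$ (cost $0$, distribution $T(s,\strategyone(s))$) or some $d$ (cost $1$, distribution $T^D(s,d)$). In $M_\strategyone$, the state $s$ directly offers the action $\strategyone(s)$ (cost $0$) or the same $d$ (cost $1$) with the same distributions.

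Define a path map $g$ that inserts $(s,\strategyone(s))$ after each Player~1 state, relabels $\strategyone(s)$ as $\bot$, and keeps Player~2 states untouched. This $g$ is a bijection onto the paths of $\newgame^{\strategyone}$ that are consistent with $\strategyone$.

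Define $\tilde\mu(g(h)\cdot(s,\strategyone(s)))$ to be $\mu(h)$, with $\strategyone(s)$ renamed to $\bot$. At Player~2 states of $\game$, set $\tilde\mu$ equal to $\mu$. This is a bijection of strategy spaces in the style of \cref{lem:induced_mdp_equivalence} and \cref{thm:unfolded_sg_equivalence}. Induction on path length then gives $\prob^{\tilde\mu}_{\newgame^\strategyone,s_0}(g(\rho)) = \prob^{\mu}_{M_\strategyone,s_0}(\rho)$, because each original step factors as the probability-$1$ move to the intermediate state followed by the identical choice and distribution.

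\textbf{Step 2: objectives and total cost.} Reachability of $G$ and avoidance of $B$ depend only on the states of $\game$ visited. The inserted states lie outside $G\cup B$, since those are sinks. Hence the objective probabilities agree, and $\mu\in U_\pi$ if and only if $\tilde\mu\in U_\pi$.

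The cost of $g(\rho)$ equals the cost of $\rho$ path-wise: each disturbance contributes exactly one unit of cost in both models, and all other moves cost $0$. Since the path measures are pushed forward along $g$, the expected total costs $\mathsf{TR}$ are equal.

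\textbf{Step 3: mean payoff (main obstacle).} The gadget inserts extra steps, so the time scale changes. A mean payoff computed over $g(\rho)$ would be diluted relative to $\rho$, roughly by a factor between $1/2$ and $1$, depending on the fraction of Player~1 states.

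I would handle this by reading the cost in $\newgame$ as counted per original step. Concretely, $\mathsf{MP}$ in $\newgame$ is evaluated with respect to the projection $g^{-1}$, which is how the construction inherits \cref{rem:disturbance_cost_equivalence}. Equivalently, one can merge each intermediate state into its predecessor's step, so that $\liminf_k \frac1{k+1}\sum C$ ranges over original steps and is identical path-wise.

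If instead the literal step count in $\newgame$ is meant, I would first check whether the definitions of $U_\pi$ and the frequency breaking point are insensitive to this. That fails in general, because the fraction of Player~1 steps can vary along a path. So the first thing I would try is to fix the per-original-step convention; with it in place, the mean-payoff identity follows exactly as the total-cost one did. Combining Steps 1 to 3 with \cref{thm:expectedBPStrategy} completes the proof.
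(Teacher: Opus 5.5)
Your argument for the transient part is essentially the paper's. The paper derives this lemma ``immediately'' from the path/strategy bijection of \cref{lem:sg_equivalence}, which inserts $(s,a)$ after each Player~1 state. You factor the same bijection through $M_\strategyone$ and \cref{thm:expectedBPStrategy}. Either way, $g$ preserves path probabilities, the objective, and the pathwise number of disturbances, so the two $\mathsf{TR}$ infima coincide.

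For the frequency part, your Step~3 concern is justified, and the paper's one-line derivation passes over it. The map $g$ adds one step per Player~1 move, so pathwise mean payoff is not preserved. Under the literal step count of $\tilde{\mathcal{G}}$, the stated identity is false.

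A concrete counterexample: take a single Player~1 state $s_0$ with $T(s_0,a)=G$, a disturbance self-loop $d$, and $\phi=\prob(\lozenge G)>0$. Every breaking strategy must disturb forever, so $\expectedbreakpoint^F_{\game,\objective}(\strategyone)=1$. In $\tilde{\mathcal{G}}$, however, the costs along the breaking path alternate $0,1,0,1,\dots$, so the infimum of $\mathsf{MP}$ is $1/2$.

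The discrepancy is also not a uniform rescaling. A cycle of three Player~1 states with one forced disturbance has frequency $1/3$ in $\game$ but $1/6$ in $\tilde{\mathcal{G}}$. A cycle of one disturbed Player~1 state and three Player~2 states has $1/4$ and $1/5$ respectively. So the disturber's preferred MEC flips between the two models, which confirms your remark that the dilution factor varies along paths.

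Your per-original-step convention is therefore required, not optional. Equivalent formulations are a ratio objective in which intermediate states carry zero time weight, or contracting each gadget into a single step. Adopt it outright as a correction of the statement rather than as ``the first thing I would try''. The same correction also applies to the MEC mean payoffs that the paper's synthesis algorithms compute in $\tilde{\mathcal{G}}$. With that convention fixed, your proof is complete and more careful than the paper's.
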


The algorithm's structure parallels the previous algorithm, with the main difference being that it solves an SG rather than an MDP.

\paragraph{Safety}
For safety objectives, first the maximum probability to reach $B$ is computed via a QP with Player~2 as the maximizer. 
If said probability is $\geq 1-p$, the problem reduces to solving an SSP problem in the transformed game $\newgame$, where the goal is to reach the target set $B$ with $\geq 1-p$ probability.
The SSP problem can be solved using VI or PI
\footnote{The current algorithms for SSP using VI and PI only give approximate solutions with convergence guarantees in the limit. We do not get the exact breaking point in this case, however, if exact solvers for SSP are found in the future, our approach can use them off-the-shelf.}
~\cite{patek1999stochastic}.
The SSP problem usually assumes that the target sink (here it's $B$) is reached with probability 1 to ensure that the expected cost is finite. 
For our case, since there are zero cost paths always available, the same algorithm can be used.
Here as soon as the play reaches $G$, the play can be stopped, and the cost is 0.
The solution yields the transient breaking point and the frequency breaking point is 0.
If the probability to reach $B$ is $< 1-p$, it indicates that the strategy is not breakable.

\paragraph{Reachability}
For reachability objectives, the algorithm first identifies the set of states in $\newgame$ from which Player~2 can ensure that the target set $G$ is avoided with probability 1. 
This is done via a backward fixed-point computation, starting with the set $E = \gamestates \setminus G$ and iteratively removing states according to the following rules until convergence:
\begin{itemize}
	\item A Player~1 state is removed if there exists an action leading outside $E$ with positive probability.
	\item A Player~2 state is removed if all available actions lead outside $E$ with positive probability.
\end{itemize}

Next, we compute the maximum probability of reaching $E$ in $\newgame$ using QP.
If the reachability probability is $< 1 - p$, the strategy cannot be broken, and the algorithm returns $(\natural, \natural)$.
If the reachability probability is $\geq 1 - p$, we proceed to analyze the MECs contained in $E$.
For each MEC, we compute its minimum mean payoff, which corresponds to the minimal average disturbance cost required to remain within that component indefinitely~\cite{sg-mean-payoff}. 
Let $B \subseteq E$ denote the set of MECs with zero mean payoff.

If $B$ is reachable with probability $\geq 1 - p$, the transient breaking point is finite and the frequency breaking point is $0$, indicating that the strategy can be broken with finitely many disturbances.
This is computed by solving an SSP problem in $\newgame$ with the target set $B$ and probability threshold $1 - p$.
In the other case, the transient breaking point is $\omega$, and the frequency breaking point requires the use of weighted quotient stochastic game that is constructed by collapsing each MEC into a single state. 
A new terminal state $s_+$ is introduced and an outgoing transition from all collapsed states to $s_+$ is added, with the cost of this transition equal to the mean payoff of the corresponding MEC.
The cost of all other transitions is set to zero.
Finally, we solve an SSP problem in this quotient SG to compute the frequency breaking point.

\begin{restatable}{theorem}{theoremExpectedBPSReach}
	Given an SGD $\game$, objective $\phi$, and pair of values $(t,f)$, deciding whether $\expectedbreakpoint_{\game,\objective}\geq (t,f)$ is as hard as solving SSP for SGs.
\end{restatable}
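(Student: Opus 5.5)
The statement asserts that deciding $\expectedbreakpoint_{\game,\objective}\geq (t,f)$ is equivalent in complexity to solving SSP for SGs. So we need two directions.

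\textbf{Upper bound.} This is a reduction to SSP for SGs. We take the algorithm of \cref{sec:expected-synthesis} and argue that every step other than the SSP calls is cheap. Constructing $\newgame$ with the gadget of \cref{fig:gadget_expected_breaking_point} is polynomial. By \cref{thm:expectedBPSynthesis}, optimising over Player~1 strategies in $\newgame$ yields exactly $\max_\pi \expectedbreakpoint^T$ and $\max_\pi \expectedbreakpoint^F$. For this we need memoryless Player~1 strategies to suffice, which we take from the determinacy and memoryless optimality of SSP games~\cite{patek1999stochastic}.

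The remaining steps are handled as follows.
\begin{itemize}
\item The backward fixed point computing $E$ and the MEC decomposition are polynomial graph algorithms.
\item The reachability probabilities (threshold checks) are simple stochastic game values. These are themselves instances of SSP in SGs, by assigning cost $0$ and a target reward, so they reduce to SSP.
\item The per-MEC mean payoffs are mean-payoff games. These are known to reduce to simple stochastic games, hence again to SSP.
\item The final quotient SSP is directly an SSP instance.
\end{itemize}
Correctness of the case split, with cases $(\natural,\natural)$, finite transient and $0$ frequency, or $\omega$ and the quotient value, mirrors the proof of the MDP theorem. The new points are the following:
\begin{itemize}
\item a MEC with zero mean payoff lets Player~2 stay forever at finite total cost, which gives the finite case;
\item otherwise any breaking pair must remain with probability $\geq 1-p$ in MECs of positive mean payoff, and the cost of doing so is exactly the weighted exit to $s_+$.
\end{itemize}
Comparing the resulting pair with $(t,f)$ is then immediate.

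\textbf{Lower bound.} Given an SSP game instance, with Player~1 maximising and Player~2 minimising the cost of reaching a target, we encode each unit-cost transition as a disturbance. We introduce a Player~1 state with a single regular action leading to the target-avoiding sink, and a disturbance action implementing the original transition. Take a reachability objective to $G$ with threshold $p=0$, so breaking means reaching $B$ almost surely. The expected transient breaking point then equals the SSP value, so deciding $\geq t$ decides the SSP threshold problem. Integer costs $c>1$ are handled by chaining $c$ such gadgets.

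\textbf{Main obstacle.} The delicate step is the frequency case of the upper bound: showing that the quotient SSP value equals $\max_\pi\inf_\mu \mathsf{MP}$. This requires that optimal strategies for both players within a MEC compose with the reaching strategy under the probability constraint $\geq 1-p$. That constraint makes the problem a constrained SSP, where randomisation is needed (cf.\ \cref{fig:worst_vs_average_breaking_point}). I would handle it by adding the probability constraint via a Lagrangian or a bounded-probability variant of SSP, and state the hardness claim relative to that constrained SSP class.
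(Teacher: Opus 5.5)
Your upper-bound half is essentially the paper's proof. The paper proves this theorem only by arguing termination and correctness of \Cref{alg:expected-synthesis}:
\begin{itemize}
\item for safety, it splits on whether $B$ is reachable with probability $\geq 1-p$ and then solves an SSP in $\newgame$;
\item for reachability, it computes the fixed point $E$ and the zero-mean-payoff MECs, solves an SSP to them, and otherwise builds the weighted quotient with the exit to $s_+$.
\end{itemize}
It reads ``as hard as'' as ``solvable within the complexity of SSP for SGs'', noting that current SG solvers only approximate. It gives no hardness reduction.

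As you notice yourself, citing memoryless determinacy of SSP games~\cite{patek1999stochastic} is not valid once the constraint $\geq 1-p$ is present, because randomization is then needed. The paper does not resolve this with a Lagrangian either. It explicitly takes the threshold variant as the SSP it reduces to (its calls are SSP-SG$(\newgame,H,1-p)$) and argues that VI/PI still apply because zero-cost continuations always exist. It also only asserts that memoryless randomized Player~1 strategies suffice. So restating the claim relative to the constrained class is exactly the paper's reading, not a weakening of it.

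Your lower bound is an addition the paper does not attempt. The forced-disturbance idea is sound: a Player~1 gadget state whose only regular action leads to $G$, with threshold $0$, so that breaking forces a disturbance at every visit. Three points need repair.

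\textbf{Binary costs.} Chaining $c$ gadgets is only pseudo-polynomial when costs are written in binary. Instead, let the disturbance at a single gadget return to that gadget with probability $1-1/c$ and otherwise follow $T(s,a)$. This yields exactly $c$ disturbances in expectation.

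\textbf{The two directions do not meet.} This is the more important issue. Your reduction starts from unconstrained SSP, while your upper bound, after the fix you propose, lands in the threshold-constrained class. So you only locate the problem between the two. To close the gap you would also have to encode the threshold. For example, allow $p>0$ and use the fact that leaving a gadget undisturbed acts as a free give-up move into $G$, as in MCMP.

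\textbf{What breaking means.} With threshold $0$, breaking means avoiding $G$ almost surely, not reaching $B$ almost surely. The two coincide only under the standard SSP assumption that the minimizer cannot stay forever in a zero-cost region outside the target.
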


\section{Worst-case Breaking Point}
\label{sec:worst_case_breaking_point}
In contrast to the expected case, we now consider the worst-case number of disturbances required to break a strategy over all possible paths.

Now, the \emph{worst-case transient breaking point} of a strategy is 
\[	
	\breakpoint^T_{\game,\objective}(\strategyone):=
	\inf_{(\strategytwo,\disturbancestrategy) \in U_\strategyone} \inf \Big\{x\in \langle \omega\rangle ~\big|~ P_{\game,s_0}^{\strategyone,\strategytwo,\disturbancestrategy}\big(\rho \mid \disturbances(\rho) \leq x \big) = 1\Big\}.
\]
We only require that almost all paths (i.e., with probability 1) have fewer than $k$ disturbances, reflecting the common probabilistic convention of ignoring measure-zero events that do not affect the typical behaviors.
Recall again that when the set $U_\strategyone$ is empty, we define $\breakpoint^T_{\game,\objective}(\strategyone)$ to be $\natural$, and when there exist no finite values in this set, we denote $\breakpoint^T_{\game,\objective}(\strategyone)$ to be $\omega$.
When the transient breaking point is $\omega$, we can compute the frequency of disturbances required.
The \emph{worst-case frequency breaking point} of a Player~1 strategy $\strategyone$ is 
\[
	\breakpoint^F_{\game,\objective}(\strategyone):=
	\inf_{(\strategytwo,\disturbancestrategy) \in U_\strategyone} \inf \Big\{x\in[0,1] ~\big|~ P_{\game,s_0}^{\strategyone,\strategytwo,\disturbancestrategy}\big(\rho \mid \longrundisturbances(\rho) \leq x \big) = 1\Big\}.
\]

We can now combine the two values and define the \emph{worst-case breaking point} of a strategy.
\begin{definition}
	Given a SGD $\game$ and an objective $\phi$, the \emph{worst-case breaking point} of a Player~1 strategy $\strategyone$ is given by the pair
	$\breakpoint_{\game,\objective}(\strategyone):=
	\Big(\breakpoint^T_{\game,\objective}(\strategyone), \breakpoint^F_{\game,\objective}(\strategyone)\Big)
	$.
\end{definition}

\subsection{Computing the Worst-Case Breaking Point} 
\label{sec:worst_case_breaking_point_strategy}
We describe the algorithm for transient and frequency worst-case breaking points for both safety and reachability objectives.
We compute the breaking points for a memoryless strategy $\strategyone$ here, but it can easily be extended to finite-memory strategies by encoding the memory in the state space, similar to the expected case.  
First, we provide an overview of the key steps, deferring the detailed algorithms for computing the transient and frequency breaking points to later in this section. 
If $\phi$ is a safety objective, set $B$ is given as input. 
And, if $\phi$ is a reachability objective, we compute the set $B$ as defined in \cref{eq:states_B}.
We now compute the maximum probability in the induced MDP $M_\pi$ of reaching $B$ using LP.
Based on the result, we distinguish three cases: the probability is either $>1-p$, $<1-p$, or $=1-p$.

\textbf{Case 1: Probability to reach $B$ is $>1-p$.}
In this case, the strategy is breakable with finitely many disturbances. 

\textbf{Case 2: Probability to reach $B$ is $=1-p$.}
In this case, if the strategy is breakable in a finite number of disturbances, this number should be bounded by $k$: the number of disturbance actions present in the graph.
We run the procedure to compute the worst-case transient breaking point for $|T^D|$ iterations.
Based on the outcome of this procedure, it is further divided into two subcases:
\begin{itemize}
	\item[a.] If the procedure terminates at the $i^{th}$ iteration for some $i$, we have $\breakpoint^T_{\game,\objective}(\strategyone)=i$. 
	\item[b.] If it does not terminate within $|T^D|$ iterations,
	it requires infinitely many disturbances, making the transient breaking point $\omega$ and the frequency breaking point $0$.
	This is due to the fact that there are probabilistic loops that are required to reach $B$ with a $1-p$ probability, whereas the frequency of taking that action in the long term would still be $0$.
\end{itemize}

\textbf{Case 3: Probability to reach $B$ is $<1-p$.}
Here, if $\phi$ is a safety objective, the strategy is not breakable with any amount of disturbances.
If $\phi$ is a reachability objective, we compute the set $R$ using \cref{eq:states_R}.
If the probability of reaching $R \cup B$ is $<1-p$, again, the strategy is not breakable with any amount of disturbances and the algorithm returns $(\natural, \natural)$.
If the probability of reaching $R \cup B$ is $\geq 1-p$, the strategy is breakable but requires infinitely many disturbances.
This results in the transient breaking point being $\omega$ and the frequency breaking point being computed using the procedure described later.

\subsubsection{Transient Breaking Point}
\label{sec:computing_transient_worst_case}
Given $k$,\footnote{\label{footnote:upper_bound}An upper bound here, if it is finite, can be computed by counting the number of iterations required by VI for MDPs to go beyond probability $1-p$ of reaching $B$.} we want to verify if $\breakpoint^T_{\game,\objective}(\strategyone)\leq k$.
We define a sequence of $k$ reachability LPs where the $i^{th}$ LP checks if the strategy $\strategyone$ can be broken using at most $i$ disturbances. 
The $i^{th}$ LP uses the solution of the $(i-1)^{st}$ LP and allows one more disturbance to compute the probability of reaching $B$ with at most $i$ disturbances.
If the solution of $i^{th}$ LP is $\geq 1 - p$, $\strategyone$ is breakable using $i$ disturbances.
For completeness, the explicit LP can be found in the supplementary material.

Each LP is solved in polynomial time, and each iteration only requires the result of the previous LP. 
This gives us an algorithm that is polynomial in terms of $|\game|$ and $k$. 
This gives us a parametrized polytime complexity where $k$ is given in unary, and in general a $\PSPACE$ algorithm.
It can also be shown that the optimal disturbance strategy here may require memory of size $k$.

\subsubsection{Frequency Breaking Point}
\label{sec:frequency_worst_case}
Recall that the frequency breaking point is only computed when the specification is reachability and the probability of reaching $B$ is $<1-p$ but the probability of reaching $R\cup B$ is $\geq 1-p$. 

As in the expected case, the procedure assigns a frequency of disturbances required to stay within the set $B$ as 0 and assigns the mean payoff for the MECs in the set $R$ as their frequency.
In contrast to the expected case, where we solved a SSP to find the final expected value, we need to consider the worst case.
Here, we iteratively remove a MEC with the highest disturbance frequency. 
After each removal, it recomputes the probability of reaching the remaining MECs in $R\cup B$. 
This process continues until the probability of reaching the remaining MECs drops below $1 - p$.
The frequency of disturbances required to remain in the last removed MEC is then returned as the frequency breaking point.

\begin{restatable}{theorem}{theoremBPWC}
	Given an SGD $\game$, objective $\phi$, Player~1 strategy $\strategyone$, and pair of values $(t,f)$, deciding if $\breakpoint_{\game,\objective}(\strategyone)\geq (t,f)$ is in $\PSPACE$.
\end{restatable}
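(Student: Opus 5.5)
We will prove the theorem by showing that the algorithm of \cref{sec:worst_case_breaking_point_strategy} is correct and runs in polynomial space. By \cref{lem:induced_mdp_equivalence}, pairs $(\strategytwo,\disturbancestrategy)$ correspond bijectively to strategies $\mu$ in the induced MDP $M_\strategyone$, and the corresponding path measures agree. So it suffices to work in $M_\strategyone$. There, disturbances are the cost-$1$ actions of \cref{rem:disturbance_cost_equivalence}.

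\textbf{Case split.} We first compute the maximum probability $q^*$ of reaching $B$ in $M_\strategyone$ with the standard LP, in polynomial time. For reachability objectives $B$ is computed via \cref{eq:states_B}, which needs only a MEC decomposition and is also polynomial.

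Suppose $q^*<1-p$. Every finite-disturbance strategy eventually stops disturbing almost surely. Once it stops, the play follows $\strategyone$ and escapes every MEC outside $B$. Hence breaking requires reaching $B$, or staying forever in some MEC of $R$. In the first situation the transient breaking point is $\omega$ or $\natural$. For safety it is always $\natural$, since $\neg\phi$ is exactly reaching $B$. For reachability we compare the probability of reaching $R\cup B$ with $1-p$.

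Suppose $q^*\ge 1-p$. Then a strategy that reaches $B$ with sufficient probability using finitely many disturbances on almost every path yields the transient value. Reachability probabilities with at most $i$ disturbances are monotone in $i$ and converge to $q^*$. When $q^*>1-p$, some finite $i$ already exceeds $1-p$. When $q^*=1-p$, the threshold is either hit at a finite $i$ or only in the limit. In the limit case the transient breaking point is $\omega$ but the frequency is $0$.

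\textbf{Transient step.} To decide $\breakpoint^T\ge t$ we run the layered LPs of \cref{sec:computing_transient_worst_case} on the unfolded game $\game^{\dagger k}$. By \cref{thm:unfolded_sg_equivalence}, strategies disturbing at most $k$ times on every run correspond exactly to strategies in $\game^{\dagger k}$. So the $i$-th layer value is the maximal probability of reaching $B$ with at most $i$ disturbances almost surely. Each layer is one LP using only the previous layer's values, so only two layers need to be stored.

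We need a bound on $k$ to keep this in polynomial space. By the footnote on VI, if the breaking point is finite it is at most the number of VI iterations needed to exceed $1-p$. This is at most exponential in the input size, hence writable in polynomially many bits. We therefore iterate up to $\min(t,\text{bound})$ while reusing space. For $q^*=1-p$ we additionally argue that an optimal attaining strategy uses each disturbance transition at most once along a path. Otherwise it would close a probabilistic loop, which is exactly the limit behaviour. This justifies the $|T^D|$ cutoff in Case 2.

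\textbf{Frequency step.} Here we show that the greedy MEC-removal procedure computes $\breakpoint^F$. Per-MEC minimum mean payoffs are computed by LP. The worst-case frequency of a strategy is the maximum, over MECs $U$ of $R\cup B$ visited with positive probability, of the mean payoff it realizes in $U$. The adversary therefore wants a threshold $x$ such that the MECs with value $\le x$ are reached with probability $\ge 1-p$. The optimal $x$ is thus one of the MEC values, and removing MECs in decreasing order of value finds it. Each reachability check is polynomial. The comparison $(t,f)$ is then answered from the two computed values.

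\textbf{Main obstacle.} We expect the hardest step to be bounding the number of layers in the transient computation by something with a polynomial-size binary encoding. The same difficulty appears in justifying the $|T^D|$ cutoff in the boundary case $q^*=1-p$. We would try first to take an optimal memoryless-with-counter strategy and apply a pumping argument: a disturbance repeated within one path can be removed without lowering the probability of reaching $B$ below the threshold, except in the limit scenario.
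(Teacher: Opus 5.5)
Your outline follows the paper's proof almost step for step: the three-way split on $q^*$ versus $1-p$, the VI-derived bound with layered LPs on the unfolding for the transient value, the $|T^D|$ cutoff in the boundary case, and greedy removal of MECs by decreasing minimum mean payoff for the frequency. The step that fails is the space analysis when $q^*>1-p$. The paper's own justification, that each iteration only needs the previous LP, has the same hole. Writing the layer counter in binary is fine; the problem is the layer \emph{values}. Take a single Player~1 state $s$ whose $\pi$-move goes to a non-$B$ sink and whose only disturbance goes to $B$ with probability $\varepsilon=2^{-n}$ and back to $s$ otherwise, with a safety objective and $p=1/2$. The value with at most $i$ disturbances is $V_i(s)=1-(1-\varepsilon)^i$, whose reduced denominator is $2^{ni}$, while the breaking point is about $2^n\ln 2$. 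So by the time your loop reaches $\min(t,\text{bound})$, a single stored layer needs exponentially many bits, and each LP has exponentially long coefficients. Keeping only two layers therefore does not give polynomial space. Your argument is fine whenever polynomially many layers suffice, e.g.\ in the boundary case after the $|T^D|$ cutoff. There, layer-$i$ values are values of an unfolded MDP with $(i+1)|S|$ states, so their bit size is polynomial in $i$ and the input. For exponentially many layers you need an additional idea that neither you nor the paper supplies. This is not mere bookkeeping. If $\pi$ always leads to a non-$B$ sink and every other move is a disturbance, then $\mathcal{B}^T(\pi)\le k$ says exactly this: in the MDP formed by the disturbance moves, the maximal probability of reaching $B$ within $k$ steps is at least $1-p$, with $k$ written in binary.

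Your justification of the $|T^D|$ cutoff is also not yet an argument. Repeating a disturbance on a run is not by itself limit behaviour, since a budget-bounded strategy may legitimately return to a state where it already disturbed. Cutting out a probabilistic loop is also not a pumping step. What makes the cutoff work is that $q^*=1-p$ is the unconstrained optimum. Write $V_j(s)$ for the optimal probability of reaching $B$ from $s$ with at most $j$ disturbances, and $V^*(s)$ for the unconstrained one. If a budget-$k$ strategy breaks $\pi$, then $V_k(s_0)=V^*(s_0)$, so $b(s):=\min\{j\mid V_j(s)=V^*(s)\}$ is finite at $s_0$. Fix an optimal memoryless strategy of the unfolding, and let $\nu$ play at $s$ its action $a$ at layer $b(s)$. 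Let $j'$ be $b(s)$ if $a$ is an ordinary move and $b(s)-1$ if it is a disturbance. The chain $V^*(s)=V_{b(s)}(s)=\sum_{s'}\bar{T}(s,a,s')V_{j'}(s')\le\sum_{s'}\bar{T}(s,a,s')V^*(s')\le V^*(s)$ then forces $b(s')\le j'$ for every successor $s'$. So along runs of the memoryless strategy $\nu$, the potential $b$ never increases and drops at each disturbance. Hence $\nu$ disturbs at each state at most once per run. Moreover, $V^*$ is a martingale along $\nu$. In a bottom SCC avoiding $B$ the potential is constant, so no disturbance occurs there, the layer-$b$ optimal strategy never leaves it, and $V^*=0$ on it. Thus $\nu$ reaches $B$ with probability $q^*$ using at most $|T^D|$ disturbances per run, which is what the paper's one-edge lemma is meant to provide.
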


\subsection{Optimal Strategy: Worst-Case Breaking Point}
\label{sec:worst_case_synthesis}
An optimally resilient strategy is a strategy that achieves the following worst-case breaking point:
\[
\breakpoint_{\game,\objective} = \big(\max_\strategyone \breakpoint^T_{\game,\objective}(\strategyone), \max_\strategyone \breakpoint^F_{\game,\objective}(\strategyone)\big)
\]
\begin{added}
We first discuss memory requirements for transient and frequency breaking points, and later provide the algorithm to compute them.

\subsubsection{Memory Requirements for Transient Breaking Point}\label{sec:memory_worst_case}
Making use of the reduction to standard stochastic games from \Cref{sec:unfolded-SG}, we next show that deciding the worst-case transient breaking point is equivalent to solving the unfolded stochastic game.
This allows us to use results concerning stationarity in standard SGs to reason about memory requirements in SGDs.

\begin{figure}[t]
	\scriptsize
	\centering
	\begin{minipage}{0.23\textwidth}
		\centering
		\begin{tikzpicture}[scale=0.7]
			% Node styles
			\node[state, initial, initial text=] at (-2,2) (s0) {$s_0$};
			\node[state, initial, initial text=] at (-2,0) (s1) {$s_1$};
			\node[state] at (-0.5,1) (s2) {$s_2$};
			\node[state] at (-0.5,-1) (s3) {$s_3$};
			\node[state, fill=our-green] at (2,1) (goal) {G};
			\node[state, fill=our-red] at (2,-1) (bad) {B};
			\node[smallcircle] at (0.75,1) (circle1) {};
			
			% Arrows with labels
			\draw[->, left, dashed] (s0) edge node {$d$} (s1);
			\draw[->, bend left, above] (s0) edge node {$a$} (goal);
			\draw[->, above] (s1) edge node {$a_1$} (s2);
			\draw[->, above] (s1) edge node {$a_2$} (s3);
			\draw[-, above] (s2) edge node {$a$} (circle1);
			\draw[->, above] (circle1) edge node {$0.5$} (goal);
			\draw[->, right] (circle1) edge node {$0.5$} (bad);
			\draw[->, above] (s3) edge node {$a$} (goal);
			\draw[->, above, dashed] (s3) edge node {$d$} (bad);
			
			% Initial distribution
			\node at (-2.75,2.3) {0.5};
			\node at (-2.75,0.3) {0.5};
		\end{tikzpicture}
	\end{minipage}
	\hfill
	\begin{minipage}{0.23\textwidth}
		\centering
		\begin{tikzpicture}[scale=0.7]
			\node[state, initial, initial text=] at (0,0)     (1)   {$s_1$};
			\node[state] at (1.5,0)     (2)   {$s_2$};
			\node[state] at (3,0)     (3)   {$s_3$};
			\node[state, fill=our-green] at (1.5,-1.8)     (G)   {G};
			\node[state, fill=our-red] at (4.5,-1.8)    (B)   {B};
			\node[smallcircle] at (4.5,0)    (c1)   {};
			\node[smallcircle] at (0,-1.8)    (c2)   {};
			
			\draw
			(1) edge[->, above] node {$a$} (2)
			(1) edge[-, right, dashed] node {$d$} (c2)
			(2) edge[->, right] node {$a$} (G)
			(2) edge[->, above, dashed] node {$d$} (3)
			(3) edge[->, right] node {$a$} (G)
			(3) edge[-, above, dashed] node {$d$} (c1)
			
			(c1) edge[->, right, dashed] node {$0.5$} (B)
			(c1) edge[->, bend right, above, dashed] node {$0.5$} (1)
			
			(c2) edge[->, above, dashed] node {$0.5$} (G)
			(c2) edge[->, bend right, below, dashed] node {$0.5$} (B);
		\end{tikzpicture}
	\end{minipage}
	\caption{An SGD where the most resilient $\strategyone$ must have memory even if $\delta$ is memoryless (left), and
	one where the optimal $\delta$ must rely on memory even if $\strategyone$ is memoryless (right).}
	\label{fig:memory_counterexamples}
\end{figure}
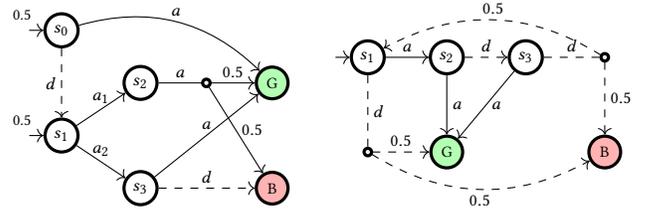

\begin{restatable}{lemma}{worstCaseTransBP}
	\label{thm:equivalence_transient}
	Let $\game$ be an SGD and $\game^{\dagger k}$ its corresponding \mbox{$k$-unfolded} stochastic game, for some $k \in \langle\omega\rangle$.
	Then, for all Player~1 strategies $\pi$, we have $\breakpoint^T_{\game, \phi}(\pi) \leq k$ if and only if $\sup_{\tilde \sigma} \prob^{\tilde \pi, \tilde{\sigma}}_{\game^{\dagger k}, (s_0,k)}(\neg \phi) \geq 1 - p$.
\end{restatable}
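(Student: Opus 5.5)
We prove the two directions separately, using the bijections $f$ and $g$ from \cref{thm:unfolded_sg_equivalence}. Here $\tilde\pi$ is the Player~1 component of $f(\pi,\sigma,\delta)$. We first check that this component depends only on $\pi$: in the unfolded game, Player~1 picks $a$ at $(s,i)$ exactly as $\pi$ does on the projected history, since the counter $i$ is determined by the history. So $\tilde\pi$ is well defined from $\pi$ alone. Throughout, $\neg\phi$ is read with the sinks $B$ or $G$ lifted to all copies $(s,i)$. Since $g$ preserves the projection onto $\gamestates$, the event $\neg\phi$ corresponds under $g$ to the lifted event.

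\emph{($\Rightarrow$)} Assume $\breakpoint^T_{\game,\phi}(\pi)\le k$. By the definition of the infimum over the discrete set $\langle\omega\rangle$, there is a pair $(\sigma,\delta)\in U_\pi$ such that almost surely $\disturbances(\rho)\le k$. We modify $\delta$ on the null set of histories that already contain $k$ disturbances, so that it plays $\bot$ there. This changes no probabilities, so the modified pair is still in $U_\pi$, and now $\delta$ disturbs at most $k$ times on every run. Applying \cref{thm:unfolded_sg_equivalence} gives $(\tilde\pi,\tilde\sigma)=f(\pi,\sigma,\delta)$ whose path measure agrees with the original through $g$ on all finite prefixes. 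By Carathéodory extension it agrees on cylinder-generated events, in particular on $\neg\phi$, which is a countable union of cylinders or its complement. Hence $\prob^{\tilde\pi,\tilde\sigma}(\neg\phi)=\prob^{\pi,\sigma,\delta}(\neg\phi)$. Because the pair is in $U_\pi$, this probability is at least $1-p$ (for reachability with threshold $q$ we set $p:=q$). This shows the supremum is $\ge 1-p$.

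\emph{($\Leftarrow$)} This is the harder direction, because the supremum need not be attained by the strategy we want, and the unfolded game may have a general $\tilde\sigma$. The unfolded game $\game^{\dagger k}$ with $\tilde\pi$ fixed is a finite MDP for Player~2 with a reachability (or safety) objective. Standard results therefore give an optimal strategy: a memoryless deterministic one when $\pi$ is memoryless, and otherwise one obtained on the product with $\pi$'s memory. If $\pi$ has infinite memory, we argue instead that the supremum of a reachability or safety probability in a countable MDP is attained by some strategy, because a safety-type value is attained. So we pick $\tilde\sigma^*$ attaining probability $\ge 1-p$. Every strategy of $\game^{\dagger k}$ disturbs at most $k$ times, since the counter cannot go below $0$ and at counter $0$ no $d$ is available. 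Hence $(\pi,\sigma,\delta)=f^{-1}(\tilde\pi,\tilde\sigma^*)$ lies in the domain where the lemma applies. By the measure equality, $(\sigma,\delta)\in U_\pi$, and almost surely (indeed surely) $\disturbances(\rho)\le k$, so $\breakpoint^T_{\game,\phi}(\pi)\le k$.

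The main obstacle is the attainment step together with the boundary case where the probability equals exactly $1-p$. If the objective is read with the strict inequality from the specification, a supremum equal to $1-p$ must be attained for the equivalence to hold. I would first try to justify attainment via the existence of optimal strategies in finite MDPs after forming the product with $\pi$'s memory. If that fails for general $\pi$, I would restrict the statement to finite-memory $\pi$, which is all that later sections use.
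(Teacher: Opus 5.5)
Your route is the same as the paper's: both directions go through the bijection $f$ of \cref{thm:unfolded_sg_equivalence}. Your ($\Rightarrow$) is a more careful version of the paper's one-line argument. The paper simply assumes a breaking $\delta$ that disturbs at most $k$ times on \emph{every} run, and it takes the equality of probabilities on $\neg\phi$ for granted. One wording fix: the histories already containing $k$ disturbances need not form a null set. The correct justification is that at every such history of positive probability, $\delta$ already plays $\bot$ with probability one; otherwise $\prob(\disturbances(\rho)>k)>0$. So modifying $\delta$ there changes no cylinder probability.

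In ($\Leftarrow$) you have found the real weak point. The paper just applies $f^{-1}$ as if the supremum were attained by some $\tilde\sigma$. Your finite-memory argument is correct: the product with $\pi$'s memory is a finite MDP, so an optimal Player~2 strategy exists.

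Your claim for infinite-memory $\pi$, however, is false. When $\phi$ is a safety objective, Player~2 maximizes the reachability event $\lozenge B$. In the countable, finitely branching MDP of histories, such suprema need not be attained. Concretely:
\begin{itemize}
\item Take $k=0$ and a Player~2 state $s_0$ with a self-loop and an action to a Player~1 state $t$.
\item From $t$, Player~1 can go to $B$ or to a safe sink.
\item Let $\pi$, after $n$ loops at $s_0$, go to $B$ with probability $(1-p)\bigl(1-\frac{1}{n+2}\bigr)$.
\end{itemize}
The supremum of $\prob(\lozenge B)$ is $1-p$. But every Player~2 strategy induces a distribution over the number of loops $n\in\mathbb{N}\cup\{\infty\}$ and achieves strictly less than $1-p$. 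Hence $U_\pi=\emptyset$ and $\breakpoint^T_{\game,\phi}(\pi)=\natural$, while the right-hand side of the equivalence holds.

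So the statement is false for general $\pi$ under safety objectives, and restricting to finite-memory $\pi$ is necessary rather than a fallback. For reachability objectives, by contrast, Player~2 maximizes the safety event $\square\neg G$. At each history, choose an action attaining the Bellman maximum, which exists because there are finitely many actions. This guarantees, for every $n$, that $G$ is avoided for $n$ steps with probability at least the value, and hence forever. So in the reachability case your attainment argument does hold for all $\pi$.
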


If we let $k$ be the optimal breaking point, $k = \max_{\pi \in \Pi} \breakpoint^T_{\game, \phi}(\pi)$, then the unfolded game $\game^{\dagger k}$ is a standard stochastic game, and the optimal strategies for both players are therefore memoryless~\cite{shapley1953stochastic}. 
\Cref{thm:equivalence_transient}, together with \Cref{thm:unfolded_sg_equivalence}, implies that an SGD $\game$ is stationary with respect to the state-counter pair $(s, i)$. 
Here, $i$ represents the number of disturbances remaining, and the pair $(s, i)$ uniquely identifies a state in the corresponding unfolded game. 
Consequently, we can derive the following two corollaries.

\begin{corollary}\label{cor:most_resilient}
    For any SGD $\game$, there exists an optimally resilient Player~1 strategy of the form $\pi^* : S \times \langle k+1 \rangle \to A$. 
\end{corollary}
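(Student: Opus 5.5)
The plan is to pass to the unfolded game, take an optimal memoryless Player~1 strategy there, and pull it back to $\game$ through the bijection $f$ of \Cref{thm:unfolded_sg_equivalence}.

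First I set $k := \max_{\pi} \breakpoint^T_{\game,\phi}(\pi)$. If this value is $\natural$ or $\omega$, I treat that case separately, as described at the end. For finite $k$, the relevant quantity is the threshold on breaking with $k-1$ disturbances: an optimally resilient $\pi$ is exactly one with $\breakpoint^T_{\game,\phi}(\pi) > k-1$. By \Cref{thm:equivalence_transient}, applied with $k-1$, this is equivalent to
\[
\sup_{\tilde\sigma}\prob^{\tilde\pi,\tilde\sigma}_{\game^{\dagger (k-1)},(s_0,k-1)}(\neg\phi) < 1-p .
\]
So Player~1 seeks a strategy $\tilde\pi$ in the finite turn-based SG $\game^{\dagger(k-1)}$ that minimizes Player~2's maximal probability of $\neg\phi$. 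Since $\neg\phi$ is a reachability or safety event, the game is determined with memoryless pure optimal strategies for both players~\cite{shapley1953stochastic,condon1990algorithms}. Hence there is a memoryless pure $\tilde\pi^*$ attaining the value. Because some $\pi$ achieves breaking point $k$, that value is $<1-p$, so $\tilde\pi^*$ satisfies the strict inequality above.

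Second, I pull $\tilde\pi^*$ back to $\game$. A memoryless $\tilde\pi^*$ prescribes an action at each Player~1 state $(s,i)$ of $\game^{\dagger(k-1)}$. I define $\pi^*(h) := \tilde\pi^*(s,i)$, where $s$ is the current state and $i$ is $k-1$ minus the number of disturbances recorded in the history $h$, truncated at $0$. This is exactly $f^{-1}$ applied to $\tilde\pi^*$, restricted to histories. By \Cref{thm:unfolded_sg_equivalence}, path probabilities agree against every $(\sigma,\delta)$ that disturbs at most $k-1$ times. Then \Cref{thm:equivalence_transient} gives $\breakpoint^T_{\game,\phi}(\pi^*) > k-1$, i.e.\ $\pi^*$ is optimally resilient, and it has the form $S\times\langle k+1\rangle\to A$ (even $\langle k\rangle$ suffices).

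The main obstacle is the direction ``for all $\pi$'' in \Cref{thm:equivalence_transient}. That lemma is stated for general $\pi$, so I need $f$ to map a general $\pi$ to some $\tilde\pi$ in the unfolded game, and I need the equivalence to hold for $\pi^*$ in particular. I also have to check that the counter values $i$ exceeding the budget never matter. That is, once more than $k-1$ disturbances have occurred, the behaviour of $\pi^*$ is irrelevant to whether it is broken with at most $k-1$ disturbances; I will argue this from the ``almost surely at most $x$ disturbances'' clause in the definition of $\breakpoint^T$.

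For the degenerate values $\natural$ and $\omega$, I argue as follows. For $\natural$, any unbreakable strategy works. For $\omega$, some $\pi$ cannot be broken with any finite bound, and I need a single strategy of the given shape that resists every finite $k$. Here I would instead use the positional strategy from the ordinary SG in which every disturbance is allowed, obtained from the $\game^{\dagger k}$ argument. I expect this case to need a short extra argument: the sets of strategies resilient to $k$ disturbances decrease in $k$, and the state space is finite, so these sets stabilize.
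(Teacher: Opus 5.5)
Your finite-$k$ argument is the paper's own argument: unfold, use pure memoryless determinacy of the finite turn-based game, and pull the positional Player~1 strategy back through \Cref{thm:unfolded_sg_equivalence} to a strategy indexed by (state, counter). You are more careful than the text. The paper unfolds with the optimal value $k$ itself, where every strategy is already broken. Budget $k-1$ is the game in which resilience is actually decided, and it gives memory $\langle k\rangle$.

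One step should be tightened: the claim that the value of $\game^{\dagger(k-1)}$ is $<1-p$. Get it by pulling Player~2's optimal positional strategy back against the resilient $\pi$. This yields some $(\sigma,\delta)$ with at most $k-1$ disturbances whose bad-event probability is at least the value, so the value is $<1-p$. Do not get it from the ``$\Leftarrow$'' direction of \Cref{thm:equivalence_transient}. For a history-dependent $\pi$, that direction tacitly assumes Player~2's supremum is attained.

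The genuine gap is your sketch for the value $\omega$. The paper does not treat this case; the natural target shape there is $S\times\langle\omega+1\rangle\to A$, a strategy that counts disturbances.

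First, the positional optimal Player~1 strategy of $\newgame$ (unlimited disturbances) can have a finite breaking point even when $\omega$ is achievable. Take $\phi=\prob(\lozenge G)>\tfrac12$, and at $s_0$ let Player~1 choose $x$ or $y$, with no disturbance at $s_0$.
\begin{itemize}
\item Under $x$, the play goes to $G$ w.p.\ $0.4$ and to $t$ w.p.\ $0.6$. At $t$ the action leads to $G$ and the only disturbance is a self-loop. No a.s.-bounded number of disturbances breaks $x$, but disturbing forever gives avoidance probability $0.6$.
\item Under $y$, the play goes to a non-goal sink $Z$ w.p.\ $0.45$ and to $u$ w.p.\ $0.55$. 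At $u$ the action leads to $G$, and the disturbance leads to $Z$ w.p.\ $\tfrac{1}{11}$ and to $G$ otherwise. One disturbance yields avoidance probability exactly $\tfrac12$.
\end{itemize}
In $\newgame$ the values are $0.6$ versus $0.5$, so its optimal strategy picks $y$, whose breaking point is $1$, although $x$ achieves $\omega$.

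Second, finiteness of $S$ does not make the resilient sets stabilise to something nonempty. Decreasing subsets of a fixed finite set, such as memoryless strategies, do stabilise, but they can be empty (cf.\ \Cref{ex:memory_counterex}). The sets you know to be nonempty live in $A^{S\times\langle n+1\rangle}$, which grows with $n$.

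What does work is compactness. Index strategies by $(s,j)$, where $j\in\mathbb{N}$ counts disturbances so far. The set $R_n$ of such strategies not breakable with $n$ disturbances has three properties:
\begin{itemize}
\item it is nonempty, by your finite-case argument;
\item it satisfies $R_{n+1}\subseteq R_n$;
\item membership in it depends only on the values at $j\le n$.
\end{itemize}
K\"onig's lemma then yields a single strategy in $\bigcap_n R_n$.

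In the $\natural$ case, by contrast, the positional optimal strategy of $\newgame$ is the right choice, by the same Player~2 argument. So ``any unbreakable strategy'' can be sharpened to a memoryless one.
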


\begin{corollary}\label{cor:optimal_delta}
    For any SGD $\game$, there exists an optimal $k$-disturbance strategy of the form $\delta^* : S \times \langle k+1 \rangle \to A$. 
\end{corollary}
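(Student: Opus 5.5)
The statement to prove is Corollary~\ref{cor:optimal_delta}: there is an optimal $k$-disturbance strategy that depends only on the current state and the remaining-disturbance counter. The plan is to transfer the question to the unfolded game $\game^{\dagger k}$ of \Cref{sec:unfolded-SG}. There, memoryless optimal strategies exist for reachability and safety objectives, and we pull such a strategy back through the bijection $f$ of \Cref{thm:unfolded_sg_equivalence}.

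\textbf{Step 1: fix $k$ and a Player~1 strategy.} Let $k=\max_{\pi}\breakpoint^T_{\game,\phi}(\pi)$; the corollary is only meaningful when $k<\omega$. Fix a Player~1 strategy $\pi$. By \Cref{cor:most_resilient} we may take $\pi$ to be an optimally resilient strategy of the form $S\times\langle k+1\rangle\to A$.

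\textbf{Step 2: apply \Cref{thm:equivalence_transient} and the unfolded-game reduction.} By \Cref{thm:equivalence_transient}, $\breakpoint^T_{\game,\phi}(\pi)\le k$ holds iff $\sup_{\tilde\sigma}\prob^{\tilde\pi,\tilde\sigma}_{\game^{\dagger k},(s_0,k)}(\neg\phi)\ge 1-p$. Once $\tilde\pi=f(\pi,\cdot,\cdot)$ is fixed, $\game^{\dagger k}$ becomes a finite MDP for the Player~2 side, whose objective is to maximise the probability of $\neg\phi$. For $\neg\phi$ a reachability objective (reach $B$), or a safety objective obtained from the sink assumption, standard results give an optimal memoryless pure $\tilde\sigma^*$~\cite{puterman2014markov,shapley1953stochastic}. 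Its states are of the form $(s,i)$ or $(s,i,a)$.

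\textbf{Step 3: pull back through $f^{-1}$.} Applying $f^{-1}$ to $(\tilde\pi,\tilde\sigma^*)$ yields a triple $(\pi,\sigma^*,\delta^*)$. At a gadget state $(s,i,a)$, the disturbance choice of $\tilde\sigma^*$ depends only on $(s,i,a)$. Since $a=\pi(s,i)$ is determined by $(s,i)$, this choice is a function of $(s,i)$ alone, so $\delta^*:S\times\langle k+1\rangle\to A^D\cup\{\bot\}$ (and $\sigma^*$ similarly). Path probabilities are preserved by \Cref{thm:unfolded_sg_equivalence}. Also, $\delta^*$ disturbs at most $k$ times on every run, because the counter in $\game^{\dagger k}$ forbids disturbances at $i=0$. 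Hence $\delta^*$ attains the supremum, and therefore breaks $\pi$ with $k$ disturbances whenever any $k$-bounded disturbance strategy does.

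\textbf{Main obstacle.} The delicate point is Step 3. We must check that the bijection $f$ really maps memoryless strategies on $\game^{\dagger k}$ to counter-based strategies on $\game$, i.e.\ that the counter $i$ can be recomputed from the history of $\game$ (it equals $k$ minus the number of disturbances so far). We must also handle the case where $\pi$ uses counter memory, which is addressed by the observation above that $a$ is a function of $(s,i)$. To verify the recomputation claim, I would inspect the definition of $f$ in the proof of \Cref{thm:unfolded_sg_equivalence}.
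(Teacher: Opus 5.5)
\textbf{Gap: which notion of optimality you prove.} Your reduction matches the paper's: \Cref{thm:equivalence_transient}, the map $f$ of \Cref{thm:unfolded_sg_equivalence}, memoryless optimality in $\game^{\dagger k}$, and a pull-back in which the counter is recomputed from the history. The difference is where you invoke memorylessness, and it changes what is proved. You fix the counter-based optimally resilient $\pi^*$ and use memoryless optimality in the resulting \emph{MDP}. So your $\delta^*$ is a best response to that one strategy: it breaks $\pi^*$ with at most $k$ disturbances. The paper instead uses memoryless optimal strategies for \emph{both} players of the \emph{game} $\game^{\dagger k}$. Its $\delta^*$ is Player~2's optimal strategy there: together with $\sigma^*$ it achieves $\prob(\neg\phi)\ge 1-p$ against every Player~1 strategy.

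A best response to an optimal strategy need not be optimal. Take $s_0$ with two actions leading to two copies of a Player~1 state, each with an action to $G$ and a disturbance to $B$, and let $\phi=P_{>1/2}(\lozenge G)$. Every $\pi$ has breaking point $1$. The counter-based strategy that disturbs only in the left copy is a best response to `go left', yet it breaks no strategy that goes right. So the step where your best response becomes ``an optimal $k$-disturbance strategy'' does not reach the paper's claim. The appeal to \Cref{cor:most_resilient} is also the wrong specialization. It is unnecessary for the per-strategy statement, since your Steps~2--3 work verbatim for every counter-based (e.g.\ memoryless) $\pi$, which is the sufficiency counterpart of \Cref{ex:disturb_not_early}. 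It is insufficient for the game statement.

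\textbf{How to obtain the game statement.} Take a memoryless optimal Player~2 strategy $\tilde\sigma^*$ of $\game^{\dagger k}$ directly. Your device `$a=\pi(s,i)$' is then unavailable: $\tilde\sigma^*$ at $(s,i,a)$ may depend on the action Player~1 actually chose, whereas $\delta$ in $\game$ only sees the history ending in $s$. The missing observation is that the disturbance decision can be moved before Player~1's move at no cost. Let $v_a$ be the value of $\bot$ at $(s,i,a)$ and $x_s$ the best disturbance value, computed from level $i-1$. Then $\min_a\max(v_a,x_s)=\max(\min_a v_a,x_s)$. Hence the variant game in which Player~2 decides at $(s,i)$ has the same Bellman operator and the same value. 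Its memoryless optimal strategy is a function of $(s,i)$ alone and still guarantees this value in $\game^{\dagger k}$. The paper's two-line argument skips this point as well, so your treatment of the gadget states is the more careful part of your proposal.
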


In the next two examples, we construct SGDs where the optimal strategies require memory.

\begin{example}\label{ex:memory_counterex}
Let $\game$ be as in \Cref{fig:memory_counterexamples} (left) with $\phi = P_{\geq 0.75}(\lozenge\text{G})$.
There is a step-counting strategy of the form $\pi^* : S \times \langle 2 \rangle \to A$ with a worst-case transient breaking point of 2:
$\pi^*(s_0, \_) = a$, $\pi^*(s_1, 0) = d$, $\pi^*(s_1, 1) = a$, $\pi^*(s_3, \_) = a$, and $\pi^*(s_2, \_) = a$.
In contrast, no memoryless Player~1 strategy can achieve a breaking point of 2.
\end{example}

\begin{example}\label{ex:disturb_not_early}
Let $\game$ be as in \Cref{fig:memory_counterexamples} (right) with $\phi = P_{\geq 0.5}(\lozenge\text{G})$.
Define the memoryless Player~1 strategy $\pi(s) = a$ for every state $s$. 
Then, $\pi$ has the worst-case transient breaking point of 3, as the following step-counting 3-disturbance strategy breaks it: 
$\delta(s_1, 3) = \bot, \delta(s_1, 2) = \delta(s_1, 1) = d, \delta(s_2, \_) = d, \delta(\_, 0) = \bot$.
Yet, no memoryless disturbance strategy that disturbs at most 3 times can break $\pi$.
\end{example}

In general, it is \emph{necessary} for optimal strategies to perform step counting (i.e., their memory is of the form $S \times \langle k + 1 \rangle$), as seen in \Cref{ex:memory_counterex,ex:disturb_not_early}, while \Cref{cor:most_resilient,cor:optimal_delta} show this is \emph{sufficient}.
Thus, we always assume that optimal strategies have this form in the case of worst-case transient resilience without loss of generality.

\subsubsection{Memory Requirements for Frequency Breaking Point}
\label{app:memoryless_strategies_worst_case_frequency}
In contrast to the transient case, memoryless strategies are sufficient for the frequency breaking point. 
We discuss this for reachability objectives; in the case of safety objectives either it can be broken with finite disturbances or it cannot be broken at all.
\begin{restatable}{lemma}{worstCaseFreqBP}
\label{lem:worstCaseFreqBP}
	If $\phi$ is reachability, and it is not possible to break the strategy with finite disturbances, then the optimal disturbance strategy is memoryless.
\end{restatable}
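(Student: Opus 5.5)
We work in the induced MDP $M_\strategyone$ with the disturbance cost $C$. By \cref{lem:induced_mdp_equivalence} a pair $(\strategytwo,\disturbancestrategy)$ is a single strategy $\mu$ in $M_\strategyone$, and by \cref{rem:disturbance_cost_equivalence} $\longrundisturbances(\rho)=\mathsf{MP}(\rho)$. The claim then becomes: among breaking strategies $\mu$, there is a memoryless one that achieves the infimum of the essential supremum of the path mean payoff, provided finite disturbances cannot break $\strategyone$.

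\textbf{Step 1: reduce to a fixed target set of MECs.} Since we are in Case~3, the probability of reaching $B$ is $<1-p$ and the probability of reaching $R\cup B$ is $\geq 1-p$. Almost every play of any $\mu$ eventually stays inside some end component of $M_\strategyone$. On paths that stay in an end component $U\subseteq R\cup B$, the mean payoff is at least the minimal mean payoff $v_U$ of $U$; we take $v_U=0$ for $U\in B$. A path that never reaches $G$ must end in $R\cup B$. Hence every breaking $\mu$ has essential supremum of $\longrundisturbances$ at least
\[
f^*=\min\{x \mid \text{the max.\ probability of reaching } \{U\in R\cup B\mid v_U\le x\} \text{ is } \ge 1-p\},
\]
which is exactly the quantity produced by the iterative-removal procedure. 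This gives the lower bound for arbitrary, possibly history-dependent, $\mu$.

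\textbf{Step 2: build a memoryless strategy attaining $f^*$.} Let $W=\{U\mid v_U\le f^*\}$. Outside $W$, use a memoryless deterministic strategy maximizing the probability of reaching $W$ while avoiding $G$; such a strategy exists by standard MDP reachability. Inside each $U\in W$, use a memoryless deterministic strategy achieving the minimal mean payoff $v_U$ within $U$, which exists by standard mean-payoff MDP results. Combine the two by using the in-MEC choice on states of $W$ and the reachability choice elsewhere; this is well defined because the MECs are disjoint.

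Two things must be checked. First, the reachability strategy still reaches $W$ with probability $\ge 1-p$, since once a play enters $W$ it never leaves. Second, almost every path that stays in $U$ has mean payoff at most $v_U\le f^*$. For the second check, the in-MEC strategy induces a finite Markov chain. We pick it so that every recurrent class it reaches has mean payoff at most $v_U$; optimal memoryless strategies for the minimal expected mean payoff can be chosen so that all reachable BSCCs are optimal. In a finite Markov chain the pathwise mean payoff equals the BSCC value almost surely, so the path-level $\liminf$ bound follows.

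\textbf{Main obstacle.} The delicate point is the passage from expected values to an almost-sure bound inside a MEC. An optimal memoryless strategy for expected mean payoff may have some recurrent classes with value above $v_U$, offset in expectation by classes with lower value. I plan to fix this by taking the minimal BSCC inside $U$ and combining it with a memoryless attractor strategy toward that BSCC within $U$. This makes all reached recurrent classes have value $v_U$ while staying inside $U$, which is possible because $U$ is an end component. I must also ensure that staying inside a MEC in $R$ never hits $G$; this holds because the MECs are taken in $M_\strategyone\setminus G$.
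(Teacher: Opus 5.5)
Your overall structure matches the paper's argument: a memoryless mean-payoff-optimal strategy inside the selected MECs, combined with a memoryless reachability strategy towards them. Your pathwise lower bound $f^*$ for history-dependent strategies and your almost-sure argument inside MECs are more careful than the paper's sketch. The gap is in Step~2. The worst-case frequency requires the frequency bound for almost \emph{all} paths, but you only check paths that end in $W$. Paths that never enter $W$ (up to measure $p$) are governed by your reachability strategy, and an arbitrary memoryless reach-optimal strategy can trap them in a disturbance loop.

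Concretely, let $s_0$ move with probability $0.9$ into a MEC $U\in R$ of frequency $\tfrac12$, for instance $u\to u'\to u$ where the loop uses the disturbance at $u$ and $\pi(u)$ leads to $G$. With probability $0.1$ let $s_0$ move to a Player~1 state $t$ whose $\pi$-action leads to $G$ and whose only disturbance is a self-loop, and take $1-p=0.9$. Then $B=\emptyset$, finitely many disturbances cannot break $\pi$, and $f^*=\tfrac12$. However, $W=\{U\}$ is unreachable from $t$, so playing the self-loop at $t$ is reachability-optimal. Your combined memoryless strategy may therefore have frequency $1$ on a set of measure $0.1$, i.e., worst-case frequency $1\neq f^*$.

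The repair is short. Take a reachability strategy optimal from every state, and override it on states of maximal reach value $0$ with cost-free actions: $\pi(s)$ at Player~1 states, anything at Player~2 states. Paths through value-$0$ states reach $W$ with probability $0$, so the probability of reaching $W$ from every state is unchanged. Consequently every state in a bottom SCC of the resulting chain that is disjoint from $W$ has reach value $0$, so it lies in $G$ or uses only cost-free actions. All paths not entering $W$ then have frequency $0\le f^*$, and the almost-sure bound holds.

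Separately, the ``main obstacle'' you describe does not actually arise. Inside a MEC $U$, every recurrent class of any strategy confined to $U$ has mean payoff at least $v_U$. So an expectation-optimal memoryless strategy already forces every reached recurrent class to have value exactly $v_U$. Your attractor construction is correct but unnecessary.
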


The intuition behind \Cref{lem:worstCaseFreqBP} is that, since the strategy cannot be broken with finite disturbances, the disturbance strategy must rely on staying in some MECs of the MDP induced by the Player~1 strategy.
In these MECs, the optimal disturbance strategy to minimize the mean payoff is memoryless.
Thus, the overall disturbance strategy can be chosen to be memoryless.
\end{added}

\subsubsection{Computing Optimally Resilient Strategies}

The procedure here is similar to the procedure in \cref{sec:expected-synthesis}.
Here also, we use the transformation to $\newgame$.
If $\phi$ is safety, then the probability of reaching $B$ is computed. 
If it is $<1-p$, then the strategy is not breakable.
In the other case, the frequency breaking point is 0 and the transient breaking point is computed as described later in this section.

If $\phi$ is reachability, then the algorithm computes the set of states $E$ from which Player~2 can force the play to never reach $G$. 

This can be computed using the fix-point computation as described in \cref{sec:expected-synthesis}.
If the maximum probability of reaching $E$ is $<1-p$, then the strategy is again not breakable.
If it is $\geq 1-p$, the minimum mean payoff of each MEC in $E$ is computed, which gives us the minimum frequency of disturbances required to stay in that MEC.

Let $B$ be the set of MECs in $E$ with 0 mean payoff and if $B$ is reachable with probability $\geq 1-p$, then the frequency breaking point is 0 and the transient breaking point is computed as described next.
Otherwise, the algorithm iteratively removes a MEC with the highest disturbance frequency.
After each removal, it recomputes the probability of reaching the remaining MECs in $E$.
This process continues until the probability of reaching the remaining MECs drops below $1 - p$.
The frequency required to remain in the last removed MEC is then returned as the frequency breaking point. 

\subsubsection{Transient Breaking Point}
\label{sec:qp}

Given $k$, we want to verify whether $\breakpoint^T_{\game,\objective}\leq k$.
We define a sequence of $k$ QPs where the $i^{th}$ QP checks if all strategies $\strategyone$ can be broken using at most $i$ disturbances.
Here too, the $i^{th}$ QP uses the solution of the $(i-1)^{th}$ QP, and by allowing one more disturbance, it computes the probability of reaching $B$ with $i$ disturbances. 
\begin{new}
	Intuitively, the $i^{th}$ QP solves for the maximum probability of reaching bad states using the method in \citep{condon1992complexity} in the game unfolded for $i$ disturbances (\Cref{sec:unfolded-SG}) but reuses results from the game unfolded for $i-1$ disturbances.
\end{new}
Description of these QPs is provided in the Supplementary Material.

Note that the decision problem corresponding to these specific QPs can be solved in $\NP$~\cite{condon1992complexity}, but in our case we also need to find the solution of the QP. 
This can be extracted using polynomially many $\NP$ queries, where in $j^{th}$ query we will ask whether the $j^{th}$ bit of the solution is $1$. Also, in each iteration, we only need to remember the solution of the last iteration. 
Thus, we can solve this in polynomial time with an $\NP$ oracle if $k$ is given in unary.
\begin{new}
	In practice, the $\NP$-oracle complexity would involve querying a SAT solver polynomially many times to extract the solution bit-by-bit, and we would expect the state-of-the-art SAT-solvers to scale well. 
\end{new}

\begin{restatable}{theorem}{theoremFreqResNP}
	\label{thm:freq_res_np} 
	Given an SGD $\game$, an objective $\phi$, and a pair of values $(t,f)$, deciding whether $\breakpoint_{\game,\objective}\geq (t,f)$ is in $\PSPACE$.
\end{restatable}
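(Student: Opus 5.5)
We establish the $\PSPACE$ bound by analysing the synthesis procedure of \cref{sec:worst_case_synthesis} on $\newgame$ step by step. First we compute the qualitative and structural objects. For safety this is the target set $B$; for reachability it is the fixed-point set $E$ from \cref{sec:expected-synthesis} and the MEC decomposition of $\newgame$ restricted to $E$. All of these are polynomial graph computations. Every quantitative query we then need is the maximal probability, with Player~2 maximising, of reaching a set in a stochastic game, compared against $1-p$. By \cite{condon1992complexity} such a query is in $\NP\cap\mathsf{coNP}$, and hence in $\PSPACE$. With these queries we decide the top-level case split: the game is unbreakable, it is breakable with finitely many disturbances, or it needs infinitely many. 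Comparing with $(\natural,\natural)$ in the unbreakable case is then immediate.

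Next we handle the frequency component. For each MEC in $E$ we compute the minimal mean payoff of the disturbance cost, which is the cost of Player~2 staying inside. Mean-payoff games of this form are in $\NP\cap\mathsf{coNP}$ \cite{sg-mean-payoff}, and the values are rationals of polynomial bit size. We then run the iterative removal loop. It executes at most $|\mathsf{MEC}|$ iterations, and each iteration asks one reachability query. So deciding $\breakpoint^F_{\game,\objective}\geq f$ takes polynomially many $\NP$ queries, which places it in $\PSPACE$. For correctness we rely on \cref{lem:worstCaseFreqBP}, which lets us restrict to memoryless disturbance strategies. Such a strategy commits to a set of MECs to stay in, so the worst-case frequency is the maximum mean payoff over the MECs used. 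Against Player~1's optimal play, that value is exactly the threshold found by the greedy removal.

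For the transient component we use \cref{thm:equivalence_transient}: $\breakpoint^T\leq k$ holds iff Player~2 can violate $\phi$ with probability $\geq 1-p$ in $\game^{\dagger k}$. If the transient breaking point is finite, \cref{footnote:upper_bound} gives an upper bound $K$ on the relevant $k$. Its binary size is polynomial, because VI reaching threshold $1-p$ needs at most exponentially many rounds. We can therefore take $t\le K$, and deciding $\breakpoint^T\geq t$ reduces to checking that the $(t-1)$-th QP of \cref{sec:qp} yields a value $<1-p$. To do this we run the QP sequence $i=1,\dots,t-1$. At each step we store only the previous solution vector and extract the current one bit by bit with $\NP$ queries, each of which is simulated in $\PSPACE$.

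The main obstacle is space. The counter $i$ can be exponential, so the stored value vectors must stay polynomially bounded in bit size across all iterations. If they grew linearly with $i$, the space used would blow up. We address this through \cref{cor:most_resilient,cor:optimal_delta}. Each layer value is the value of a stationary strategy profile in $\game^{\dagger i}$, and layer $i$ depends only on layer $i-1$. We therefore represent layer $i$ not by its numbers but by the pair of memoryless choices restricted to that layer, which has polynomial size. The value is then recomputed as the solution of a linear system whose constants come from the previous layer. Bounding the bit size of these values uniformly in $i$ needs care. We would argue it by a Cramer's-rule bound on the finite, polynomially describable family of layer-$i$ systems. Alternatively, we would store values rounded to polynomial precision and argue that the threshold comparison with $1-p$ is unaffected by the rounding.
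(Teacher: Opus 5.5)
Your case split, the greedy MEC removal for the frequency component, and the iterated QP with bit-by-bit $\NP$ extraction for the transient component follow the paper's route. The paper's proof stops there. It justifies the transient part only for $k$ in unary (the $\polytime^{\NP}$ lemma), and it tacitly assumes that ``remembering only the last solution'' costs polynomial space. You correctly see that binary $t$ breaks this assumption, but neither of your fixes works. So the proposal has a genuine gap at exactly the step you call the main obstacle.

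\textbf{The choice-based representation.} Storing the layer-$i$ memoryless choices does not let you recompute $V_i$. The layer-$i$ linear system takes $V_{i-1}$ as its constants, which needs the choices of layer $i-1$, and so on down to layer $0$. That is $t-1$ records, exponentially many. Nor is there a uniform bit-size bound for Cramer's rule to deliver. Take a Player~1 state $s$ whose only action leads to $G$, with one disturbance that goes to $B$ with probability $\eta=2^{-n}$ and back to $s$ otherwise. The value with $i$ disturbances is $1-(1-\eta)^i=\bigl(2^{ni}-(2^n-1)^i\bigr)/2^{ni}$, in lowest terms. The breaking point is about $2^n\ln(1/p)$. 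So at the layers you must traverse, the exact values have exponentially many bits. Cramer's rule on the whole unfolded system only gives the matching bound, linear in $i$.

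\textbf{The rounding alternative.} This fails for a different reason. The one-layer operator $V_{i-1}\mapsto V_i$ is monotone and sup-norm non-expansive, so rounding errors merely add up. Polynomially many bits therefore do approximate $V_{t-1}(s_0)$ to within $2^{-\mathrm{poly}}$. To decide $V_{t-1}(s_0)<1-p$, however, you need a separation bound $|V_{t-1}(s_0)-(1-p)|\ge 2^{-\mathrm{poly}}$ whenever the two differ. The only generic bound, from denominators as above, is $2^{-O(t\cdot\mathrm{poly})}$, which is doubly exponentially small. Even when the choices are identical in every layer, you are asking for the sign of an entry of an affine map iterated $t-1$ times (with $t$ in binary), minus a rational. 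That is a PosSLP-type question, and you give no separation argument for it. As written, you establish the transient comparison only for polynomially bounded $t$, which is also all the paper's argument actually covers.

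\textbf{A smaller gap, shared with the paper.} The optimal probability of reaching the zero-mean-payoff MECs may equal $1-p$ exactly yet be attainable only with unboundedly many disturbances. In that case your single reachability query cannot decide ``finitely breakable''. Your bound $K$ also relies on a strictly positive gap above $1-p$. Neither you nor the paper supplies a game analogue of the $|T^D|$ bound used in the strategy-evaluation case.
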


\section{Conclusions and Future Work}
In this work, we explored the concept of resilient strategies in stochastic systems to analyze their robustness against disturbances. 
We introduced novel formulations for resilience by considering both expected and worst-case breaking points, and refined the notion of resilience through measures such as frequency-based disturbances.
We provided algorithms for computing resilience, offering solutions for both expected and worst-case scenarios, and
highlighted the trade-offs between optimality and resilience.
\begin{new}
By introducing this mathematical framework, as well as providing proofs of properties of general theoretical interest, our work can serve as a foundation for developing practical solutions to real-world problems.
\end{new}
Future work includes extending these concepts to partially observable Markov decision processes (POMDPs), multi-agent settings, and alternative objectives beyond reachability and safety.

%%%%%%%%%%%%%%%%%%%%%%%%%%%%%%%%%%%%%%%%%%%%%%%%%%%%%%%%%%%%%%%%%%%%%%%%

%%% The acknowledgments section is defined using the "acks" environment
%%% (rather than an unnumbered section). The use of this environment 
%%% ensures the proper identification of the section in the article 
%%% metadata as well as the consistent spelling of the heading.

\begin{acks}
% If you wish to include any acknowledgments in your paper (e.g., to 
% people or funding agencies), please do so using the `\texttt{acks}' 
% environment. Note that the text of your acknowledgments will be omitted
% if you compile your document with the `\texttt{anonymous}' option.

Kush Grover was partially supported by the STEP-RL project funded by the European Research Council under GA number 101115870.
Markel Zubia and Nils Jansen were partially supported by the European Research Council Starting Grant 101077178 (DEUCE).
Debraj Chakraborty was partially supported by the National Research Foundation, Singapore, under its RSS Scheme (NRF-RSS2022-009), and the Intelligence-Oriented Verification \& Controller Synthesis (InOVationCS) project funded by the European Union under Grant Agreement number 101171844.
Jan K\v{r}etinsk'y was partially supported by the DFG project GOPro funded by the German Research Foundation (DFG) under project number 427755713, the MUNI Award in Science and Humanities (MUNI/I/1757/2021), and the Intelligence-Oriented Verification \& Controller Synthesis (InOVationCS) project funded by the European Union under Grant Agreement number 101171844.
\end{acks}

%%%%%%%%%%%%%%%%%%%%%%%%%%%%%%%%%%%%%%%%%%%%%%%%%%%%%%%%%%%%%%%%%%%%%%%%

%%% The next two lines define, first, the bibliography style to be 
%%% applied, and, second, the bibliography file to be used.
\balance
\bibliographystyle{ACM-Reference-Format} 
\bibliography{references.bib}

\clearpage
\appendix

\section{Stochastic Games with Disturbances (Section~\ref{sec:sgd})}
\label{app:induced-MDP}

\subsection{Induced MDP under a Player 1 Strategy}
\label{app:induced-MDP-transformation}
\begin{definition}
	
For a Player 1 memoryless strategy $\strategyone$ in an SGD $\game$, the induced MDP is defined as $M_\strategyone = (\bar{S}, \bar{A}, \bar{Av}, \bar{T})$, where 
\begin{itemize}
	\item $\bar{S} = \gamestatesone \cup \gamestatestwo$
	\item $\bar{A} = \gameactions \cup \gamedisturbanceactions$
	\item $\bar{Av}(s) =
			\begin{cases}
				\{\strategyone(s)\} \cup \gameavaildisturbanceactions(s) & s\in \gamestatesone \\
				\gameavailactions(s) & s\in \gamestatestwo
			\end{cases} $
	\item $\bar{T}(s,a,s') =
		\begin{cases}
			\gametransitions(s,a,s') & s\in \gamestatesone, a=\strategyone(s) \\
			\gamedisturbancetransitions(s,a,s') & s\in \gamestatesone, a\in \gameavaildisturbanceactions(s) \\
			\gametransitions(s,a,s') & s\in \gamestatestwo, a\in \gameavailactions(s)
		\end{cases}$
\end{itemize} 
\end{definition}

\label{app:induced-MDP-proof}
\theoremInducedMDPEquivalence*
\begin{proof}
	
	\textbf{Forward direction:}
	Given induced MDP $M_\strategyone$ and strategies $\strategytwo, \disturbancestrategy$, 
	$h$ maps them to a strategy $\mu$ of $M_\strategyone$ which is defined as follows: 
	\[\mu(w\cdot s) = 
		\begin{cases}
			\strategyone(w\cdot s) & s\in\gamestatesone, \disturbancestrategy(w\cdot s) = \bot \\
			\disturbancestrategy(w\cdot s) & s\in\gamestatesone, \disturbancestrategy(w\cdot s) \neq \bot \\
			\strategytwo(w\cdot s) & s\in \gamestatestwo
		\end{cases}\]
	Therefore, $h(\strategytwo, \disturbancestrategy) = \mu$.

	Note that for each finite path of $\game^{\strategyone, \strategytwo, \disturbancestrategy}_{s_0}$ there is an equivalent path in $M^{\mu}_{\pi, s_0}$ that has the same probability and reaches the same state.
	This gives us that for all paths in $\game^{\strategyone, \strategytwo, \disturbancestrategy}_{s_0}$ that reaches $B$, there are equivalent paths in $M^{\mu}_{\pi, s_0}$ that reaches $B$.
	This also shows that the probability of reaching $B$ is the same in both.

	\noindent
	\textbf{Backward direction:}
	Given a strategy $\mu$ for $M_\strategyone$, we construct a Player~2 strategy and a disturbance strategy as follows: 
	\[\strategytwo(w\cdot s) = \mu(w\cdot s)\quad ~~s\in \gamestatestwo\]
	\[\disturbancestrategy(w\cdot s)=
		\begin{cases}
			\bot & s\in\gamestatesone, \mu(w\cdot s) \in \gameavailactions(s)\\
			\mu(w\cdot s) & s\in\gamestatesone, \mu(w\cdot s) \in \gameavaildisturbanceactions(s) \\
			
		\end{cases}\]
	Again, for each path of $M_{\pi, s_0}$ there is an equivalent path in $\game^{\strategyone, \strategytwo, \disturbancestrategy}_{s_0}$ that reaches the same state with same probability.
	With the same argument as the previous case, the probability of reaching $B$ is same in both.
	Note that in this case, $h(\strategytwo, \disturbancestrategy) = \mu$, proving that $h$ is a bijection.
\end{proof}

This immediately gives us the following corollary.
\begin{corollary}
	\label{lem:finite_breakpoint_mdp_equivalence}
	Given a set of states $B\subseteq\gamestates$. 
	For each set of game strategies $(\strategyone, \strategytwo, \disturbancestrategy)$, there exist a strategy $\mu$ for $M_\strategyone$ such that $P_{\game, s_0}^{\strategyone, \strategytwo, \disturbancestrategy}(\lozenge B)$ = $P_{M_\strategyone, s_0}^{\mu}(\lozenge B)$ and vice versa.
\end{corollary}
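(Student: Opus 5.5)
The plan is to derive the statement directly from \cref{lem:induced_mdp_equivalence}, using that $\lozenge B$ is a countable disjoint union of cylinder sets.

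Throughout, $\strategyone$ is taken to be memoryless, since only then is $M_\strategyone$ defined (\cref{sec:induced-MDP}). Under this assumption, the finite paths of $\game$ that are consistent with $\strategyone$ and the finite paths of $M_\strategyone$ are literally the same sequences in $(S\times A)^*\times S$. The reason is that $M_\strategyone$ keeps exactly the action $\strategyone(s)$ together with the disturbance actions at Player~1 states, and keeps all actions at Player~2 states.

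For the forward direction, fix $(\strategyone,\strategytwo,\disturbancestrategy)$ and set $\mu:=h(\strategytwo,\disturbancestrategy)$ with $h$ from \cref{lem:induced_mdp_equivalence}.
\begin{enumerate}
\item Write $\lozenge B$ as the disjoint union, over all finite paths $\varrho=s_0a_0\dots s_i$ with $s_i\in B$ and $s_j\notin B$ for $j<i$, of the cylinder sets $\mathit{Cyl}(\varrho)$. Distinct such $\varrho$ have disjoint cylinders because each path is cut at its first visit to $B$, and there are countably many of them.
\item By countable additivity in both probability spaces,
\[
P^{\strategyone,\strategytwo,\disturbancestrategy}_{\game,s_0}(\lozenge B)=\sum_{\varrho}\prob^{\pi,\sigma,\delta}_{\game,s_0}(\varrho)
\quad\text{and}\quad
P^{\mu}_{M_\strategyone,s_0}(\lozenge B)=\sum_{\varrho}\prob^{\mu}_{M_\strategyone,s_0}(\varrho).
\]
\item \cref{lem:induced_mdp_equivalence} gives equality of the two sums term by term, so the two reachability probabilities coincide.
\end{enumerate}
Paths $\varrho$ that are inconsistent with $\strategyone$ contribute $0$ on both sides, so they can be included or omitted without affecting the argument.

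For the converse, start from a strategy $\mu$ of $M_\strategyone$. Since $h$ is a bijection, take $(\strategytwo,\disturbancestrategy):=h^{-1}(\mu)$; this is the explicit construction in the backward direction of the proof of \cref{lem:induced_mdp_equivalence}. The same cylinder decomposition then gives $P^{\mu}_{M_\strategyone,s_0}(\lozenge B)=P^{\strategyone,\strategytwo,\disturbancestrategy}_{\game,s_0}(\lozenge B)$.

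The only point needing care is the measure-theoretic step: checking that the first-hitting cylinders form a countable disjoint cover of $\lozenge B$. Once that is in place, agreement of the two measures on every cylinder forces agreement on this measurable set by additivity; uniqueness of the extension from cylinders (Carath\'eodory) also works but is more than is needed. I expect no further obstacle, since the substantive work is already contained in \cref{lem:induced_mdp_equivalence}.
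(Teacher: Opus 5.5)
Your proposal is correct and takes the same route as the paper, which presents the corollary as an immediate consequence of \cref{lem:induced_mdp_equivalence}; that lemma's proof already observes that path-wise equality of probabilities gives equal probabilities of reaching $B$ in both directions, via $h$ and its inverse. Your decomposition of $\lozenge B$ into countably many disjoint first-hitting cylinders, followed by countable additivity, simply makes explicit the step the paper leaves implicit.
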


\subsection{Converting SGDs with Finite Disturbances to SGs}

\label{app:unfolded-SG}
\begin{definition}\label{def:unfolded-SG}
	Let $\game = (\gamestates, \gamestatesone, \gamestatestwo,$ $\gameactions, \gameavailactions, \gametransitions, \gamedisturbanceactions, \gameavaildisturbanceactions, \gamedisturbancetransitions)$ be an SGD and let $k \in \langle \omega\rangle$.
	Its corresponding $k$-unfolded stochastic game is an SG $\game^{\dagger k} = (\gamestates^{\dagger k}, \gamestatesone^{\dagger k}, \gamestatestwo^{\dagger k}, \gameactions^{\dagger k}, \gameavailactions^{\dagger k}, \gametransitions^{\dagger k})$ where
	\begin{itemize}
		\item $\gamestatesone^{\dagger k} = \gamestatesone \times \langle k + 1 \rangle$.
		\item $\gamestatestwo^{\dagger k} = \gamestatestwo \times \langle k + 1 \rangle \cup \gamestatesone \times \langle k + 1 \rangle \times A$.
		\item $\gamestates^{\dagger k} = \gamestatesone^{\dagger k} \cup \gamestatestwo^{\dagger k}$.
		\item $\gameactions^{\dagger k} = \gameactions \cup \gamedisturbanceactions \cup \{\bot\}$.
		\item \[\gameavailactions^{\dagger k}(s) = 
		\begin{cases}
			\gameavailactions(\tilde s) &\text{ if } s = (\tilde s, i), \tilde s \in \gamestatesone\\
			\gameavailactions(\tilde s) &\text{ if } s = (\tilde s, i), \tilde s \in \gamestatestwo\\
			\gameavaildisturbanceactions(\tilde s) \cup \{\bot\} &\text{ if } s = (\tilde s, i, a), \tilde s \in \gamestatesone\\
		\end{cases}	
		\]
		\item $\gametransitions^{\dagger k}(s, a, s') =  \\
		\begin{cases}
			\gametransitions(\tilde s, a, \tilde s') & s = (\tilde s, i), \tilde s \in \gamestatestwo, s' = (\tilde s', i) \\
			0 &s = (\tilde s, i), \tilde s \in \gamestatestwo \\
			\gametransitions(\tilde s, a, \tilde s') &s = (\tilde s, 0), \tilde s \in \gamestatesone, s' = (\tilde s', 0) \\
			0 &s = (\tilde s, 0), \tilde s \in \gamestatesone \\
			1 &s = (\tilde s, i+1), \tilde s \in \gamestatesone, s' = (\tilde s, i+1, a) \\
			0 &s = (\tilde s, i+1), \tilde s \in \gamestatesone\\
			\gametransitions(\tilde s, \tilde a, \tilde s') &s = (\tilde s, i, \tilde a), a = \bot, s' = (\tilde s', i) \\
			0 &s = (\tilde s, i, \tilde a), a = \bot \\
			\gamedisturbancetransitions(\tilde s, \tilde a, \tilde s') &s = (\tilde s, i+1, \tilde a), \tilde a \in \gamedisturbanceactions, s' = (\tilde s', i)\\
			0 &s = (\tilde s, i, \tilde a), \tilde a \in \gamedisturbanceactions
		\end{cases}
		$
	\end{itemize}
\end{definition}

\theoremUnfoldedSGEquivalence*
\begin{proof}
	Notice that the difference between $\game$ and $\game^{\dagger k}$ is the use of the gadget shown in \Cref{fig:gadget_unfold} that converts disturbance edges into Player~2 states. 
	Given a path $\rho$ in $\game$, we can get $g(\rho) = \tilde \rho$ by doing the following:
	\begin{itemize}
		\item Each Player~2 state $s$ in $\rho$ is replaced with a tuple $(s, i)$, where $i$ is the number of disturbance edges $a \in \gamedisturbanceactions$ present in $\rho$ up until $s$, 
		\item Each Player~1 state $s$ in $\rho$ is replaced with a tuple $(s, i)$, similarly to the previous step. 
		If $s$ is followed up in $\rho$ by a player action $a \in A$, we can replace it by the action $a$, a visit to the node $(s, i, a)$, and action $\bot$. 
		Otherwise, if $s$ is followed up by a disturbance action $d \in \gamedisturbanceactions$, we must replace it by the action $\pi(a)$, a visit to node $(s, i, a)$, action $d$, and decrement the counter of the successor state by 1.
		Since $\delta$ disturbs at most $k$ times by assumption, the counter can never go below 0.
	\end{itemize}
	
	We obtain $\tilde \pi$ by just letting $\tilde \pi(\tilde \rho) = \pi(\rho)$, while $\tilde \sigma(\tilde \rho)$ is just a combination of $\sigma$ and $\delta$, depending on whether we are on a $(s, i)$ or a $(s, i, a)$ node. 
	By construction of $\game^{\dagger k}$, it is clear that $\rho$ and $\tilde \rho$ have the same probabilities under the respective policies.
\end{proof}

\section{Expected Breaking Point (Section~\ref{sec:expected-breaking-point})}
\label{app:expected-breaking-point}

\lemmaExpectedBPStrategy*
\begin{proof}
	\begin{itemize}
		\item $\expectedbreakpoint^T_{\game, \phi}(\pi)$: 
			From the cost function $C$, we get that for each run $\rho$, the number of disturbances is equal to the cost of that run, i.e.
			\[C_\rho= D^T(\rho)\]
			Therefore, the expected cost of a run $\rho$ in $M_\pi^\mu$ is equal to the expected number of disturbances.
			\[\expected_{M_\pi^\mu,s_0}(C_\rho)= \expected_{M_\pi^\mu,s_0}(D^T(\rho))\]
			\[\inf_{\mu\in U_\pi}\expected_{M_\pi^\mu,s_0}(C_\rho)= \inf_{\mu\in U_\pi}\expected_{M_\pi^\mu,s_0}(D^T(\rho))\]
			\[\expectedbreakpoint^T_{\game,\objective}(\strategyone) = \inf_{\mu\in U_\pi} TR(M_\pi^\mu, s_0, C)\]

			Hence, the minimum expected cost to violate $\phi$ is equal to the minimum expected number of disturbances required to violate $\phi$. 
			
		\item $\expectedbreakpoint^F_{\game, \phi}(\pi)$: 
			This is similar to the previous case, but here we use the equivalence of mean payoff on a run to the frequency of disturbances on that run and get the following:
			\[\expectedbreakpoint^F_{\game,\objective}(\strategyone) = \inf_{\mu\in U_\pi} MP(M_\strategyone^\mu,s_0, C)\qedhere\]
	\end{itemize}
\end{proof}

\subsection{{\new{Solving Stochastic Shortest Path Problems}}}
\label{app:SSP}
In its usual setting, the SSP assumes that the probability of reaching the target states is 1.
This assumption is required because if the run reaches any other end component with any positive probability where the minimum cost is infinite, the total expected cost becomes infinite. 
To handle cases where probability to reach the required states is not one, several other criteria have been explored in the literature, for example, \emph{SSP with Unavoidable Dead Ends (iSSPUDE)}~\cite{iSSPUDE} and \emph{Minimum Cost Given Maximum Probability (MCMP)} \cite{MCMP}.
In iSSPUDE, the probability to reach the target is maximized with higher priority and the cost is minimized later.
However, in this setting, only the costs for paths that reach the target states count. 
Other paths, that do not reach the target states are not counted, or equivalently they are assigned zero cost.
In MCMP, the priority is still maximizing the probability however the cost for runs that do not reach the target state is also counted up until the point when it is not possible to reach the target anymore. 
The intuition behind this is that once the run reaches these end components, the play is stopped and more cost does not accumulate.
This is handled by allowing a \emph{giving up} action that ends the run once it reaches a state that cannot reach the target anymore, making sure that no more cost is accumulated.
Our setting is quite similar to MCMP, since violating $\phi$ is a hard constraint and therefore prioritized. 
Also, the giving up action is always present since there is always a non-disturbance action with zero cost from every state, defined by the Player~1 strategy $\pi$. 
This can ensure zero cost runs once the run reaches end components from which it is not possible to reach the target anymore.

To solve the MCMP case, solutions using Linear Programs have been explored in \cite{SSP-tradeoff}. 
To write the linear program, we first need to define variables $x_{s,a}$ that represents the probability of picking action $a$ from state $s$. 
These variables are used to define the flow of probability mass. 
The constraints of the linear program are defined using flow equations and the objective is to minimize the cost function. 
The inward and outward flows of a state are defined as: \[in(s) = \sum_{s'\in \bar{S}, a\in \bar{A}} x_{s',a}P(s,a,s') \quad \quad out(s) = \sum_{a\in \bar{Av}(s)} x_{s,a}\].

The linear program is defined as follows:
\begin{align}
	\min_{x_{s,a}} & \sum_{s\in \bar{S}, a\in \bar{A}} x_{s,a}C(s,a) &\\
	\text{s.t.} & x_{s,a} \geq 0 & \forall s\in \bar{S}, a\in\bar{A} \\
	& out(s) - in(s) \leq 0 & \forall s\in \bar{S}\setminus(B\cup\{s_0\})\\
	& out(s_0) - in(s_0) \leq 1 & \\
	& \sum_{s_b\in B} in(s_b)\geq 1-p &
\end{align}

{\new{Note that this gives the values of $x_{s,a}$ in $[0,1]$ which are the probability of picking action $a$ from state $s$. Thus, the resulting policy is randomized.}}

\subsection{Weighted MEC Quotient}
\label{app:weighted_MEC_quotient}
\begin{figure*}[h]
	\centering
	\includegraphics[scale=0.35]{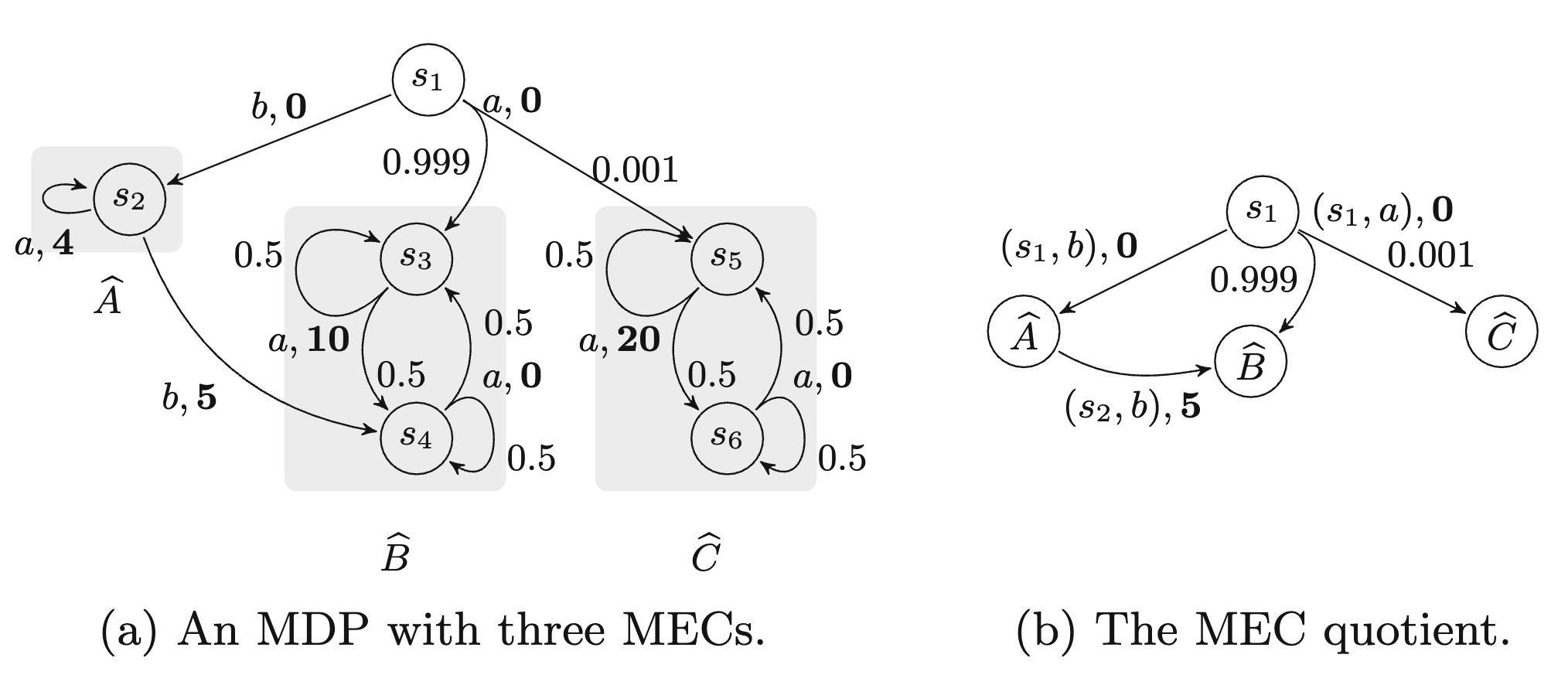}
	\caption{An example of how the MEC quotient is constructed. By $a, \mathbf{r}$ we denote that the action $a$ yields a cost of $\mathbf{r}$}
	\label{fig:MEC_quotient}
\end{figure*}

Let $M$ be an MDP with MECs $MEC(M) = {(Y_1, Z_1), \dots, (Y_l, Z_l)}$.
Further, define $MEC_S =\bigcup_{i=1}^k Y_i$ as the set of all states contained in some MEC.
The MEC quotient of $M$ is defined as the MDP $\hat{M} = (\hat{S}, \hat{A}, \hat{Av}, \hat{T})$, where:

\begin{itemize}
	\item $\hat{S} = S\setminus MEC_S \cup \{\hat{s_1}, \dots \hat{s_l}\}$,
	\item $\hat{A} = \{(s,a)\mid s\in \gamestates, A\in \gameavailactions(s)\}$,
	\item $\hat{Av}(s) = \\
		\begin{cases}
			\{(s,a)\mid a\in Av(s)\} & \forall s \in S\setminus MEC_S \\
			\{(s, a) \mid s \in Y_i, a \in Av(s)\setminus Z_i\} & \forall 1 \leq i\leq l
		\end{cases}$,
	\item $\hat{T}(\hat{s}, (s,a), \hat{s'}) = 
		\begin{cases}
			\sum_{s'\in Y_j} T(s,a,s') & \hat{s'}=\hat{s_j} \\
			T(s,a,\hat{s'}) & \hat{s'}\in S\setminus MEC_S
		\end{cases}$
\end{itemize}
If $s_0$ is in some $Y_i$, then $\hat{s_i}$ is the initial state of $\hat{M}$, otherwise $s_0$ is the initial state of $\hat{M}$.

\begin{figure}[h]
	\centering
	\includegraphics[scale=0.4]{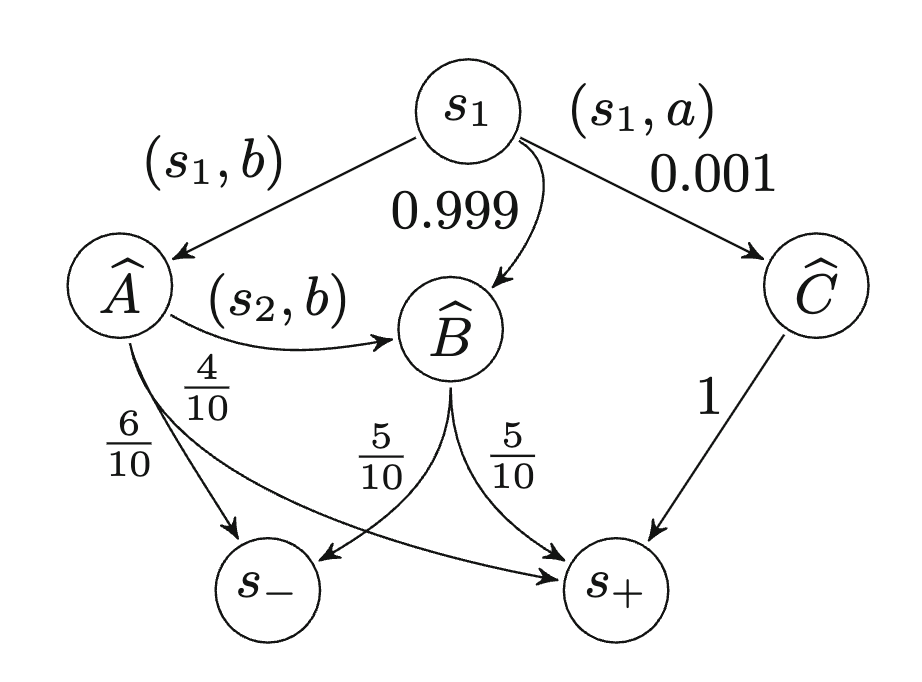}
	\caption{The weighted MEC quotient of the MDP in \Cref{fig:MEC_quotient}(a) and function $f =\{\hat{A} \rightarrow \frac{4}{10}, \hat{B} \rightarrow \frac{5}{10}, \hat{C} \rightarrow \frac{10}{10} \}$. 
	$stay$ action labels omitted for readability.}
\end{figure}

Let $\hat{M}$ bet the MEC quotient of $M$ and let $MEC_{\hat{S}} = {\hat{s_1},\dots \hat{s_n}}$ be the set of collapsed states. 
Further, let $f : MEC_{\hat{S}} \rightarrow [0, 1]$ be a function assigning a value to every collapsed state. 
We define the weighted MEC quotient of $M$ as the MDP $M^f = (S^f,\hat{A} \cup \{stay\}, Av^f , T^f)$, where
\begin{itemize}
	\item $S^f = \hat{S} \cup \{s_+\}$,
	\item $Av^f(\hat{s})$ is defined as
		\[\forall \hat{s}\in \hat{S}.\ Av^f(\hat{s})=
		\begin{cases}
			\hat{Av}(\hat{s}) \cup \{stay\}& \hat{s}\in MEC_{\hat{S}} \\
			\hat{Av}(\hat{s}) & \text{otherwise}
		\end{cases}\]
		\[Av^f(s_+)=\emptyset\]
	\item $T^f$ is defined as
		\begin{align}
		\forall \hat{s}\in \hat{S}, \hat{a}\in \hat{A}&. T^f(\hat{s}, \hat{a}) = \hat{T}(\hat{s},\hat{a}) \\
		\forall \hat{s_i}\in MEC_{\hat{S}}&. T^f(\hat{s_i}, stay, s_+) = 1 \\
		\end{align}
\end{itemize}

The new cost function is defined as $C^f(\hat{s_i}, \{stay\}) = MP(Y_i,Z_i)$ for all $\hat{s_i}\in MEC_{\hat{S}}$ and $0$ otherwise. 
Here $MP(Y_i,Z_i)$ is the minimum mean payoff of the MEC $(Y_i,Z_i)$.

\begin{lemma}
	\label{lem:weighted_MEC_quotient}
	The minimum expected cost of reaching $B\cup \{s_+\}$ in the weighted MEC quotient $M^f$ is equal to the minimum mean payoff of violating $\phi$ in the MDP $M$.

\end{lemma}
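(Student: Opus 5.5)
We prove the two inequalities separately, working in the induced MDP $M = M_\strategyone$ (with Player~2 choosing both $\strategytwo$ and the disturbances, justified by \Cref{lem:induced_mdp_equivalence}). Throughout we are in Case~3 of the reachability analysis. There $\neg\phi$ means avoiding $G$ with probability $\geq 1-p$, and by construction almost every run avoiding $G$ eventually stays forever in some end component inside $R\cup B$. We use the standard fact that almost every run of an MDP eventually remains in a single end component and visits its state-action pairs infinitely often. Call that end component's MEC the \emph{limit MEC} of the run.

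\textbf{Lower bound ($\geq$).} Fix a strategy $\mu$ of $M$ violating $\phi$. The mean payoff of a run is prefix-independent, so we may decompose the expectation by limit MEC:
\[
\expected^\mu(\mathsf{MP}) = \sum_i \prob^\mu(\text{limit MEC} = Y_i)\cdot \expected^\mu(\mathsf{MP} \mid \text{limit MEC}=Y_i).
\]
On runs trapped in $(Y_i,Z_i)$, the mean payoff is at least the minimum mean payoff $MP(Y_i,Z_i)$ almost surely. This is the standard fact that no strategy confined to an end component beats its optimal value pathwise almost surely; we use it in its expected form, which suffices here. For MECs in $B$ this value is $0$. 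Runs whose limit MEC intersects $G$ contribute a nonnegative amount, and we simply drop them.

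We then define a strategy $\hat\mu$ in $M^f$. It mimics $\mu$ outside MECs. Upon entering a collapsed state $\hat s_i$, it uses exiting actions with the probabilities with which $\mu$ eventually leaves $Y_i$, and it plays $stay$ with the probability that $Y_i$ is the limit MEC. This is the standard quotient correspondence from \cite{ashok17meanpayoff}, which needs the usual care with histories; we use randomized strategies on the quotient to handle it. Under $\hat\mu$, the probability of reaching $B\cup\{s_+\}$ equals the probability of the limit MEC lying in $R\cup B$, which is $\geq 1-p$. Its cost is $\sum_i \prob(\text{limit}=Y_i)\, MP(Y_i,Z_i) \leq \expected^\mu(\mathsf{MP})$. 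Hence the SSP value is at most the infimum over violating $\mu$.

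\textbf{Upper bound ($\leq$).} Take an optimal (possibly randomized, memoryless) SSP strategy $\hat\mu$ on $M^f$. Lift it to $M$ as follows. Outside MECs, copy $\hat\mu$. Inside $Y_i$, to realize an exit action $(s,a)$ chosen by $\hat\mu$, navigate within $Y_i$ to $s$ using only actions of $Z_i$; this happens almost surely and with finitely many steps, so it adds zero mean payoff. To realize $stay$, switch to a memoryless strategy achieving $MP(Y_i,Z_i)$ inside the MEC. For $Y_i\subseteq B$, this strategy stays using $\strategyone$'s action with cost $0$. Reaching $B\cup\{s_+\}$ with probability $\geq 1-p$ then translates to remaining in $R\cup B$, and hence avoiding $G$, with probability $\geq 1-p$. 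So the lifted strategy lies in $U_\pi$, and its expected mean payoff equals the SSP cost. Transient visits and navigation do not affect the liminf average, and prefix-independence gives the equality.

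\textbf{Main obstacle.} The hardest step is the lower-bound projection from an arbitrary, history-dependent $\mu$ to a quotient strategy. It must preserve the reaching probability while bounding the conditional mean payoff per limit MEC, and the bound relies on the minimum mean payoff of an end component holding almost surely pathwise, not just in expectation. I would first try to prove the conditional bound via the flow/frequency LP characterization of mean payoff in MDPs restricted to each end component, then use Fatou's lemma to pass from $\liminf$ of expectations to expectations of $\liminf$.
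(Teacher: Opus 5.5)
Your proposal is correct and takes essentially the same route as the paper. The paper's proof invokes the strategy correspondence between $M$ and its weighted MEC quotient from \cite{ashok17meanpayoff}: it lifts quotient strategies together with optimal in-MEC strategies to $M$, then minimizes over $U_\pi$. You additionally make explicit the projection direction and the bookkeeping for the threshold $1-p$, which the paper leaves to the citation.

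One caveat on your ``main obstacle''. The almost-sure pathwise bound for runs trapped in $(Y_i,Z_i)$ comes from a martingale strong law: for almost every run, every limit point of the empirical state-action frequencies satisfies the flow equations of an end component. It does not come from Fatou, which only gives $\expected[\liminf]\leq\liminf\expected$. Fatou therefore lets you transfer your pathwise bound to the paper's $\liminf$-of-expectations $\mathsf{MP}$, but it cannot derive that bound.
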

\begin{proof}
	Correctness of this lemma follows from the correctness of the transformation in~\cite{ashok17meanpayoff}.
	We can show that for every strategy $\mu^f$ in $M^f$ and strategies for MECs $\mu^{Y_i}$, there is a strategy $\mu$ in $M$ that has the mean payoff equal to the expected cost of reaching $B\cup \{s_+\}$ in $M^f$.
	Using this equivalence, we can take minimum over all strategies $\mu \in U_\pi$ that gives us the expected frequency breaking point.

\end{proof}

\subsection{Computing the Expected Breaking Point (\Cref{sec:expected_breaking_point_strategy})}

\begin{algorithm}[ht]
	\small
	\caption{Computing Expected Breaking Point of a Player~1 Strategy}
	\label{alg:expected}
	\begin{algorithmic}[1]
		\State \textbf{Input:} $\game, \phi, \strategyone$
		\State \textbf{Output:} $(\expectedbreakpoint^T_{\game, \phi}(\strategyone), \expectedbreakpoint^R_{ \game, \phi}(\strategyone))$
		\State $M_\strategyone \gets InducedMDP(\game, \strategyone)$ 
		
		\If{$\phi = P_{> p} (\square \neg B)$} \Comment{Safety}
			\State $B\gets$ Input from $\phi$ 
		\ElsIf{$\phi = P_{> p} (\lozenge \targetstates)$} \Comment{Reachability}
			\State $B\gets $ Compute Using \cref{eq:states_B}
			
		\EndIf

		\State $V_{s} \gets$ Reachability-LP($M_\strategyone, P_{\geq 1-p} (\lozenge B)$)
		\If{$V_{s_0} \geq 1-p$}
			\State $\expectedbreakpoint^T \gets$ SSP-LP($M_\strategyone, B, 1-p$)
			\State \Return $(\expectedbreakpoint^T, 0)$
			
		\ElsIf{$\phi = P_{> p} (\square \neg B)$} \Comment{safety $\implies$ unbreakable}
			\State \Return $(\natural, \natural)$
		\Else \Comment{reachability $\implies$ frequency}
		
			\State for all $s\in B, fr(s)\gets 0$
			\State $R\gets$ Compute $R$ Using \cref{eq:states_R}
			\State $V'_{s_0} \gets$ Reachability-LP($M_\strategyone, P_{\geq 1-p} (\lozenge (B\cup R))$)
			\If{$V'_{s_0} < 1-p$} \Comment{unbreakable with $R$ too}
				\State \Return $(\natural, \natural)$
			\EndIf
			\For{$r\in R$}
				\State $fr(r)\gets$ MinMeanPayoff($r$)
			\EndFor
			\State $M_\strategyone^q \gets$ ComputeWeightedQuotientMDP($M_\strategyone,R,fr$)
			\State $s_+ \gets$ NewTerminalState($M_\strategyone^q$)
			\State $\expectedbreakpoint^F \gets$ SSP-LP($M_\strategyone^q, B\cup \{s_+\}, 1-p$)
			\State \Return $(\omega, \expectedbreakpoint^F)$
		\EndIf
	\end{algorithmic} 
\end{algorithm}

\theoremExpectedBPStrategyReach*
\begin{proof}

	\textbf{Safety:} The algorithm terminates because all the steps terminate and there are no infinite loops present.
	For the correctness of this case, we need to show that if the maximum probability of reaching $B$ is $<1-p$, it is not breakable and if it is $\geq 1-p$, the breaking point is finite and equal to the solution of SSP.
	
	\begin{itemize}
		\item \textbf{Case 1: }$V_{s_0}\geq 1-p$
		
		In this case, the expected cost to reach $B$ with $\geq1-p$ probability becomes finite. 
		Minimum expected cost here can be computed using the SSP for MDP $M_\pi$ for the states $B$ giving us the transient breaking point of $\pi$.	
		Since the transient breaking point is finite, the frequency breaking point automatically becomes 0.	
		\item \textbf{Case 2: }$V_{s_0}<1-p$ \\
		Note that in this case, it is not even possible for Player~2 to reach $B$ with the required probability, therefore the strategy is not breakable and the set of strategies $U_\pi$ is empty.
	\end{itemize}
	\textbf{Reachability:} Here too, the algorithm terminates because all the steps terminate and there are no infinite loops present.
	We first compute the set of winning states $B$ for Player~2 using \cref{eq:states_B}. 
	Note that Player~1 cannot reach $G$ once the play reaches $B$.
	\begin{itemize}
		\item \textbf{Case 1: }$V_{s_0}\geq 1-p$ \\
		If the probability of reaching $B$ is $\geq 1-p$, the expected cost to reach $B$ with $\geq1-p$ probability becomes finite.
		Minimum expected cost here can be computed using the SSP for MDP $M_\pi$ for the states $B$ giving us the transient breaking point of $\pi$.		
		Since the transient breaking point is finite, the frequency breaking point automatically becomes 0.	
		\item \textbf{Case 2: }$V_{s_0}<1-p$ \\
		The probability of eventually reaching an end component is always 1.
		\[P(\lozenge (G \cup B \cup R)) = 1\]
		\[P(\lozenge G) + P(\lozenge (B \cup R)) = 1\]
		\[1 - P(\lozenge G) = P(\lozenge (B \cup R))\]
		\[P(\square \neg G) = P(\lozenge (B \cup R))\]

		If the probability of reaching $R\cup B$ is $<1-p$, it means that no strategy would violate $\phi$ and therefore $U_\strategyone=\emptyset$. The algorithm correctly returns $\natural, \natural$ in this case.
		If it is $\geq 1-p$, we need to compute the frequency breaking point which is equivalent to the minimum expected mean payoff.
		For a play to stay forever in a MEC, the minimum frequency of disturbances must be equal to the minimum mean payoff of that MEC.
		Using \cref{lem:weighted_MEC_quotient}, we can compute the minimum expected mean payoff of violating $\phi$ in the MDP $M_\strategyone$ by computing the SSP for the weighted MEC quotient $M_\strategyone^q$.\qedhere

	\end{itemize}
\end{proof}

\subsection{Optimal Strategy: Expected Breaking Point (\Cref{sec:expected-synthesis})}
\label{app:expected-SG}

\subsubsection{Transformation of SGD to SG}
We start by defining transformation of SGD to an SG as shown in \cref{fig:gadget_expected_breaking_point}. The SG is defined as $\newgame=(\newgamestates, \newgamestatesone, \newgamestatestwo, \newgameactions, \newgameavailactions,\newgametransitions)$ where 
\begin{itemize}
	\item $\newgamestatesone= \gamestatesone$,
	\item $\newgamestatestwo = \gamestatestwo\cup \{(s,a)\mid s\in\gamestatesone, a\in \gameavailactions(s)\}$,
	\item $\newgamestates=\newgamestatesone\cup\newgamestatestwo$, 
	\item $\newgameactions=\gameactions\cup \{\bot\}$,
	\item $\newgameavailactions$ is defined as\\
		$\begin{cases}
			\newgameavailactions(s)=\gameavailactions(s) & s\in\gamestates \\
			\newgameavailactions((s,a))=\{\bot,d\} & s\in\gamestatesone, a\in \gameavailactions(s)
			
		\end{cases}$
	\item $\newgametransitions$ is defined as\\
		$\begin{cases}
			\newgametransitions(s,a)=\gametransitions(s,a) & s\in\gamestatestwo, a\in\gameavailactions(s)\\
			\newgametransitions(s,a,(s,a))=1 & s\in\gamestatesone, a\in\gameavailactions(s)\\
			\newgametransitions((s,a),\bot,s') = \gametransitions(s,a,s') & s\in\gamestatesone, a\in\gameavailactions(s), s'\in\gamestates \\
			\newgametransitions((s,a),d,s') = \gametransitions(s,d,s') & s\in\gamestatesone, a\in\gameavaildisturbanceactions(s), s'\in\gamestates

		\end{cases}$
\end{itemize}

\begin{lemma}\label{lem:sg_equivalence} 
	There exist bijective functions $f$ and $g$ where 
	$f$ maps triplets $(\pi, \sigma, \delta)$ of strategies in $\game$ to pairs $(\tilde \pi, \tilde \sigma)$ of strategies in $\newgame$, 
	and $g$ maps paths $\rho$ in $\game$ to paths $\tilde \rho$ in $\newgame$, such that
	\[
	\forall \rho \in (S \times A)^* \times S, \quad \prob^{\pi, \sigma, \delta}_{\game, s_0}(\rho) = \prob^{\tilde \pi, \tilde \sigma}_{\newgame, s_0}(\tilde \rho)
	\]
\end{lemma}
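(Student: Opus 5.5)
The plan mirrors the proof of \Cref{thm:unfolded_sg_equivalence}, with the counter removed. The only structural difference between $\game$ and $\newgame$ is the gadget of \cref{fig:gadget_expected_breaking_point}. It replaces each Player~1 move at $s\in\gamestatesone$ by two moves: a deterministic move $a$ to $(s,a)$, followed by a Player~2 choice between $\bot$ and a disturbance $d$.

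\textbf{Path map $g$.} Given a finite path $\rho$ in $\game$, I would define $g(\rho)=\tilde\rho$ step by step. Player~2 steps $s\,a\,s'$ are copied unchanged. A Player~1 step $s\,a\,s'$ with $a\in\gameavailactions(s)$ becomes $s\,a\,(s,a)\,\bot\,s'$. A disturbed step $s\,d\,s'$ with $d\in\gameavaildisturbanceactions(s)$ becomes $s\,a\,(s,a)\,d\,s'$, where $a$ is the action Player~1 intended at that point. That action has to be recorded, so paths in $\game$ should be read as carrying Player~1's intended action alongside the disturbance; equivalently, the bijection is relative to the fixed $\pi$.

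\textbf{Strategy map $f$.} Set $\tilde\pi(\tilde\rho\cdot s)=\pi(\rho\cdot s)$ on Player~1 states. Define $\tilde\sigma$ piecewise: on original Player~2 states, $\tilde\sigma(\tilde\rho\cdot s)=\sigma(\rho\cdot s)$; on gadget states $(s,a)$, $\tilde\sigma(\tilde\rho\cdot s\,a\,(s,a))=\delta(\rho\cdot s)$, with $\bot$ mapped to $\bot$.

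\textbf{Inverse.} For injectivity and surjectivity I would give the inverse explicitly. Every path of $\newgame$ starting in $\newgamestates\setminus\{(s,a)\}$ and ending in an original state decomposes uniquely into gadget blocks, so $g^{-1}$ is obtained by collapsing each block. Likewise any pair $(\tilde\pi,\tilde\sigma)$ splits uniquely into $\pi$ (its restriction to $\gamestatesone$), $\sigma$, and $\delta$ (read off at gadget states).

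\textbf{Probability equality.} I would argue by induction on the length of $\rho$, factoring the probability of each step. For an undisturbed Player~1 step the probability in $\game$ is $\pi(\rho s)(a)\cdot\delta(\rho s)(\bot)\cdot T(s,a,s')$. In $\newgame$ the corresponding block has probability $\tilde\pi(\tilde\rho s)(a)\cdot 1\cdot\tilde\sigma(\ldots)(\bot)\cdot T(s,a,s')$, and the two agree by definition. For a disturbed step both sides equal $\pi(a)\,\delta(d)\,T^D(s,d,s')$. Player~2 steps are identical.

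\textbf{Expected obstacle.} The main difficulty is the disturbance transition, for two reasons. First, the paper's definition of $\newgametransitions((s,a),d,\cdot)$ must be read as $\gamedisturbancetransitions(s,d,\cdot)$ with $d\in\gameavaildisturbanceactions(s)$. Second, when $\pi$ is randomized, the intended action $a$ is not visible in a $\game$-path. I would handle this by summing over $a$: the probability of $s\,d\,s'$ in $\game$ equals the sum over $a$ of the probabilities of the $\newgame$-blocks $s\,a\,(s,a)\,d\,s'$, which is $\sum_a\pi(a)\,\delta(d)\,T^D(s,d,s')=\delta(d)\,T^D(s,d,s')$.

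This requires $\delta$ to depend on the history rather than on $a$, which holds since $\delta$ cannot see the intended action. For the exact bijection claimed, I would instead restrict attention to pure $\pi$, or enrich histories with the intended actions, as noted above. Once that bookkeeping is fixed, the remaining steps are routine, exactly as in \Cref{thm:unfolded_sg_equivalence}.
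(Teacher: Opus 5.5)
Your proposal follows the paper's proof essentially step for step: the same gadget-wise path map $g$, the same $f$ (copy $\pi$ and $\sigma$, and let $\tilde\sigma$ at the gadget state $(s,a)$ play $\delta$'s choice), inversion by collapsing gadget blocks, and stepwise matching of probabilities. It is also more careful than the paper, which writes the non-state $(s_i,a_i)$ for a disturbance action $a_i$, and silently assumes pure $\pi$ by using $a=\pi(s)$ in its backward direction. The one caveat is that $\tilde\sigma$ at $(s,a)$ may condition on the current intended action $a$ while $\delta$ cannot, so for randomized $\pi$ enriching histories alone does not restore a bijection; your pure-$\pi$ restriction, with strategies identified up to their behaviour at gadget states $(s,a)$ for actions $a$ that $\tilde\pi$ does not choose, is the version that actually goes through, just as in the paper.
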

\begin{proof}
	
	We define a bijection $g$ that maps a run $\rho$ in $\game$ to a run $\tilde{\rho}$ in $\newgame$.

	We construct $\tilde{\rho}$ by using the following rules for all $i\geq 0$: 
	\begin{itemize}
		\item If $s_i\in \gamestatesone$ and $a_i\in \gameavailactions(s_i)$: replace $s_i,a_i$ by $s_i,a_i,(s_i,a_i),\bot$
		\item If $s_i\in \gamestatesone$ and $a_i\in \gameavaildisturbanceactions(s_i)$: replace $s_i,a_i$ by $s_i,a_i,(s_i,a_i),d$
		
	\end{itemize}
	
	Note that this is a bijection, because every run in $\newgame$ can be mapped back to a run in $\game$ by simply removing the additional states $(s_i,a_i)$ and their following actions.

	\textbf{Forward direction:}
	For given $\strategyone, \strategytwo, \disturbancestrategy$ in $\game$, we construct equivalent strategies $\tilde{\strategyone}, \tilde{\strategytwo}$ for $\newgame$ under which every path has the same probability. 
	For some history $w$ in $\game$, let $\tilde{w}$ be the corresponding history in $\newgame$.

	$\tilde{\strategyone}(\tilde{w}\cdot s):=\strategyone(w\cdot s)$ for $s\in \newgamestatesone$.

	$\tilde{\strategytwo}$ is defined as\\
	$\begin{cases}
		\tilde{\strategytwo}(\tilde{w} \cdot s) = \strategytwo(w\cdot s) & s\in\gamestatestwo  \\
		\tilde{\strategytwo}(\tilde{w}\cdot (s,a)) = d & s\in \gamestatesone, a\in \gameavailactions(s), \disturbancestrategy(w\cdot s) \neq \bot \\
		\tilde{\strategytwo}(\tilde{w}\cdot (s,a)) = \bot & s\in \gamestatesone, a\in \gameavailactions(s), \disturbancestrategy(w\cdot s) = \bot \\

	\end{cases}
	$
	Note that the probability of any path under these strategies is the same.

	\textbf{Backward direction:}
	Given a pair of strategies $\tilde{\strategyone}, \tilde{\strategytwo}$ in $\newgame$, we construct equivalent strategies $\strategyone, \strategytwo$ for $\game$ under which every path has the same probability.
	For some history $\tilde{w}$ in $\newgame$, let $w$ be the corresponding history in $\game$.

	$\strategyone(w\cdot s):=\tilde{\strategyone}(\tilde{w}\cdot s)$ for $s\in \gamestatesone$.

	$\strategytwo(w\cdot s):=\tilde{\strategytwo}(\tilde{w}\cdot s)$ for $s\in \gamestatestwo$.

	$\disturbancestrategy(w\cdot s):=\\
	\begin{cases}
	\bot & s\in \gamestatesone, a=\strategyone(s), \tilde{\strategytwo}(\tilde{w}\cdot (s,a) = \bot) \\
	\tilde{\strategytwo}(\tilde{w}\cdot (s,a)) & s\in \gamestatesone, a=\strategyone(s), \tilde{\strategytwo}(\tilde{w}\cdot (s,a) = d)
	\end{cases}	$
	Note that the probability of any path under these strategies is the same and that these two constructions are inverses of each other proving that a bijection exists.

\end{proof}

This gives us the following corollary.
\begin{restatable}{corollary}{corollarySGEquivalence}
	
		For any set of states $A$, and strategies $\strategyone, \strategytwo, \disturbancestrategy$,
		$\prob^{\strategyone, \strategytwo, \disturbancestrategy}_{\game,s_0}(\lozenge A) = \prob^{\tilde{\strategyone}, \tilde{\strategytwo}}_{\newgame,s_0}(\lozenge A)$.
	\end{restatable}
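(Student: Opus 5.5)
The corollary follows from \Cref{lem:sg_equivalence} once we pass from probabilities of individual finite paths to probabilities of the event $\lozenge A$. I read $(\tilde{\strategyone}, \tilde{\strategytwo})$ as $f(\strategyone, \strategytwo, \disturbancestrategy)$, the image under the bijection $f$ from that lemma. Throughout, $A$ is a set of states of $\game$, so $A \subseteq \gamestates \subseteq \newgamestates$; the event $\lozenge A$ in $\newgame$ is read with the same set, and the intermediate states $(s,a)$ never lie in $A$.

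\textbf{Step 1: write $\lozenge A$ as a disjoint union of cylinders.} In $\game$, take the set $\Pi_A$ of finite paths $\rho = s_0 a_0 \dots s_n$ with $s_n \in A$ and $s_j \notin A$ for $j<n$. These are the minimal prefixes, so their cylinder sets are pairwise disjoint. Their union is exactly $\lozenge A$. Hence
\[
\prob^{\strategyone,\strategytwo,\disturbancestrategy}_{\game,s_0}(\lozenge A)=\sum_{\rho\in\Pi_A}\prob^{\strategyone,\strategytwo,\disturbancestrategy}_{\game,s_0}(\rho),
\]
where $\prob(\rho)$ denotes the probability of the cylinder of $\rho$, as in the lemma. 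Define $\tilde\Pi_A$ in $\newgame$ in the same way; we get the analogous decomposition of $\prob^{\tilde{\strategyone},\tilde{\strategytwo}}_{\newgame,s_0}(\lozenge A)$.

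\textbf{Step 2: $g$ maps $\Pi_A$ bijectively onto $\tilde\Pi_A$.} The map $g$ only inserts an intermediate state $(s_i,a_i)$ after each Player~1 state and otherwise keeps the original states in order. So $\rho$ ends in $A$ and avoids $A$ before its end if and only if $g(\rho)$ does, using that no state $(s,a)$ is in $A$.

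We must also check surjectivity. A minimal $A$-reaching path in $\newgame$ ends at a state of $A$, which is an original state. It is therefore a path with complete gadgets, so it equals $g(\rho)$ for some $\rho \in \Pi_A$.

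\textbf{Step 3: combine.} By \Cref{lem:sg_equivalence}, each term $\prob_{\game}(\rho)$ equals $\prob_{\newgame}(g(\rho))$. Reindexing the sum via the bijection of Step~2 gives equality of the two countable sums, which proves the corollary. The converse direction, starting from strategies in $\newgame$, follows in the same way by applying $f^{-1}$.

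The only delicate point is Step~2, namely the bookkeeping that paths of $\newgame$ truncated at an intermediate state $(s,a)$ never occur in $\tilde\Pi_A$. This holds because $A$ contains only original states, so no minimal $A$-reaching path can end at $(s,a)$. Everything else is σ-additivity of the path measures.
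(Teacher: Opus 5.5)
Your proposal is correct and is the argument the paper leaves implicit when it says the corollary follows directly from Lemma~\ref{lem:sg_equivalence}: writing $\lozenge A$ as the disjoint union of cylinders of minimal $A$-reaching prefixes and transporting them along $g$ is the standard way to lift path-wise equality to the reachability event. One piece of bookkeeping in Step~2 is missing, and the lemma itself also glosses over it: a disturbed step $s\,d\,s'$ of $\game$ has a counterpart $s\,a\,(s,a)\,d\,s'$ in $\newgame$ for every intended action $a\in\gameavailactions(s)$, so $g$ is not onto $\tilde\Pi_A$; for pure $\strategyone$ the missed paths have probability zero, but for randomized $\strategyone$ they carry mass, so you should replace the bijection by the projection $\tilde\Pi_A\to\Pi_A$ that forgets intended actions and sum over its fibres, which works since $\sum_a\strategyone(w\cdot s)(a)=1$ and $\tilde{\strategyone},\tilde{\strategytwo}$ depend on histories only through their projections.
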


This immediately gives us the following lemma.
\theoremExpectedBPSynthesis*

\subsubsection{Algorithm to Synthesize Optimal Strategies w.r.t Expected Breaking Point and its Correctness}

\begin{algorithm}[h]
	\caption{Compute Optimal Strategy w.r.t Expected Breaking Point}
	\label{alg:expected-synthesis}
	\begin{algorithmic}[1]
		\State \textbf{Input:} $\game, \phi$
		\State \textbf{Output:} $\expectedbreakpoint_{\game, \phi}$
		\State $\newgame \gets$ \textsc{ExpectedSG}($\game$) 

		\If{$\phi = P_{> p} (\square \neg B)$} \Comment{Safety}
			\State $B\gets$ Input from $\phi$ 
		\ElsIf{$\phi = P_{> p} (\lozenge \targetstates)$} \Comment{Reachability}
			\State $B\gets $ Compute Player~2 winning states
		\EndIf

		\State $V_{s} \gets$ ReachabilityQP($M_\strategyone, P_{\geq 1-p} (\lozenge B)$) 
		
		\If{$V_{s} < 1-p$} \Comment{breaking not possible}
			\State \Return $(\natural, \natural)$
		\Else
			\State $H\gets \emptyset$
			\State $Z\gets \mathsf{MEC}(\newgame\cap B$)
			\For{$z\in Z$}
				\State $fr(z)\gets$ MinMeanPayoff($z$)
				\If{$fr(z) = 0$}
					\State $H\gets H\cup z$
				\EndIf
			\EndFor
			\State $V^H_{s} \gets$ ReachabilityQP($M_\strategyone, P_{\geq 1-p} (\lozenge H)$) 
			\If{$V^H_{s} \geq 1-p$} \Comment{breaking with finitely many steps possible}
				\State $\expectedbreakpoint^T \gets$ SSP-SG($\newgame, H, 1-p$)
				\State \Return $(\expectedbreakpoint^T, 0)$
			\EndIf
			\State $\newgame^f \gets$ ComputeWeightedQuotientSG($\newgame,fr$)
			\State $s_+ \gets$ NewTerminalState($\newgame^f$)
			\State $\expectedbreakpoint^F \gets$ SSP-SG($\newgame^f, s_+, 1-p$)

			\State \Return $(\omega, \expectedbreakpoint^F)$
			
		\EndIf

	\end{algorithmic} 
\end{algorithm} 

\theoremExpectedBPSReach*

\begin{proof}
	
Termination of \Cref{alg:expected-synthesis} is guaranteed because all the steps terminate and there are no infinite loops present.
To prove correctness of \Cref{alg:expected-synthesis}, we analyze the two cases of safety and reachability separately.

\paragraph*{Safety:}
Based on the probability of reaching $B$ there are two cases.
\begin{itemize}
	\item \textbf{Case 1: }$V_{s_0}\geq 1-p$ \\
	In this case, the expected cost to reach $B$ with $\geq1-p$ probability becomes finite.
	To compute the minimum expected cost to reach $B$, we use the SSP for SG $\newgame$ for states $B$.
	The algorithms suggested in~\cite{patek1999stochastic}, namely, VI and PI only converge in the limit to the true value. Therefore, we can only compute an approximation of the transient breaking point.
	However, since the expected cost is finite, we can compute the frequency breaking point as 0.
	Note that as better methods become available to solve the SSP problem, we can use those methods to compute the transient breaking point.

	\item \textbf{Case 2: }$V_{s_0}<1-p$ \\
	Note that in this case, it is not even possible for Player~2 to reach $B$ with the required probability, therefore the set of strategies $U_\pi$ is empty for all $\pi$.
	Hence, it is not breakable and the algorithm correctly returns $(\natural, \natural)$.
\end{itemize}

\paragraph*{Reachability:}
In this case, first we compute the winning states $E$ for Player~2 using the backward fix-point computation.
Because of the construction, note that, Player~2 can always force the play to avoid $G$ once it reaches $E$. 
Also, since plays reach an end component with probability 1, we know that it either reaches $G$, $E$ or some other end component. 
If it reaches some other end component, Player~1 would never play actions to keep it there, therefore, only Player~2 would have the option to keep the play inside it if possible. But by construction, this end component would be in the set $E$.
Therefore, all plays would either reach $G$ or $E$ with probability 1 under optimal strategies for both players. 
Now, based on the maximum probability of reaching $E$ there are two cases.
\begin{itemize}
	\item \textbf{Case 1: }$V_{s_0}< 1-p$ \\
	In this case, it is not possible to reach $E$ with the required probability in $\newgame$ which means that even with as many disturbances actions as possible, the strategy is not breakable.
	Hence, the set of strategies $U_\pi$ is empty for all $\pi$.
	The algorithm correctly returns $(\natural, \natural)$.
	\item \textbf{Case 2: }$V_{s_0}\geq 1-p$ \\
	If the probability of reaching $E$ is $\geq 1-p$, we compute the set $B$ where the mean payoff is 0, i.e. the frequency of disturbances required is 0 in those MECs.
	
	If the probability of reaching $B$ is $\geq 1-p$, the expected cost to reach $B$ with $\geq1-p$ probability becomes finite since after reaching $B$, no disturbances are required and the expected disturbances to reach $B$ would be finite. 
	The minimum expected cost can be computed using SSP algorithm for the SG $\newgame$ for states $B$ giving us the optimal transient breaking point.
	The frequency breaking point would be 0 in this case.
	
	If the probability to reach $B$ is $<1-p$, we need to find the right MECs to stay in. 
	For this, the weighted MEC quotient MDP has cost associated with each MEC (transition to a new terminal state) equal to its mean payoff.
	Therefore, the minimum mean payoff to avoid $G$ is equal to the minimum mean payoff to reach $E$ with $\geq1-p$ probability. 
	The minimum mean payoff can be computed using SSP algorithm for the SG $\newgame^f$ for the new terminal state, giving us the optimal frequency breaking point.
	The transient breaking point is $\omega$ in this case.\qedhere
\end{itemize}
\end{proof}

\begin{added}
	Note that the solution of the SSP for SGs only provides an $\epsilon$-approximation of the true value.
	Therefore, the computed breaking points are also $\epsilon$-approximations of the true breaking points.
	However, as better methods become available to solve the SSP problem for SGs, we can use those methods to compute better approximations of the breaking points.
	Moreover, the complexity of the algorithm remains the same as that of solving the SSP for SGs.
\end{added}

\section{Worst-case Breaking Point (Section~\ref{sec:worst_case_breaking_point})}

\subsection{Difference between Realistic Worst-Case and Absolute Worst-Case}
\label{app:realistic_worst_case}
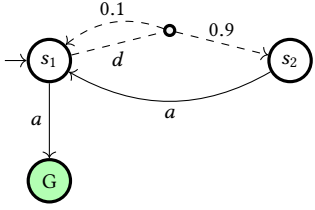
\begin{figure}[h]
	\small
	\centering
	\begin{tikzpicture}[scale=0.8]
		
		\node[state, initial, initial text=] at (0,0)     (1)   {$s_1$};
		\node[state] at (4,0)     (2)   {$s_2$};
		\node[state, fill=our-green] at (0,-2)    (3)   {G};
		\node[smallcircle] at (2,0.5)     (4)   {};
		
		\draw
		(1) edge[-, below, dashed] node {$d$} (4)
		(4) edge[->, above, bend right, dashed] node {$0.1$} (1)
		(4) edge[->, above, dashed] node {$0.9$} (2)
		(2) edge[->, bend left, below] node {$a$} (1)
		(1) edge[->, left] node {$a$} (3)
		
		;
	\end{tikzpicture}
	
	\caption{Disturbance frequency is 1 in the absolute worst case (always stay in $s_1$) but it is realized with 0 probability. Here, if the action $d$ is always played in state $s_1$, the steady state distribution is $\frac{10}{19}$ and $\frac{9}{19}$ in state $s_1$ and $s_2$ respectively. 
	This gives us the realistic worst case frequency as $\frac{10}{19}$.}
	\label{fig:recurrent_definition_justification}
\end{figure}

Here we show what happens if we consider the absolute worst case disturbance frequency instead of the realistic worst case.
For example, see \Cref{fig:recurrent_definition_justification} where the absolute worst case would give a disturbance frequency of $1$ (because of paths that always stay in $s_1$ after some finite time) however, a more useful information would be that the disturbance frequency is $\frac{10}{19}$.
We call this the realistic worst case in contrast.

\subsection{Memory Requirements for Worst-Case Breaking Point}

\worstCaseTransBP*
\begin{proof}
	($\Rightarrow$) Assume $\breakpoint^T_{\game, \phi}(\pi) \leq k$. 
	Then, by definition of breaking point, there exists a player~2 strategy $\sigma$ and a $k$-disturbance strategy $\delta$ that breaks $\pi$.
	Here, $k$-disturbance strategy means that every path under that strategy has at most $k$ disturbances. 
	
	Let $(\tilde \pi, \tilde \sigma) = f(\pi, \sigma, \delta)$, with function $f$ from \Cref{thm:unfolded_sg_equivalence} then $\sup_{\tilde \sigma}\prob^{\tilde \pi, \tilde \sigma}_{\game^{\dagger k}, (s_0,k)}(\neg \phi) \geq \prob^{\pi, \sigma, \delta}_{\game, s_0}(\neg \phi) \geq 1 - p$.

	($\Leftarrow$) Assume $\sup_{\tilde \sigma}\prob^{\tilde \pi, \tilde \sigma}_{\game^{\dagger k}, (s_0,k)}(\neg \phi) \geq 1 - p$. 
	Function $f^{-1}$ from  \Cref{thm:unfolded_sg_equivalence} gives us a disturbance strategy $\delta$ in $\game$ that breaks $\pi$ using at most $k$ disturbances, that is, $\breakpoint^T_{\game, \phi}(\pi) \leq k$.
\end{proof}

\worstCaseFreqBP*
\begin{proof}
	We start by showing that the optimal disturbance strategy that forces the run to stay inside a MEC is memoryless.

	Let $\strategytwo^*, \disturbancestrategy^*$ be the strategies that achieves the breaking point, i.e. 
	\[\breakpoint^R_{\game,\objective}(\strategyone):=
	\inf \Big\{x\in[0,1] ~\big|~ P_{\game,s_0}^{\strategyone,\strategytwo^*,\disturbancestrategy^*}\big(\rho \in Freq_{\leq x} \big) = 1\Big\}
	\]
	Now, since the cost on a path is the same as number of disturbances, we can write it as 
	\[= \inf \Big\{x\in[0,1] ~\big|~ P_{\game,s_0}^{\strategyone,\strategytwo^*,\disturbancestrategy^*}\big(\rho \in MP_{\leq x} \big) = 1\Big\}
	\]
	Where $MP_{\leq x}$ is the set of paths with mean payoff $\leq x$.

	Let $e$ be the minimum expected mean payoff. The worst case will always be greater than the expected value.
	
	$\breakpoint^R_{\game,\objective}(\strategyone)\geq e$

	Let $e$ is achieved by a memoryless strategy $\mu^*$.
	Using Lemma, we can find equivalent $\strategytwo^*,\disturbancestrategy^*$ that achieves $e$.

	Now, 
	\[P_{\game,s_0}^{\strategyone,\strategytwo^*,\disturbancestrategy^*}\big(MP(\rho) = e \big) = 1\]
	\[P_{\game,s_0}^{\strategyone,\strategytwo^*,\disturbancestrategy^*}\big(\rho \in MP_{\leq e} \big) = 1\]
	This gives us that 
	\[\breakpoint^R_{\game,\objective}(\strategyone) \leq e\]	
	
	We now show that since almost all runs reach an end component, the optimal strategy is to choose the right MECs where you always stay and maximize the probability of reaching those MECs.

	Let's assume our algorithm returns $x$ as the breaking point and $C^*$ be the set of end components that have frequency $\leq x$. 
	This implies all the remaining MECs in $C\setminus C^*$ have frequency $\geq x$.
	It also means that the algorithm produces a value that can be achieved, i.e. 
	$Output \geq \breakpoint^R_{\game,\objective}(\strategyone)$.

\end{proof}

\subsection{Computing the Worst-Case Breaking Point (\Cref{sec:worst_case_breaking_point_strategy})} 

\begin{algorithm}[H]
	\small
	\caption{Worst-case Breaking Point of a Strategy}
	\label{alg:breaking_point_worst_case}
	\begin{algorithmic}[1]
		\State \textbf{Input:} $\game, \phi, \strategyone$
		\State \textbf{Output:} $(\breakpoint^T_{\game, \strategyone}, \breakpoint^F_{\game, \strategyone})$
		\State $M_\strategyone \gets$ InducedMDP($\game, \strategyone$) 

		\If{$\phi = P_{> p} (\square \neg B)$} \Comment{Safety}
			\State $B\gets$ Input from $\phi$ 
		\ElsIf{$\phi = P_{> p} (\lozenge \targetstates)$} \Comment{Reachability}
			\State $B\gets $ Compute Using \cref{eq:states_B}
			
		\EndIf

		\State $V_{s} \gets$ UseReachabilityLP($M_\strategyone, B, 1-p$) \Comment{Eq. \eqref{lp_inf}}
		\If{$V_{s_0}>1-p$} \Comment{\textbf{Case 1}} \label{alg:main_finite}
			\State $k \gets$  ComputeUpperBoundTransient($M_\pi,B, 1-p$) 
			\State $\breakpoint^T =$ ComputeTransient($M_\pi,B,1-p,k$)
			\State $\breakpoint^F = 0$
		\ElsIf{$V_{s_0} = 1-p$}
			\State $\breakpoint^T =$ ComputeTransient($M_\pi,B,1-p,|\gamedisturbancetransitions|$)
			\If{$\breakpoint^T \neq \oslash$} \Comment{\textbf{Case 2(a)}}
				\State \Return ($\breakpoint^T, 0$)
				
			\Else \Comment{\textbf{Case 2(b)}}
				\State \Return $(\omega,0)$
				
			\EndIf
		\Else	\Comment{\textbf{Case 3}}
		
			\If{$\phi = P_{> p} (\square \neg B)$} \Comment{Safety $\implies$ Not breakable}
				\State \Return $(\natural, \natural)$

			\Else
				\State $C\gets \mathsf{MEC}(M_\pi)\setminus(G\cup B)$ \Comment{Other MECs}
				\State $V'_{s} \gets$ UseReachabilityLP($M_\strategyone, C\cup B, 1-p$)
				\If{$V'_{s_0}\geq 1-p$} \Comment{Need infinite disturbances}
					
					\State $\breakpoint^F \gets$ ComputeFrequency($M_\pi, G,B,1-p$)
					\State \Return $(\omega, \breakpoint^F)$
				\EndIf
			\EndIf
		\EndIf

		\State \Return $(\breakpoint^T, \breakpoint^F)$
	\end{algorithmic} 
\end{algorithm}

We first prove that if $\phi$ is reachability and it is possible to break the strategy in finitely many disturbances, then we still need to solve reachability for the set $B$ (as computed in \Cref{eq:states_B}), i.e.
\begin{lemma}
	If $\phi$ is reachability and $\strategyone$ is breakable with finite disturbances then there exists $\mu$ such that
	$\prob_{M_\strategyone, s_0}^{\mu}(\square \lnot G) = \prob_{M_\strategyone, s_0}^{\mu}(\lozenge B)$.
\end{lemma}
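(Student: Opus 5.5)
Our plan is to take a breaking strategy that uses only finitely many disturbances and modify it so that every run avoiding $G$ does so by entering $B$. Since $\strategyone$ is breakable with finite disturbances, there is a strategy $\mu'$ of $M_\strategyone$ with $\prob^{\mu'}_{M_\strategyone,s_0}(\square\neg G)\geq 1-p$ such that almost every run uses at most $k$ disturbances for some $k\in\langle\omega\rangle$. Here we pass from pairs $(\strategytwo,\disturbancestrategy)$ to $\mu'$ via \Cref{lem:induced_mdp_equivalence}. Because all states of $G$ are sinks, every state of $B$ lies in $M_\strategyone\setminus G$ and can never reach $G$. Hence $\lozenge B\subseteq\square\neg G$ up to the usual convention, and so $\prob^{\mu}(\lozenge B)\leq\prob^{\mu}(\square\neg G)$ holds for every $\mu$. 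It therefore suffices to exhibit one $\mu$ that attains the reverse inequality.

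\textbf{Step 1.} Analyse $\mu'$ through end components. Almost surely a run of $M_\strategyone^{\mu'}$ eventually stays in an end component, in the sense of the set of state–action pairs taken infinitely often. Consider a run that avoids $G$ and uses only finitely many disturbances. After its last disturbance it only plays $\strategyone(s)$ in Player~1 states, so its infinitely-often set $U$ is an end component of $M_\strategyone\setminus G$ that is closed under $\strategyone$. Let $U'$ be the MEC of $M_\strategyone\setminus G$ containing $U$. We want $U'\in B$ as in \cref{eq:states_B}, or at least $U$ contained in a member of $B$. This is the delicate point: a MEC containing $U$ might contain other Player~1 states whose $\strategyone$-successors leave it. To handle this, I would replace the plan ``run ends in $B$'' with ``run ends in some $\strategyone$-closed end component of $M_\strategyone\setminus G$''. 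I would then argue that the union of such components is what \cref{eq:states_B} is meant to capture, possibly after refining $B$ to the MECs of the sub-MDP in which Player~1 states keep only $\strategyone(s)$.

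\textbf{Step 2.} Construct $\mu$. Let $\mu$ follow $\mu'$ until the run first enters a state of $B$, and from then on play only non-disturbance actions and remain inside that component. For Player~2 states this uses actions internal to the end component; for Player~1 states $\strategyone$ stays inside by closedness. Then $\prob^{\mu}(\lozenge B)\geq\prob^{\mu'}(\lozenge B)$. Together with Step 1, almost every run of $\mu'$ that avoids $G$ eventually enters $B$, so we get $\prob^{\mu'}(\square\neg G)=\prob^{\mu'}(\lozenge B)$. This gives $\prob^\mu(\square\neg G)\geq\prob^\mu(\lozenge B)\geq\prob^{\mu'}(\square\neg G)$. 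By the first paragraph, $\prob^\mu(\lozenge B)\leq\prob^\mu(\square\neg G)$. If we in fact take $\mu=\mu'$, which satisfies the Step 1 identity, both quantities are equal, which is the claim. The modification of Step 2 is only needed to make $\mu$ zero-cost after reaching $B$, which is useful for the algorithm.

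\textbf{Main obstacle.} The main obstacle is the Step 1 identification of limit components of finite-disturbance runs with members of $B$. \Cref{eq:states_B} quantifies over all Player~1 states of a MEC, whereas the limit set only guarantees closedness on the states it visits. I would first try to show that the $\strategyone$-closed sub-end-component can be enlarged or recognised as a MEC of the appropriate restricted MDP, and then interpret $B$ accordingly.
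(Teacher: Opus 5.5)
Your Step~1 is a genuine gap, and for the set $B$ of \cref{eq:states_B} it cannot be closed. Everything after it depends on it, including the final choice $\mu=\mu'$, and ``interpreting $B$ accordingly'' would prove a different lemma.

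Your suspicion is correct, as this example shows. Take Player~1 states $s,t$ with $\strategyone(s)$ a self-loop on $s$ and $\strategyone(t)$ leading to $G$. Add disturbance edges $s\to t$ and $t\to s$, and let $t$ be the initial state. Undisturbed, $\strategyone$ reaches $G$ surely. One disturbance at $t$ followed by none avoids $G$ with probability $1$, so $\strategyone$ is breakable with one disturbance. However, the only MEC of $M_\strategyone\setminus G$ is $\{s,t\}$ (the self-loop plus both disturbance edges), and it violates \cref{eq:states_B} at $t$. Hence $B=\emptyset$ and $\prob^{\mu}_{M_\strategyone,s_0}(\lozenge B)=0$ for every $\mu$, so no breaking strategy satisfies the identity. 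The $\strategyone$-closed component $\{s\}$ lies strictly inside a MEC that is not in $B$, exactly the situation you describe.

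The paper's proof uses the same end-component decomposition into $G$, $B$, $R$ and the remaining components. It gets past this point only by asserting that staying in an $R$-MEC needs infinitely many disturbances. In the example, $\{s,t\}\in R$ and the play can stay at $s$ with no disturbances, so that assertion also fails. Following the paper's route will therefore not rescue your Step~1.

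There is also a smaller error in your first paragraph: $\lozenge B\subseteq\square\lnot G$ does not hold for every $\mu$. A state of a $B$-MEC may still have a Player~2 action or a disturbance into $G$; such actions are only excluded from the MEC. The inclusion holds only for strategies that remain in $B$ once they enter it. So your Step~2 truncation is needed for correctness, not only for cost, and $\mu=\mu'$ fails even if Step~1 is granted.

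To get an actual proof, you have two options.
\begin{enumerate}
\item[(i)] Prove the literal existential statement, which needs neither Step~1 nor breakability. Suppose some set of states could not reach $G\cup B$. It would be closed under all actions, so a bottom SCC of it would be a MEC of $M_\strategyone\setminus G$ closed under $\strategyone$, i.e.\ a member of $B$, a contradiction. Hence a memoryless $\mu$ following shortest paths reaches $G\cup B$ almost surely. Let it stay in the entered $B$-MEC by playing $\strategyone$ at Player~1 states and internal actions at Player~2 states. Then $\prob^\mu(\square\lnot G)=\prob^\mu(\lozenge B)$; in the example, the never-disturbing $\mu$ gives $0=0$. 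This weak form, however, does not support the algorithm.
\item[(ii)] For the form the algorithm needs, replace $B$ by the MECs of the sub-MDP of $M_\strategyone\setminus G$ with disturbance actions removed. Your Step~1 is then immediate, because the limit set after the last disturbance is an end component of that sub-MDP. Your truncated $\mu$ then satisfies $\prob^\mu(\square\lnot G)=\prob^\mu(\lozenge B)\geq\prob^{\mu'}(\square\lnot G)$.
\end{enumerate}
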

\begin{proof}

	We compute the set $B$ using \cref{eq:states_B} and we know that $G$ is not reachable from $B$.
	Therefore, the probability of staying out of $G$ is equal to the probability of reaching $B$ or the probability of reaching some other end component where the play can stay forever.
	Under any strategy, the probability of reaching a MEC and staying there is 1 \cite{puterman2014markov},
	
	$P_{M_\strategyone, s_0}^{\mu}(\lozenge G \lor \lozenge O \lor \lozenge B) = 1$
	Since $G$ and $B$ are sink states/end components.

	We also compute the set $R$ using \Cref{eq:states_R}.
	From the definition of $R$, you need infinitely many disturbances to stay in $R$, this implies
	$\prob_{M_\strategyone, s_0}^{\strategytwo, \disturbancestrategy}(\lozenge R) = 0$.
	Also, since Player~1 would never want to stay in $O\setminus R$, it will take the exit action whenever the play reaches $O\setminus R$.
	This mean that the probability of staying forever in $O\setminus R$ is 0.
	Therefore, $\prob_{M_\strategyone, s_0}^{\strategytwo, \disturbancestrategy}(\lozenge G) = 1-P_{M_\strategyone, s_0}^{\strategytwo, \disturbancestrategy}(\lozenge B)$.
	Also, by definition, $\prob_{M_\strategyone, s_0}^{\strategytwo, \disturbancestrategy}(\lozenge G) = 1-P_{M_\strategyone, s_0}^{\strategytwo, \disturbancestrategy}(\square \neg G)$.\qedhere

\end{proof}

We now discuss when $V_{s_0} = 1-p$ and we need to split it into two cases because it could either mean that this probability is reached only in the limit after infinitely many disturbances, or after a finite number of disturbances.

For the following, we assume that $V^{k}(s)$ denotes the probability to reach $B$ with $k$ disturbances from state $s$ and arbitrarily many from all other states.
\begin{lemma}
	\label{lem:finite-res-equal-case} 
	If the probability of reaching $B$ is exactly $1-p$ then for all $k > 0$ it holds that $V^{k+1}(s) > V^k(s)$ implies $V^{k+2}(s) > V^{k+1}(s)$.
\end{lemma}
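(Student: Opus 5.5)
The plan is to view the sequence $V^k$ as iterates of a monotone operator, and to show by induction on $k$ that a strict increase at $s$ always propagates to the next iterate.

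\textbf{Setup.} By \Cref{thm:equivalence_transient} and \Cref{thm:unfolded_sg_equivalence}, $V^k(s)$ is the maximal probability of reaching $B$ from $(s,k)$ in the unfolded game $\game^{\dagger k}$ played on $M_\strategyone$. Layer $i$ of this game only refers to layer $i-1$ through disturbance transitions. Hence there is an operator $F$ on vectors $W:S\to[0,1]$ with $V^{k+1}=F(V^k)$. Concretely, $F(W)$ is the least fixed point of the Bellman operator of $M_\strategyone$ in which:
\begin{itemize}
\item undisturbed actions are evaluated at the current iterate;
\item disturbance actions $d$ at $s$ are evaluated as $\sum_{t}\gamedisturbancetransitions(s,d,t)\,W(t)$.
\end{itemize}

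\textbf{Basic properties of $F$.} I will establish three standard facts first.
\begin{enumerate}
\item $F$ is monotone.
\item $V^{k}\le V^{k+1}$ pointwise.
\item The limit $V^\infty=\lim_k V^k$ equals the unconstrained maximal reachability value, whose value at $s_0$ is exactly $1-p$ by hypothesis.
\end{enumerate}

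\textbf{Characterising a strict increase.} Fix a memoryless optimal strategy $\mu_{k+1}$ for the least fixed point defining $V^{k+1}=F(V^k)$. Suppose $V^{k+1}(s)>V^k(s)=F(V^{k-1})(s)$. By a standard difference argument, under $\mu_{k+1}$ the play from $s$ reaches, with positive probability and before $B$, some state $u$ where $\mu_{k+1}$ picks a disturbance $d$ with $\sum_t\gamedisturbancetransitions(u,d,t)\,(V^k(t)-V^{k-1}(t))>0$. In other words, there is a disturbance successor $t$ with $V^k(t)>V^{k-1}(t)$. The difference argument itself compares the two fixed points along the optimal strategy of the larger one, and writes $V^{k+1}(s)-V^k(s)$ as an expected sum of increments collected at disturbance transitions.

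\textbf{Induction.} I prove the statement by induction on $k$, in the form: for every state $s$, $V^{k+1}(s)>V^k(s)$ implies $V^{k+2}(s)>V^{k+1}(s)$.
\begin{itemize}
\item For the witness $t$ found above, $V^k(t)>V^{k-1}(t)$. The induction hypothesis (or, for the base case, the observation that $V^1>V^0$ witnesses a disturbance reaching a state with positive value gain) gives $V^{k+1}(t)>V^k(t)$.
\item Now use $\mu_{k+1}$ as a (possibly suboptimal) strategy in the fixed point defining $F(V^{k+1})$. The disturbance at $u$ collects strictly more value, because $V^{k+1}(t)>V^k(t)$. All other terms are at least as large, by monotonicity.
\item Since $u$ is reached from $s$ with positive probability, this gives $V^{k+2}(s)\ge$ the value of $\mu_{k+1}$ against $V^{k+1}$, which is strictly greater than $V^{k+1}(s)$.
\end{itemize}

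\textbf{Main obstacle.} The hard step is the difference argument that locates the witness disturbance $u,t$. The issue is that $V^{k+1}$ and $V^k$ are least fixed points with possibly different optimal strategies. Comparing them along $\mu_{k+1}$ requires that $\mu_{k+1}$ is not trapped in end components outside $B$. Otherwise the zero-increment cycles would absorb the gain and no witness would appear.

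I would handle this in one of two ways:
\begin{itemize}
\item choose $\mu_{k+1}$ among optimal strategies that leave non-$B$ end components, which is possible after MEC collapsing as in the standard reachability LP; or
\item use the hypothesis that the overall maximum equals exactly $1-p$. This hypothesis guarantees that value cannot be gained inside end components avoiding disturbances, since $B$ is absorbing and states outside it that can stay without disturbances are not winning.
\end{itemize}

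If the single-state formulation resists, I would first prove the global version: if $V^{k+2}=V^{k+1}$ everywhere, then $V^{k+1}$ is a fixed point of the unconstrained Bellman operator, so $V^{k+1}=V^\infty$. I would then derive the pointwise statement through the witness-propagation argument above.
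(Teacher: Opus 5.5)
\paragraph{The base case fails, and under your reading of $V^k$ the statement is false.}
Your inductive step is sound given its hypothesis, but the induction has no valid base. The step turns a strict gain at $s$ into a strict gain $V^k(t)>V^{k-1}(t)$ at a disturbance successor $t$, and then needs the instance $k-1$ at $t$. For $k=1$ this is the instance $k=0$, i.e.\ $V^1(t)>V^0(t)\Rightarrow V^2(t)>V^1(t)$. That implication fails whenever the disturbance at $t$ leads straight into $B$, which is exactly why the statement requires $k>0$. Your parenthetical ``observation'' does not supply it.

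In fact, if $V^k(s)$ is the value of $(s,k)$ in $\game^{\dagger k}$ (a global budget of $k$ disturbances), the statement is false. Let $\pi(s)=\pi(t)=a$ lead to a non-$B$ sink, let the disturbance at $s$ lead to $t$, and let the disturbance at $t$ lead to $B$ or to the sink with probability $\tfrac12$ each. With $s_0=s$ and $p=\tfrac12$ you get $V^1(s)=0<V^2(s)=\tfrac12=V^3(s)=1-p$, contradicting the case $k=1$.

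The hypothesis $V_{s_0}=1-p$ compares a single number with a freely chosen threshold, so it cannot exclude such examples. The end-component issue you single out is also not the real difficulty: the witness follows directly from $V^k(s)=F(V^{k-1})(s)\ge \mathrm{val}(\mu_{k+1},V^{k-1})(s)$ and linearity of a fixed strategy's value in the disturbance payoffs. Your fallback global statement ($V^{k+2}=V^{k+1}$ everywhere implies $V^{k+1}=V^\infty$) is true but gives nothing pointwise; the example above satisfies it.

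\paragraph{What the paper's lemma is about.}
The paper defines $V^k(s)$ differently: at most $k$ disturbances \emph{at the state $s$ itself}, and arbitrarily many at all other states. Its proof looks at the probability $\nu$ of returning to $s$ after disturbing there. If $\nu=0$, a second disturbance at $s$ can never be used, so $V^{k+1}(s)=V^k(s)$ for $k\ge 1$. If $\nu>0$, the extra disturbance is spent on a loop back to $s$, giving a strict gain.

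One way to make this rigorous is via convexity. As a function of the continuation value $x$ at $s$, the value of disturbing at $s$ is $D(x)=\max_\sigma\bigl(\Pr_\sigma[\text{$B$ before returning to $s$}]+\Pr_\sigma[\text{return to $s$}]\cdot x\bigr)$. This $D$ is convex and nondecreasing, and one checks that $V^j(s)=\max\bigl(V^0(s),D(V^{j-1}(s))\bigr)$. For $k\ge 1$, a strict gain means $D$ has a positive secant slope on $[V^{k-1}(s),V^k(s)]$. Convexity then forces $D(V^{k+1}(s))>D(V^k(s))$; the positive slope is a return probability, i.e.\ the paper's $\nu$.

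So a correct proof needs this per-state budget together with a return-loop (convexity) argument. The layer-to-layer operator for a global budget is the wrong object.
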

\begin{proof}
	Let $d$ be a disturbance action in state $s$ and let $\nu$ be the probability mass of coming back to state $s$ after picking $d$. 
	If $\nu = 0$, then $V^{k+1}(s) = V^k(s)$, a contradiction. 
	Thus, $\nu > 0$, which means that there is a path $\rho = s, a, \dots, s$ (starting and ending in $s$) with probability mass $\nu$. 
	Let $\delta'(s, k+2) = d$, and follow $\rho$ afterwards. 
	Then, $V^{k+2}(s) \geq V^{\delta'}(s) > V^{k+1}(s)$.
\end{proof}

\begin{lemma}
	\label{lem:one_edge_finiteness}
	If $V_{s_0} = 1-p$, then there is a disturbance strategy $\delta$ and a natural number $k \in \langle \omega \rangle$ such that $\delta$ breaks $\pi$ with at most $k$ disturbances if and only if there is a disturbance strategy that breaks $\pi$ using each disturbance edge at most once.
\end{lemma}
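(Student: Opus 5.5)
The reverse direction is immediate. A disturbance strategy that uses each disturbance edge at most once disturbs at most $|\gamedisturbancetransitions|$ times on every run. So we may take $k = |\gamedisturbancetransitions|$. For the forward direction, suppose $\delta$ breaks $\pi$ with at most $k$ disturbances on every run. By \Cref{cor:optimal_delta} (via \Cref{thm:equivalence_transient}) we may assume $\delta$ is step counting, i.e. of the form $S\times\langle k+1\rangle\to A$. The plan is to take such a breaking $\delta$ with $k$ minimal and show that no disturbance edge $(s,d)$ is used twice with positive probability along a run. The whole argument hinges on the hypothesis $V_{s_0}=1-p$. It says that the unrestricted optimum (arbitrarily many disturbances) equals exactly the threshold. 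So a finite-disturbance breaking strategy must already attain the unbounded optimum $V_{s_0}$; there is no slack.

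The key step is a contradiction argument in the spirit of \Cref{lem:finite-res-equal-case}. Suppose that under $\delta$ some run uses disturbance $d$ at $s$ twice with positive probability, first with counter $i$ and later with counter $j<i$. Then with positive probability mass $\nu>0$ the play returns from $s$ to $s$ after the first use of $d$. Look at the value $V^{j}(s)$ versus $V^{i}(s)$. Replaying the cycle argument of \Cref{lem:finite-res-equal-case} gives a strict increase $V^{i+1}(s)>V^{i}(s)$. Iterating that lemma then gives a strictly increasing sequence $V^{k}(s)<V^{k+1}(s)<\cdots$. Propagating this strict improvement back to $s_0$ along a positive-probability path yields $V^{k'}_{s_0} > V^{k}_{s_0}\ge 1-p$ for larger $k'$. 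The values $V^{k'}_{s_0}$ are bounded by the unbounded optimum, which is $1-p$. Hence $V^{k}_{s_0} < 1-p$, contradicting that $\delta$ breaks $\pi$ with $k$ disturbances. Consequently, in an optimal finite breaking strategy, the choice at each $(s,\cdot)$ in the unfolded game $\game^{\dagger k}$ never depends on having already disturbed at $s$. We can therefore merge the counter levels and reroute so that each disturbance edge fires at most once per run. The reachability probability of $B$ is unchanged, since by \Cref{thm:unfolded_sg_equivalence} the probabilities are preserved.

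The main obstacle is the propagation step. From a strict local improvement at $s$ we must conclude a strict improvement of $V_{s_0}$ beyond the level achieved with $k$ disturbances. The state $s$ is reached with positive probability under $\delta$, so the improvement transfers. The difficulty is ensuring that the strict increase is not offset by loss of optimality elsewhere. If this fails, the fallback is to argue directly on the unfolded game. We would compare the maximal reachability values in $\game^{\dagger k}$ and $\game^{\dagger(k+1)}$, and show that a repeated edge forces a positive-probability cycle that can be repeated infinitely often, giving probability strictly above the finite one but still at most $1-p$. We would then conclude as above.
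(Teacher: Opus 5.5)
\textbf{There is a genuine gap.} Your forward direction reproduces only the second branch of the paper's proof. The branch you drop is the one that carries the content.

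The step ``the play returns from $s$ to $s$ with mass $\nu>0$ \ldots\ replaying the cycle argument of \Cref{lem:finite-res-equal-case} gives $V^{i+1}(s)>V^{i}(s)$'' runs that lemma backwards. There, the strict increase is the hypothesis and $\nu>0$ is deduced from it. A return cycle by itself gives no gain, because a repeated disturbance can simply be wasted. Here is an example:
\begin{itemize}
\item let $s_0$ move with probability $\frac12$ each to $u$ and $w$;
\item let reaching $B$ from $u$ require a chain of three distinct disturbances, the last hitting $B$ with probability $\frac12$;
\item let $w$ have a disturbance $e$ hitting $B$ with probability $\frac12$, and a self-loop disturbance $f$;
\item let $\strategyone$ lead to $G$ everywhere, and put $1-p:=V_{s_0}=\frac12$.
\end{itemize}
The minimal $k$ is $3$. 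Playing $f,f,e$ at $w$ breaks $\strategyone$ within $3$ disturbances while using $(w,f)$ twice, although $V^{i}(w)=V^{i+1}(w)$ for all $i\ge1$. So a minimal-$k$ breaking strategy can repeat an edge, contrary to what you set out to prove.

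In fact, when $V_{s_0}=1-p$ your contradiction branch is vacuous. A breaking $\disturbancestrategy$ attains the unbounded optimum $v(s_0)$, so it is optimal conditionally on every positive-probability history. Hence every reached $(s,i)$ already has $V^{i}(s)=v(s)\ge V^{i'}(s)$ for all $i'$, and no strict increase is available there. Everything therefore rests on the no-gain case. The paper treats it as its first case ($V^{k}\le V^{k-1}$: stop using $d$). Your sentence ``merge the counter levels and reroute'' merely asserts it, and it sits oddly with your plan: if no edge were ever repeated, there would be nothing to reroute.

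Note also that your $V^{i}$ is the total-budget value of the unfolded game, whereas \Cref{lem:finite-res-equal-case} restricts only the disturbances at $s$. For total budgets, the ``strictly increasing sequence'' step already fails for $s\xrightarrow{d}t\xrightarrow{d'}B$ (with $\strategyone$ leading to $G$), where $V^{1}(s)=0<V^{2}(s)=V^{3}(s)$.

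\textbf{One way to supply the missing case.} Let $v$ be the unbounded optimal value in $M_\strategyone$. Fix a memoryless $\tau$ that is optimal from every state of the $k$-unfolded MDP, with values $V^{i}(s)$. Set $b(s)=\min\{i\le k\mid V^{i}(s)=v(s)\}$ (with $b(s)=\infty$ if the set is empty), so $b(s_0)\le k$. Then play $\alpha(s)=\tau(s,b(s))$ on $M_\strategyone$.

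With $i'$ the updated counter, the chain $v(s)=V^{b(s)}(s)=\sum_{s'}\bar T(s,\alpha(s),s')\,V^{i'}(s')\le\sum_{s'}\bar T(s,\alpha(s),s')\,v(s')\le v(s)$ shows three things: $\alpha$ is $v$-preserving; $b$ never increases along a transition; and $b$ drops at every disturbance. Hence no state, and a fortiori no disturbance edge, is disturbed twice on a run.

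It remains to check that $\alpha$ still breaks $\strategyone$. In a bottom SCC of the induced chain, $b$ is constant, so no disturbance happens there. Then $\tau$ started inside (at $(s,b(s))$) never leaves the SCC, which forces $v=0$ on every bottom SCC disjoint from $B$. Together with $v$-preservation, this gives $\prob(\lozenge B)=v(s_0)=1-p$.
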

\begin{proof}
	($\Leftarrow$) This is immediate.

	($\Rightarrow$) Assume that $\delta$ breaks $\pi$ using $k \in \langle \omega\rangle$ disturbances, 
	and assume that there is a disturbance edge $d \in \gamedisturbancetransitions$ that is taken $n > 1$ times. 
	There are two cases:
	\begin{itemize}
		\item Case $V^{k} \leq V^{k-1}$. Then, define $\delta'$ to never take $d$.
		\item Case $V^{k} > V^{k-1}$. Then, $V^{k+1} > V^{k}$ by  \Cref{lem:finite-res-equal-case}, 
		meaning that there exists a disturbance strategy that achieves $V_{s_0} > p$ by using $k + 1$ disturbances, a contradiction.
	\end{itemize}
\end{proof}

These lemmas tell us that if $\pi$ cannot be broken using $|T^D|$ disturbances then it cannot be broken with finitely many disturbances.
We are now ready to prove the correctness of our algorithm.

\theoremBPWC*
\begin{proof}
	We prove the correctness of \cref{alg:breaking_point_worst_case}, which runs in $\PSPACE$.
	Since reachability is reduced to safety, we just argue in terms of probability of reaching the set $B$.
	We have four cases:

	\textbf{Case 1:} $V_{s_0} > 1-p$ \\
	Let $V_{s_0}^0, V_{s_0}^1, \dots$ is the sequence of values produced by value iteration. 
	It is a non-decreasing sequence that converges to the true value $v_{s_0} > 1-p$ in the limit (by correctness of value iteration).
	There exist $j\in \Naturals$ such that $V_{s_0}^j > 1-p$.
	This value iteration upto $j$ steps would also give a strategy $\delta: (\bar S \times \bar A)^* \times \bar{S} \rightarrow \bar{A}$ that defines which action to take from every state after each time step.
	This $\delta$ breaks $\strategyone$ in $\leq j$ steps, therefore, $j$ is an upper bound.
	
	We give this upper bound to the procedure to compute transient breaking point and for its correctness, we refer to the next subsection.

	\textbf{Case 2:} $V_{s_0} = 1-p$. \\
	As we check for the transient breaking point up until the number of disturbances $|\gamedisturbancetransitions|$, we have two subcases.
	\begin{itemize}
		\item It breaks within $|\gamedisturbancetransitions|$ disturbances, and the transient breaking point is already computed.
		We leave the correctness of computing the transient breaking point to be shown later in \Cref{thm:transient_breaking_point}.
		In this case, the frequency is 0 by definition.

		\item If it does not break within $|\gamedisturbancetransitions|$ disturbances, we know that it is not possible to break it with finitely many disturbances using \cref{lem:one_edge_finiteness}.
		
		To show correctness of computing the frequency in this case, we refer to \Cref{thm:frequency_worst_case}.
	\end{itemize}

	\textbf{Case 3:} $V_{s_0} < 1-p$ \\
	If the specification is safety, the probability to reach $B$ is $<1-p$ and therefore it would never be possible to break it and therefore the breaking point is $(\natural, \natural)$.

	If the specification is reachability, we compute the set of MECs $R$ that are not $B$ or $G$.
	Here, it is possible that some runs stay in $R$ and some runs stay in $B$ to avoid reaching $G$.
	Here, the amount of disturbances required is infinite since if it was finite, the runs would reach a state from which no more disturbances are required with probability $\geq 1-p$, since this is also the definition of $B$, probability to reach $B$ would be $\geq 1-p$, a contradiction.
\end{proof}

\subsection{Algorithm for Computing Worst-Case Transient Breaking Point and its Correctness} 
\label{app:algo_transient_worst_case}

{
	\small
	\begin{equation} 
		\label{lp_0}
		\boxed{
			\begin{aligned}
				&\min~~\sum_{s\in S}V_{s, 0} \\
				& V_{s,0} = 1 \quad \forall s \in B \\
				& V_{s,0} \geq 0 \quad \forall s \in S \\
				& V_{s,0} \geq \sum_{s' \in S} \bar{T}(s,a,s')V_{s',0} \quad \forall a = \strategyone(s) \\
			\end{aligned} 
		}		
	\end{equation}
}

{
	\small
	\begin{equation}
		\label{lp_i}
		\boxed{
			\begin{aligned}
				&\min~~\sum_{s\in S}V_{s,i}\\
				&V_{s,i} = 1 \quad \forall s \in B \\
				&V_{s,i} \geq 0 \quad \forall s \in S \\
				&V_{s,i} \geq \sum_{s' \in S} \bar{T}(s,a,s')V_{s',i} \quad \forall a = \strategyone(s) \\
				&V_{s,i} \geq \sum_{s' \in S} \bar{T}(s,a,s')V_{s',i-1} \quad \forall a \in Av(s)\setminus\strategyone(s) \\
			\end{aligned}
		}		
	\end{equation}
}

\begin{algorithm}[h]
	\caption{Compute Transient Breaking Point}
	\label{alg:transient_breaking_point}
	\begin{algorithmic}[1]
		\State Given $M_\strategyone, B, k$

		\State $V_{s,0} \gets \textsf{LP$_0$}(M_\pi, B)$ \Comment{Using Eq. \eqref{lp_0}}
		
		\State $i=0$		
		\While {$i \leq k$}
		\State $V_{s,i+1} \gets \textsf{LP$_{i+1}$}(M_\pi, B, V_{s,i})$ \quad \Comment{Using Eq. \eqref{lp_i}}
		\If {$V_{s,i+1} \geq 1-p$}
		\State \Return i+1 \Comment{breaking point is $i+1$}
		\EndIf
		\State $i++$
		
		\EndWhile
		\State \Return $\oslash$ \Comment{transient breaking point is not smaller than $k$}
	\end{algorithmic} 
\end{algorithm}

\begin{restatable}{lemma}{theoremTransientBreakingPoint}
	\label{thm:transient_breaking_point} 
	\Cref{alg:transient_breaking_point} terminates and returns $\breakpoint^T_{\game,\objective}$ if  $\breakpoint^T_{\game,\objective}\leq k$.
\end{restatable}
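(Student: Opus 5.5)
Termination is immediate: \Cref{alg:transient_breaking_point} solves one linear program per iteration, the loop counter $i$ increases by one each time and the loop stops once $i > k$, and each LP in \eqref{lp_0} and \eqref{lp_i} is feasible. Feasibility holds because the constant assignment $V_{s,i}=1$ satisfies all constraints. The LP is also bounded below by $0$, so an optimum exists.

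For correctness I will show by induction on $i$ that the optimal value $V_{s,i}$ of the $i$-th LP equals
\[
\sup_{\tilde\sigma}\prob^{\tilde\pi,\tilde\sigma}_{\game^{\dagger i},(s,i)}(\lozenge B),
\]
the maximal probability of reaching $B$ from $s$ when at most $i$ disturbances remain. Here $\tilde\pi$ is the copy of the memoryless $\strategyone$ in the $i$-unfolded game of \Cref{sec:unfolded-SG}. Since $\strategyone$ is fixed, layer $i$ of $\game^{\dagger i}$ is an MDP for Player~2 whose transitions either stay in layer $i$ (action $\strategyone(s)$, or Player~2 moves) or drop to layer $i-1$ (a disturbance).

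\textbf{Base case $i=0$.} No disturbance is available, so layer $0$ is the Markov chain induced by $\strategyone$ together with the Player~2 actions in $\gamestatestwo$. \eqref{lp_0} is exactly the standard minimal-fixpoint LP for maximal reachability of $B$. I need to make sure its constraints range over Player~2 actions as well as $\strategyone(s)$; I read the inequalities for $a=\strategyone(s)$ as covering all non-disturbance actions.

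\textbf{Inductive step.} Assume the values $V_{\cdot,i-1}$ are correct. Then \eqref{lp_i} is the standard maximal-reachability LP for layer $i$, where each disturbance transition is treated as a jump into a terminal state whose value is the already computed constant $\sum_{s'}\bar T(s,d,s')V_{s',i-1}$. The classical result that the least feasible solution of this LP is the optimal reachability value in an MDP gives the claim. This step uses the stationarity of the unfolded game with respect to the pair $(s,i)$, as in \Cref{cor:optimal_delta}: an optimal disturbance strategy only needs to know the current state and the remaining counter, so composing layers loses no optimality.

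Given these values, \Cref{thm:equivalence_transient} (with \Cref{lem:induced_mdp_equivalence} to move between $\game$ and $M_\strategyone$) shows that $V_{s_0,i}\geq 1-p$ holds if and only if $\breakpoint^T_{\game,\objective}(\strategyone)\leq i$. The sequence $V_{s_0,i}$ is nondecreasing in $i$, because every solution feasible for $i$ is feasible for $i-1$ after monotone substitution. Hence the first index $i$ at which the test succeeds is exactly the minimal number of disturbances that break $\strategyone$. If $\breakpoint^T\leq k$, that index is reached before the loop bound, and the algorithm returns precisely $\breakpoint^T$.

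The main obstacle is the inductive step. It has to justify that optimizing layer by layer, with constants inherited from layer $i-1$, yields the same supremum as a global Player~2 strategy in $\game^{\dagger i}$, which may carry arbitrary history. I would first try to settle this by invoking memoryless optimality for reachability in the finite MDP $\game^{\dagger i}$ under the fixed $\strategyone$, and then observe that the value vector of a memoryless optimal strategy, restricted to layer $i$, satisfies the Bellman equations encoded in \eqref{lp_i}. A smaller point is that the loop tests $V_{s_0,i+1}$ starting from $i=0$, so the case where $\strategyone$ is already broken with $0$ disturbances is not checked; I would note that this case is excluded, since otherwise $\strategyone$ would not satisfy $\phi$ to begin with.
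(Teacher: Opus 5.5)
Your proposal is correct and follows essentially the paper's route. You reduce to the disturbance-counter unfolding and show by induction on the counter that the $i$-th LP solves reachability of $B$ in the layer with $i$ disturbances; the paper bridges through the unfolded MDP $M^\otimes_\strategyone$ (\Cref{lem:productMdpEquivalence}) and compares with its monolithic LP, whereas you compare directly with the layer values of $\game^{\dagger i}$ and invoke \Cref{thm:equivalence_transient}, and your vector-valued induction hypothesis, your reading of the missing Player~2 constraints in \eqref{lp_0}, and your remark that the $0$-disturbance case is never tested are, if anything, more careful than the paper's sum-based induction in \Cref{lem:incrementalEqualsUnfolded}.
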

 \begin{proof}
 	We first prove deciding $k$-resilience (transient) in SGD would be the same as solving the unfolding of the MDP obtained by treating Player~1 states as Player~2 states (see \Cref{lem:productMdpEquivalence}). 
	This requires evaluating the system under all possible $k$-disturbance strategies of Player~2.
 	Second, we show solving this unfolding is same as solving the iterative LP (see \Cref{lem:incrementalEqualsUnfolded}).

 \end{proof}

\subsubsection{Construction of Unfolding of an MDP}

We reduce the problem of finding breaking point of strategy to solving an unfolded MDP.
In this section, we explain the construction of the unfolded MDP incorporating disturbances.
To analyze the resilience of the strategy $\strategyone$, we define the unfolded MDP as an extension of the MDP $M_\strategyone$. 
The unfolded MDP introduces a disturbance counter that tracks the number of disturbances applied. 
Moreover, no disturbance actions can be chosen when the counter reaches its maximum number of allowed disturbances $k$.

We define the unfolded MDP as $M_\strategyone^\otimes = (S^\otimes, A^\otimes, \text{Av}^\otimes, T^\otimes),
$
where:
\begin{itemize}
	\item $S^\otimes = S \times \{0, \dots, k\}$ is the augmented state space, where:
	\begin{itemize}
		\item $S = \gamestatesone \cup \gamestatestwo$ is the state space of $M_\strategyone$,
		\item $i \in \{0, \dots, k\}$ is the disturbance counter, representing the number of disturbances applied so far.
	\end{itemize}
	
	\item $A^\otimes = A = \gameactions \cup \gamedisturbanceactions$ is the action space, unchanged from $M_\strategyone$.
	
	\item ${Av}^\otimes((s, i)) =
	\begin{cases}
		\{\strategyone(s)\} \cup \gameavaildisturbanceactions(s), & \text{if } s \in \gamestatesone \text{ and } i < k, \\
		\{\strategyone(s)\}, & \text{if } s \in \gamestatesone \text{ and } i = k, \\
		\gameavailactions(s), & \text{if } s \in \gamestatestwo.
	\end{cases}$
	
	\item {\small$T^\otimes((s, i), a, (s', i')) \\ =
		\begin{cases}
			\gametransitions(s, a, s') \cdot \delta_c(i' = i), & \text{if } s \in \gamestatesone, a = \strategyone(s), \\
			\gamedisturbancetransitions(s, a, s') \cdot \delta_c(i' = i+1), & \text{if } s \in \gamestatesone, a \in \gameavaildisturbanceactions(s) \text{ and } i < k, \\
			0, & \text{if } s \in \gamestatesone, a \in \gameavaildisturbanceactions(s) \text{ and } i = k, \\
			\gametransitions(s, a, s') \cdot \delta_c(i' = i), & \text{if } s \in \gamestatestwo, a \in \gameavailactions(s).
		\end{cases}$
	}
\end{itemize}

One can solve this unfolded MDP to get the transient breaking point of $\strategyone$ if it is smaller than $k$.

\para{LP for unfolded MDP.} 
Now the LP to solve the unfolding MDP can be give as follows.
\begin{equation} \label{lp_val_k_disturbance}
	\boxed{
	\begin{aligned}
		\min & \quad \sum_{(s,i) \in S^\otimes} \textsf{Val}_{(s,i)} \quad \text{subject to:} \\
		\textsf{Val}_{(s,i)} &= 1, \quad \text{for all } (s, i) \in B^\otimes; \\
		\textsf{Val}_{(s,i)} &\geq 0, \quad \text{for all } (s, i) \in S^\otimes; \\
		\textsf{Val}_{(s,i)} &\geq \sum_{s' \in S} \gametransitions(s, \pi(s), s') \textsf{Val}_{(s',i)}, 
		\quad \forall (s, i) \in S^\otimes, s \in \gamestatesone; \\
		\textsf{Val}_{(s,i)} &\geq \sum_{s' \in S} \gamedisturbancetransitions(s, a, s') \textsf{Val}_{(s',i-1)}, \\ 
		& \quad \quad \quad \quad \quad \quad \quad \forall (s, i) \in S^\otimes, a \in \gameavaildisturbanceactions(s), 1 < i \leq k; \\
		\textsf{Val}_{(s,k)} &\geq \sum_{s' \in S} \gametransitions(s, \pi(s), s') \textsf{Val}_{(s',k)}, 
		\quad \forall (s,k) \in S^\otimes, s \in \gamestatesone; \\
		\textsf{Val}_{(s,i)} &\geq \sum_{s' \in S} \gametransitions(s, a, s') \textsf{Val}_{(s',i)},\\ 
		& \quad \quad \quad \quad \quad \quad \forall (s, i) \in S^\otimes,
		a \in \gameavailactions(s), s \in \gamestatestwo.
	\end{aligned}
}
\end{equation}

\begin{lemma}
	\label{lem:productMdpEquivalence}
	Given a set of states $B \subseteq \gamestates$, a strategy with memory $\mu : S \times \{0, \dots, k\} \to A$ for the MDP $M_\strategyone$, 
	
	there exists a strategy $\mu^\otimes$ for the unfolded MDP $M_\strategyone^\otimes$ such that:
	\[
	P_{M_\strategyone, (s_0, 0)}^{\mu, \delta}(\lozenge B) = P_{M_\strategyone^\otimes, (s_0, 0)}^{\mu^\otimes}(\lozenge B^\otimes),
	\]
	Here, $B^\otimes = \{(s, i) \mid s \in B, i \in \{0, \dots, k\}\}$ is the corresponding set of target states in the unfolded MDP.
\end{lemma}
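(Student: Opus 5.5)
The plan is a direct simulation argument, in the same style as the proofs of \Cref{lem:induced_mdp_equivalence} and \Cref{thm:unfolded_sg_equivalence}. Here $\mu$ is read as a counter-based strategy on $M_\strategyone$: its second argument records how many disturbances have been used so far, and it disturbs only while that counter is below $k$.

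\textbf{Strategy on the unfolding.} Define $\mu^\otimes$ on $M_\strategyone^\otimes$ by $\mu^\otimes((s,i)) := \mu(s,i)$ for all $(s,i)\in S^\otimes$. This is a legal choice in every state. If $s\in\gamestatestwo$, then $\mu(s,i)\in\gameavailactions(s)=Av^\otimes((s,i))$. If $s\in\gamestatesone$ and $i<k$, then $\mu(s,i)\in\{\strategyone(s)\}\cup\gameavaildisturbanceactions(s)$, which is again $Av^\otimes((s,i))$. If $s\in\gamestatesone$ and $i=k$, the counter is exhausted, so $\mu(s,k)=\strategyone(s)$ by assumption on $\mu$; otherwise we redefine $\mu(s,k):=\strategyone(s)$, which does not affect any run using at most $k$ disturbances. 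Note that $\mu^\otimes$ is memoryless on the unfolding.

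\textbf{Bijection of finite paths.} Next I would relate finite paths of the two systems. Take the map $g$ that sends $s_0a_0s_1\dots s_n$ in $M_\strategyone$ to $(s_0,i_0)a_0(s_1,i_1)\dots(s_n,i_n)$, where $i_0=0$ and $i_{j+1}=i_j+[a_j\in\gamedisturbancetransitions\text{-actions}]$. Under $\mu$, every path of positive probability has $i_j\le k$, so $g$ is well defined on these paths. It is a bijection onto the positive-probability paths of $M^{\otimes,\mu^\otimes}_\strategyone$, since the counter component of $(s,i)$ is determined by the action history; this is the only place where the deterministic counter update $\delta_c$ in $T^\otimes$ is used. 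By induction on $n$, the two paths have the same probability. Each step multiplies by the same factor: $\mu$ and $\mu^\otimes$ pick the same action because $\mu$ sees exactly the pair $(s_j,i_j)$. The transition probabilities also agree, since $T^\otimes((s,i),a,(s',i'))$ equals $\bar T(s,a,s')$ whenever $i'$ is the correctly updated counter, and is $0$ otherwise.

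\textbf{Transfer to reachability.} Write $\lozenge B$ as the disjoint union of the cylinders of finite paths that first hit $B$ at their last state. Under $g$ these correspond exactly to the cylinders of paths that first hit $B^\otimes$, because $(s,i)\in B^\otimes$ iff $s\in B$. Summing the equal cylinder probabilities, using countable additivity, gives the claimed equality of reachability probabilities. I would also note the converse: every memoryless $\mu^\otimes$ on the unfolding projects to a counter strategy $\mu(s,i):=\mu^\otimes((s,i))$, giving an equivalence. This converse is what \Cref{thm:transient_breaking_point} needs, together with the existence of optimal memoryless strategies for reachability in finite MDPs.

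\textbf{Main obstacle.} The delicate point is bookkeeping rather than mathematics. The statement's $\prob^{\mu,\delta}$ and initial state $(s_0,0)$ mix the two systems, and the case $i=k$ must be handled carefully so that $\mu$ never attempts an unavailable disturbance. I would fix this at the start by restricting to counter strategies that respect the bound $k$. Once that restriction is in place, the induction on path length is routine.
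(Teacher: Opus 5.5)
Your proposal is correct and takes the same route as the paper: the paper also defines $\mu^\otimes((s,i)) := \mu(s,i)$, with the converse $\mu(s,i) := \mu^\otimes((s,i))$, and justifies the equality by saying the unfolded dynamics mirror the counter-augmented runs of $M_\strategyone$. Your explicit path bijection with the induced counter, the induction on path length, and the cylinder-set argument for $\lozenge B$ fill in what the paper leaves implicit. So does your restriction to counter strategies with $\mu(s,k)=\strategyone(s)$, which the paper silently needs so that $\mu^\otimes$ only chooses available actions at counter value $k$.
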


\begin{proof}
	We aim to show that for a given strategy in one MDP, 
	we can construct a strategy in the other MDP that ensures equivalent probabilities of reaching the target set of states.
	
	\para{Constructing a strategy in the unfolded MDP}
	Let $\mu: S \times \{0, \dots, k\} \to A$ be the memory-based strategy for Player 1 in $M_\strategyone$. 
	We construct a corresponding strategy $\mu^\otimes$ for the unfolded MDP $M_\strategyone^\otimes$ as follows:
	\begin{itemize}
		\item For each state $(s, i) \in S^\otimes$, define $\mu^\otimes((s, i)) = \mu(s, i)$.
	\end{itemize}
	This strategy ensures that the decisions made in the unfolded MDP align with those in the original MDP under the strategy $\mu$.
	
	\para{Reaching the target set $B$.}
	Suppose there exists a strategy $\mu$ in $M_\strategyone$ such that the probability of reaching $B \subseteq \gamestates$ from $(s_0, 0)$ is $P$. That is,
	\[
	P_{M_\strategyone, (s_0, 0)}^{\mu, \delta}(\lozenge B) = P.
	\]
	We claim that under the constructed strategy $\mu^\otimes$, the same probability $P$ is achieved in the unfolded MDP for reaching the corresponding target set $B^\otimes$. 
	Specifically,
	\[
	P_{M_\strategyone^\otimes, (s_0, 0)}^{\mu^\otimes}(\lozenge B^\otimes) = P.
	\]
	
	The equivalence follows from the fact that the dynamics of the unfolded MDP are designed to mirror the augmented state space of the original MDP. 
	In particular:
	\begin{itemize}
		\item The states in $S^\otimes$ correspond exactly to the augmented states $S \times \{0, \dots, k\}$ of $M_\strategyone$.
		\item The transitions in $M_\strategyone^\otimes$ under $\mu^\otimes$ are equivalent to the transitions in $M_\strategyone$ under $\mu$.
		\item The target set $B^\otimes$ corresponds directly to $B$ in the original MDP, extended with all memory labels $i \in \{0, \dots, k\}$.
	\end{itemize}
	Since the strategy $\mu^\otimes$ replicates the decisions of $\mu$ and the structure of $M_\strategyone^\otimes$ mirrors the augmented state space of $M_\strategyone$,
	the probability of reaching $B^\otimes$ in $M_\strategyone^\otimes$ under $\mu^\otimes$ matches the probability of reaching $B$ in $M_\strategyone$ under $\mu$.
	
	\para{Constructing a strategy in the original MDP.}
	Conversely, suppose there exists a strategy $\mu^\otimes$ in $M_\strategyone^\otimes$ that ensures a probability $P$ of reaching $B^\otimes$ from $(s_0, 0)$. 
	That is,
	\[
	P_{M_\strategyone^\otimes, (s_0, 0)}^{\mu^\otimes}(\lozenge B^\otimes) = P.
	\]
	We can construct a corresponding strategy $\mu$ in $M_\strategyone$ by defining:
	\begin{itemize}
		\item $\mu(s, i) = \mu^\otimes((s, i))$ for all $(s, i) \in S^\otimes$.
	\end{itemize}
	By similar reasoning, this strategy $\mu$ ensures that the probability of reaching $B$ in $M_\strategyone$ matches the probability of reaching $B^\otimes$ in $M_\strategyone^\otimes$ under $\mu^\otimes$.
\end{proof}

\begin{lemma}\label{lem:incrementalEqualsUnfolded}
	Solving the incremental LP is equivalent to solving the unfolded MDP.
\end{lemma}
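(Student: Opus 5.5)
We show that the sequence of incremental LPs \eqref{lp_0}, \eqref{lp_i} for $i=1,\dots,k$ and the single LP \eqref{lp_val_k_disturbance} for $M_\strategyone^\otimes$ have the same optimal solutions, under the identification $V_{s,i} \leftrightarrow \textsf{Val}_{(s,k-i)}$. The index $i$ in the incremental LPs counts disturbances still available, while the counter in $M_\strategyone^\otimes$ counts disturbances already used. By \Cref{lem:productMdpEquivalence}, the optimal value $\textsf{Val}_{(s_0,0)}$ is the maximal probability of reaching $B$ with at most $k$ disturbances. Hence the equality gives correctness of \Cref{alg:transient_breaking_point}.

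The first step is structural. In $M_\strategyone^\otimes$ the counter never decreases: non-disturbance actions keep it fixed and disturbance actions increment it. So the state space splits into layers $L_j = S\times\{j\}$, and every transition stays in $L_j$ or moves to $L_{j+1}$. Because of this acyclicity between layers, the least fixed point of the Bellman operator can be computed layer by layer in the order $L_k, L_{k-1},\dots,L_0$. In layer $L_k$ only $\strategyone(s)$ and the Player~2 actions are available, so its value is the least solution of plain reachability to $B$ in $M_\strategyone$ with no disturbances. This is exactly \eqref{lp_0}, where only $a=\strategyone(s)$ appears at Player~1 states.

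For a lower layer $L_j$, suppose the values on $L_{j+1}$ are fixed. Then the value on $L_j$ is the least fixed point of a one-layer MDP. In it, actions staying in the layer have their usual successors, and disturbance actions are ``exits'' whose payoff is the constant $\sum_{s'}\gamedisturbancetransitions(s,d,s')\textsf{Val}_{(s',j+1)}$. The standard result that the reachability value is the optimum of the minimizing LP (cf.~\cite{puterman2014markov}) shows this is precisely \eqref{lp_i}, with $V_{\cdot,i-1}$ as the previous layer's values. Induction on the layers then gives $V_{s,i}=\textsf{Val}_{(s,k-i)}$ for all $s$.

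To connect with the monolithic LP, I will argue that a minimizer of $\sum\textsf{Val}$ in \eqref{lp_val_k_disturbance} must be minimal on every layer separately. Any feasible point is pointwise at least the least fixed point, and the least fixed point is itself feasible. So minimizing the sum forces equality with the least fixed point componentwise, and the same holds for each incremental LP. On that shared least fixed point the two systems coincide.

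The main obstacle is reconciling the indexing and the constraint sets. The displayed LPs use $\bar{Av}(s)\setminus\strategyone(s)$ with $\bar T$, versus $\gameavaildisturbanceactions(s)$ with $\gamedisturbancetransitions$, and have the boundary conditions $1<i\leq k$ versus $i\ge1$. I would fix a consistent convention, namely that Player~2 actions stay within a layer and disturbances move one layer, and check the base layer carefully. Otherwise the argument is the standard least-fixed-point characterization applied to an acyclic layered MDP.
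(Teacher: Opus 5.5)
Your proof is correct. It shares the paper's skeleton (induction on the number of disturbances, with the no-disturbance LP \eqref{lp_0} as base case), but the key step is different and sharper.

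The paper's invariant is an equality of optimal objective values, $\min\sum_{s}V_{s,k}=\min\sum_{(s,i),\,i\le k}\textsf{Val}_{(s,i)}$, justified only by saying that both LPs ``enforce the same balance''. As literally written, it compares a one-layer sum with a sum over all $k+1$ layers, and it says nothing about individual entries. You instead identify the unique optimum of each LP with a least fixed point. Feasible points satisfy $\mathcal{T}x\le x$ for a monotone Bellman operator $\mathcal{T}$, so they dominate the least fixed point; that point is itself feasible; and the positive objective weights then force equality. You then use the fact that $M_\strategyone^\otimes$ is acyclic between layers, so the least fixed point can be computed layer by layer. Each layer is one incremental LP, with the previous layer's values as constants for the disturbance exits. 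This gives the componentwise identity $V_{s,i}=\textsf{Val}_{(s,k-i)}$, which is what Algorithm~\ref{alg:transient_breaking_point} actually needs, since it tests the single entry at $s_0$.

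In a write-up, give the short justification of the layerwise computation. The vector assembled from the layerwise least fixed points is a fixed point of the global operator, so it dominates the global least fixed point $\mu$. Conversely, by downward induction, $\mu$ restricted to a layer is a fixed point of that layer's operator given the next layer, so it dominates the layerwise least fixed point. Also apply Knaster--Tarski on $[0,\infty]^S$, or after truncating feasible points at $1$, since the LPs only impose $V\ge 0$.

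Your repairs of the displayed LPs are genuinely needed for the lemma to hold:
\begin{enumerate}
\item $Av(s)\setminus\strategyone(s)$ must be read as $\gameavaildisturbanceactions(s)$.
\item The counter direction must be made consistent. If you keep the decrementing convention of \eqref{lp_val_k_disturbance} rather than that of $T^\otimes$, your identification becomes $V_{s,i}=\textsf{Val}_{(s,i)}$.
\item The lower bound $1<i$ must become $1\le i$. As written, under the decrementing reading, it forbids disturbing from layer~$1$ and so allows only $k-1$ disturbances.
\item The constraints for $s\in\gamestatestwo$, which \eqref{lp_0} and \eqref{lp_i} omit, should be added explicitly.
\end{enumerate}
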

\begin{proof}
	We prove, by induction on $k$, that the minimized sum of the objective in the incremental LP in Eq.~\eqref{lp_i} coincides with the minimized sum in the unfolded MDP's LP in Eq.~\eqref{lp_val_k_disturbance}. Specifically, we aim to show:
	\[
	\min \sum_{s \in S} V_{s,k} = \min \sum_{(s,i) \in S^\otimes} \textsf{Val}_{(s,i)}.
	\]

	\para{Base Case.} 
	For $k = 0$, the incremental LP (Eq.~\eqref{lp_0}) minimizes:
	\[
	\sum_{s \in S} V_{s,0}, \quad \text{subject to:}
	\]
	$\begin{aligned}
	V_{s,0} &= 1 \quad \text{for all } s \in B;\\
	V_{s,0} &\geq 0, \quad V_{s,0} \geq \sum_{s' \in S} \bar{T}(s, \pi(s), s') V_{s',0}.
	\end{aligned}
	$
	
	The unfolded MDP LP for $k = 0$ minimizes:
	\[
	\sum_{(s,0) \in S^\otimes} \textsf{Val}_{(s,0)}, \quad \text{subject to:}
	\]
	$\begin{aligned}
	\textsf{Val}_{(s,0)} &= 1 \quad \text{for all } (s, 0) \in B^\otimes;\\ \textsf{Val}_{(s,0)} &\geq 0, \quad \textsf{Val}_{(s,0)} \geq \sum_{s' \in S} \gametransitions(s, \pi(s), s') \textsf{Val}_{(s',0)}.
\end{aligned}
$
	
	The constraints and objective in \Cref{lp_0} and the unfolded MDP LP are identical when $k = 0$, as both minimize the sum over the same state values under the same set of constraints.
	Therefore:
	\[
	\min \sum_{s \in S} V_{s,0} = \min \sum_{(s,0) \in S^\otimes} \textsf{Val}_{(s,0)}.
	\]

	\para{Induction Hypothesis.}
	Assume that for $k \geq 0$, the minimized sums in the incremental LP and the unfolded MDP LP are equal:
	\[
	\min \sum_{s \in S} V_{s,k} = \min \sum_{(s,i) \in S^\otimes, \, i \leq k} \textsf{Val}_{(s,i)}.
	\]

	\para{Inductive Step} 
	We now prove that the equivalence holds for $k+1$. The incremental LP for $k+1$ minimizes:
	\[
	\sum_{s \in S} V_{s,k+1}, \quad \text{subject to:}
	\]
	\[
	V_{s,k+1} = 1 \quad \text{for all } s \in B,
	\]
	\[
	V_{s,k+1} \geq 0,
	\]
	\[
	V_{s,k+1} \geq \sum_{s' \in S} \bar{T}(s, \pi(s), s') V_{s',k+1},
	\]
	\[
	V_{s,k+1} \geq \sum_{s' \in S} \bar{T}(s, a, s') V_{s',k}, \quad \text{for all } a \in Av(s) \setminus \pi(s).
	\]
	
	The unfolded MDP LP for $k+1$ minimizes:
	\[
	\sum_{(s,i) \in S^\otimes, \, i \leq k+1} \textsf{Val}_{(s,i)}, \quad \text{subject to:}
	\]
	\[
	\textsf{Val}_{(s,i)} = 1 \quad \text{for all } (s, i) \in B^\otimes,
	\]
	\[
	\textsf{Val}_{(s,i)} \geq 0,
	\]
	\[
	\textsf{Val}_{(s,k+1)} \geq \sum_{s' \in S} \gametransitions(s, \pi(s), s') \textsf{Val}_{(s',k+1)},
	\]
	\[
	\textsf{Val}_{(s,k+1)} \geq \sum_{s' \in S} \gamedisturbancetransitions(s, a, s') \textsf{Val}_{(s',k)}, \quad \text{for all } a \in \gameavaildisturbanceactions(s).
	\]

	\paragraph*{Matching the Minimized Sums}
	
	\begin{itemize}
		\item Structure of the Recursion:
		Both the incremental LP and the unfolded MDP LP minimize over values that depend on contributions from two sources:
		\begin{itemize}
			\item Values propagated via $\bar{T}(s, \pi(s), s')$ (Player~1’s strategy).
			\item Values propagated via $\bar{T}(s, a, s')$ or $\gamedisturbancetransitions(s, a, s')$ (Player~2’s disturbances).
		\end{itemize}
		These contributions are identical in both LPs, with the unfolded MDP LP explicitly encoding the disturbance level $i$, and the incremental LP aggregating these contributions.
		
		\item 
		By the inductive hypothesis:
		\[
		\min \sum_{s \in S} V_{s,k} = \min \sum_{(s,i) \in S^\otimes, \, i \leq k} \textsf{Val}_{(s,i)}.
		\]
		Substituting this into the constraints for $k+1$, we see that the constraints enforce the same balance between Player~1’s strategy and Player~2’s disturbances in both LPs.
		
		\item
		The additional terms for $k+1$ disturbances are minimized in the same way in both formulations. Therefore:
		\[
		\min \sum_{s \in S} V_{s,k+1} = \min \sum_{(s,i) \in S^\otimes, \, i \leq k+1} \textsf{Val}_{(s,i)}.\qedhere
		\]
	\end{itemize}
\end{proof}

\subsection{Algorithm for Computing Worst-Case Frequency Breaking Point and its Correctness}

\begin{algorithm}[h]
	\caption{Compute Frequency Breaking Point}
	\label{alg:frequency_worst_case}
	\begin{algorithmic}[1]
		\State \textbf{Input:} $M_\pi, G, B, 1-p$

		\State for all $s\in B, fr(s)\gets 0$
		
		\State $R \gets $ Compute $R$
		\For{$r\in R$}
			\State $fr(r)\gets \textsf{ComputeFrequencyMEC}(r)$
		\EndFor
		\State $Z \gets R$
		
		\While{\textbf{true}}
			\State $c\gets \textsf{MaxFreqMEC}(Z)$
			\State $Z\gets Z\setminus \{c\}$
			\If{$P(\lozenge Z) < 1-p$}
				\State \Return $fr(c)$
			\EndIf
		\EndWhile
	\end{algorithmic} 
\end{algorithm}

\begin{restatable}{lemma}{theoremFrequencyWorstCase}
	\label{thm:frequency_worst_case}
	\Cref{alg:frequency_worst_case} terminates and returns $\breakpoint^F_{\game,\objective}(\strategyone)$ in polynomial time.
\end{restatable}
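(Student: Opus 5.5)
Termination and the polynomial running time are handled first, then correctness. Each pass of the while-loop in \Cref{alg:frequency_worst_case} removes one MEC from $Z$, and there are at most $|S|$ MECs. So the loop runs at most $|S|$ times. Before the loop, computing $R$ (a MEC decomposition of $M_\strategyone\setminus G$ plus a local check of the exits prescribed by $\strategyone$) and computing each $fr(r)$ (minimum mean payoff inside the sub-MDP $r$, solvable by LP) are polynomial. Inside the loop, each test $P(\lozenge Z)<1-p$ is a maximal-reachability LP in $M_\strategyone$. The loop always exits: we only enter this procedure when $P(\lozenge(R\cup B))\ge 1-p$ but $P(\lozenge B)<1-p$, and once all of $R$ has been removed only $B$ (with value $0$) remains. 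This is why $B$ has to be kept in $Z$ permanently; I would initialise $Z$ as $R\cup B$ and never select a $B$-component, since it has frequency $0$ and is never the maximum while some $R$-MEC is left.

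For correctness, let $x^*$ be the value returned, namely $fr(c)$ for the last removed $c$. Let $Z^+=Z\cup\{c\}$ be the set just before that removal, so every MEC in $Z^+$ has frequency $\le x^*$ and $P(\lozenge Z^+)\ge 1-p$. The plan is to prove two inequalities.

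\textbf{Upper bound, $\breakpoint^F(\strategyone)\le x^*$.} I would take a memoryless Player~2 strategy in $M_\strategyone$ that maximises the probability of reaching $Z^+$. Upon entering a MEC $U\in Z^+$, it switches to a memoryless strategy achieving $fr(U)$ inside $U$; I would take this to be a unichain one, so that almost every run staying in $U$ has mean payoff exactly $fr(U)$ (standard for MECs of MDPs). Then almost every run either stays in some $U\in Z^+$ with $\longrundisturbances(\rho)=fr(U)\le x^*$, or avoids $Z^+$. In the latter case the run reaches $G$ or the $B$-part, which incurs only finitely many disturbances, so its frequency is $0$. Since $G$ is avoided with probability $\ge 1-p$, this pair lies in $U_\strategyone$, and by \Cref{lem:induced_mdp_equivalence} it gives game strategies witnessing the bound. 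This is also the content of \Cref{lem:worstCaseFreqBP}.

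\textbf{Lower bound, $\breakpoint^F(\strategyone)\ge x^*$.} This is the main obstacle. Suppose some $\mu\in U_\strategyone$ has almost all runs with frequency $\le x<x^*$. Almost surely a run eventually stays within some end component, so its limit set is contained in a MEC of $M_\strategyone\setminus G$ or in $G$. A run whose limit set lies in a MEC $U$ has mean payoff at least the MEC's minimal mean payoff $fr(U)$, almost surely. The delicate point is that this holds pathwise, not just in expectation. I would argue it via the standard result that in an MDP the almost-sure liminf mean payoff of runs confined to an end component is bounded below by its optimal value, for example through the decomposition of the state-action frequencies of $\mu$ restricted to $U$. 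Given this, runs avoiding $G$ almost surely end in MECs with $fr\le x<x^*$. By the greedy order of the algorithm, these MECs all lie in $Z$, the set remaining after $c$ was removed; any MEC with frequency $<x^*$ removed earlier would contradict the maximality of the removed frequencies. Hence $P^\mu(\square\neg G)\le P^\mu(\lozenge Z)<1-p$, contradicting $\mu\in U_\strategyone$.

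One subtlety remains: ties at frequency $x^*$, and MECs not belonging to $R\cup B$. The latter are exited by $\strategyone$ and cannot be held without disturbances, so they belong to $R$ anyway. I would handle ties by noting that removing MECs with equal frequency in any order still yields the same returned value.
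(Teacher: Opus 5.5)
Your overall route is the paper's. Almost every run eventually stays in a MEC, and a run confined to a MEC $C$ has disturbance frequency at least the minimal mean payoff of $C$. So $\breakpoint^F_{\game,\objective}(\strategyone)$ is the least $x$ for which the MECs of frequency at most $x$ can be reached with probability at least $1-p$, and the greedy removal finds exactly this threshold. You carry this out more carefully than the paper does: the paper argues only for memoryless $(\strategytwo,\disturbancestrategy)$, asserts the pathwise bound without proof, and leaves termination, running time, keeping $B$ in $Z$, and ties implicit. Your lower bound is sound.

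One step of your upper bound, however, fails as written. A memoryless strategy that merely maximises $P(\lozenge Z^+)$ is unconstrained on states from which $Z^+$ is unreachable, and it may keep disturbing there forever. Here is an example:
\begin{itemize}
\item With probability $1/2$ the play is forced into a Player~1 state $u$ whose $\strategyone$-action leads to $G$ and which has a disturbance self-loop.
\item Then $\{u\}$ is a MEC of frequency $1$, which lies outside $Z^+$ whenever $x^*<1$.
\item Both actions at $u$ are reach-optimal (value $0$), so a reach-maximising strategy may choose the loop.
\item That choice gives runs of frequency $1>x^*$ with probability $1/2$, so $P(\longrundisturbances(\rho)\le x^*)<1$ and the pair is no witness.
\end{itemize}
Your justification that runs avoiding $Z^+$ ``reach $G$ or the $B$-part'' is therefore unfounded. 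Under your own convention $B\subseteq Z^+$ in any case.

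The repair is short. Take a uniformly optimal memoryless reachability strategy, and on all states of value $0$ override it by disturbance-free choices ($\strategyone(s)$ at Player~1 states). Two facts then give the bound:
\begin{itemize}
\item The value-$0$ region is closed under every action.
\item A bottom SCC of the induced chain made of positive-value states outside $Z^+$ would contradict uniform optimality.
\end{itemize}
Hence almost every run either reaches $Z^+$, and then has frequency $fr(U)\le x^*$, or it enters the value-$0$ region. In the latter case it incurs no further disturbances and has frequency $0$.
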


\begin{proof}
	
	From the definition of frequency breaking point
	\[\breakpoint^R_{\game,\objective}(\strategyone) = \inf_{\strategytwo,\disturbancestrategy \in U_\strategyone} \Big\{x ~\big|~ P_{\game,s_0}^{\strategyone,\strategytwo,\disturbancestrategy}\big(\rho \in Freq_{\leq x} \big) = 1\Big\}\]
	From \Cref{lem:induced_mdp_equivalence}, we get
	\[P_{\game,s_0}^{\strategyone,\strategytwo,\disturbancestrategy}\big(\rho \in Freq_{\leq x} \big) = P_{M_\strategyone,s_0}^{\strategytwo,\disturbancestrategy}\big(\rho \in Freq_{\leq x} \big)\]

	From here onwards, we drop the sub- and superscript of $P$ for ease of notation
	\[ P \big(\rho \in Freq_{\leq x} \big) = P \big(\rho \in MP_{\leq x} \big) \]
	where, $MP_{\leq x}$ is the set of runs that have mean payoff $\leq x$.

	Let $C=\{C_1,C_2,\dots C_m\}$ are the MECs of $M_\strategyone$.
	Under memoryless strategies $\strategytwo,\disturbancestrategy$, the probability of a path reaching a MEC and staying there forever is 1. 
	This gives us,
	\[=P \Big( \rho \in MP_{\leq x} \bigcap \big(\bigcup_i \lozenge C_i \big) \Big)\]
	Where $\lozenge C_i$ denotes the set of runs eventually reaching the MEC $C_i$ and staying there forever.
	Now, since all the MECs are disjoint, we get
	\[=\sum_{i} P \big( \rho \in MP_{\leq x} \cap \lozenge C_i \big)\]

	For some $C_i$, 
	$P \big( \rho \in MP_{\leq x} \cap \lozenge C_i \big) > 0$ only if the mean payoff of $C_i$ is $\leq x$.

	\[=\sum_i P \big( \rho \in \lozenge C_i\big) [MP(C_i)\leq x] \]
	where $[]$ is the bracket notation and $MP(C_i)$ is the mean payoff of $C_i$.

	\[=P \big( \rho \in \{\lozenge C_i \mid MP(C_i)\leq x \}\big)\]

	Since this probability should be 1 according to the definition, almost all the runs should reach only the end components that have mean payoff $\leq x$.
\end{proof}

\subsection{Optimal Strategy: Worst-Case Breaking Point (\Cref{sec:worst_case_synthesis})}

\paragraph{Reachability.}

Initially, we also assume that the SGD has a special structure: 
we have three assumptions on the structure of the SGD:
\begin{itemize}
	\item (A1.) For each state $s\in \gamestates$, $|\gameavailactions(s)|=2$.
	\item (A2.) For each state $s\in \gamestatesone$, $|\gameavaildisturbanceactions(s)|=1$. 
	\item (A3.) For each state,for every action, with a small probability $\epsilon$,
	the game leads to $B$.
\end{itemize}
The intuition behind it is that, with these restrictions, its $k$-unfolding will be a special
type of \emph{stopping stochastic game}~\cite{condon1992complexity},
for which it is easier to formulate a QP. 
Note that \cite{condon1992complexity}
considers the probability values to be in $\{0,0.5,1\}$, but
this restriction can be ignored without any loss of generality
(as discussed in \cite[Lemma 3]{kvretinsky2022comparison}).
We discuss how to accommodate the general case later.  

\para{Iterative Quadratic Programming.}
In this section, for a state $s\in \gamestatesone$, we denote the only disturbance action in $\gameavaildisturbanceactions(s)$ as $d_s$.
We define a sequence of QPs, namely $(QP_0,QP_1,\ldots QP_k)$, where in $QP_i$ we minimize the objective function $F_i$ w.r.t. a constraint set $C_i$.
\begin{align}
	\small
	F_0 &= \sum_{s \in \gamestates}
	\prod_{a\in \gameavailactions(s)}\left(V_{s,0} - \sum_{s' \in \gamestates} \gametransitions(s,a,s')\cdot V_{s',0}
	\right)\label{eq:F0}\\\
	F_i &= 
	\sum_{s \in \gamestatesone}\prod_{a\in \gameavailactions(s)}\left(
	V_{s,i} - V_{s,i,a}
	\right)\notag\\ 
	&+
	\sum_{s \in \gamestatestwo}\prod_{a\in \gameavailactions(s)}\left(
	V_{s,i} - \sum_{s' \in \gamestates} \gametransitions(s,a,s')\cdot V_{s',i}
	\right)\notag\\
	&+
	\sum_{s \in \gamestatesone}\sum_{a\in \gameavailactions(s)}\left(
	V_{s,i,a} - \sum_{s' \in \gamestates} \gametransitions(s,a,s')\cdot V_{s',i}
	\right)\notag\\
	&\quad\quad\quad\quad\cdot \left(
	V_{s,i,a} - \sum_{s' \in \gamestates} \gametransitions(s,d_s,s')\cdot V_{s',i-1}
	\right)
\end{align}
\normalsize

	\begin{equation} \label{qp_0}
		\small
		\boxed{
			\begin{aligned}
				\min & F_0 \quad \text{subject to:}\\
				V_{s,0} &= 1\quad \forall s\in G;\\
				V_{s,0} &= 0\quad \forall s\in B;\\
				V_{s,0} &\geq \sum_{s' \in \gamestates} \gametransitions(s,a,s')\cdot V_{s',0}\\
				&\forall s \in \gamestatesone , a \in \gameavailactions(s); \\
				V_{s,0} &\leq \sum_{s' \in \gamestates} \gametransitions(s,a,s')\cdot V_{s',0}\\ & \forall s \in \gamestatestwo, a \in \gameavailactions(s). \\
			\end{aligned}
		}
		\normalsize
	\end{equation}

	\begin{equation}\label{qp_1}
		\small
		\boxed{
			\begin{aligned}
				\min & F_i \quad \text{subject to:}\\
				V_{s,i} &= 1\quad \forall s\in G;\\
				V_{s,i} &= 0\quad \forall s\in B;\\
				V_{s,i} &\geq V_{s,i,a} \quad \forall s \in \gamestatesone, a \in \gameavailactions(s); \\
				V_{s,i,a} &\leq \sum_{s' \in \gamestates} \gametransitions(s,a,s')\cdot V_{s',i} \quad \forall s \in \gamestatesone, a \in \gameavailactions(s); \\
				V_{s,i,a} &\leq \sum_{s' \in \gamestates} \gametransitions(s,d_s,s')\cdot V_{s',i-1} \quad \forall s \in \gamestatesone, a \in \gameavailactions(s); \\
				V_{s,i} &\leq \sum_{s' \in \gamestates} \gametransitions(s,a,s')\cdot V_{s',i}\quad \forall s \in \gamestatestwo, a \in \gameavailactions(s);
			\end{aligned}
		}
		\normalsize
	\end{equation}

\vspace{0.1in}

where the values of $V_{s,i-1}$ in $C_i$ comes from the solution of $QP_{i-1}$. For a state $s$, $d_s$ is the only disturbance action in $\gameavaildisturbanceactions(s)$.

\paragraph{Safety.}
For a stopping SG, for any pair of strategies, with probability $1$, eventually one of the sink state is reached. Without loss of generality, the SG has two sink states $G$ and $B$. 
Then the objective 
$\objective=P_{\geq p}(\square \neg B)$ can be converted to $\objective=P_{\leq p}(\lozenge G)$.
Then, for safety objective, the existence of a $k$-resilient strategy can be checked by checking if  $V_{s_{0},k}\leq p$.

\begin{algorithm}[t]
	\caption{Compute Worst-case Transient Breaking Point}
	\label{alg:k_resilience}
	\begin{algorithmic}[1]
		\State Given $\game, \phi, k$
		\State $i=0$
		\State $\overline{V}\gets \textsf{QP0}(\game, \phi)$\Comment{Eq. \eqref{qp_0}}
		\While {$i \leq k$}
		\State $\overline{V}\gets \textsf{QP}(\game, \phi,\overline{V})$ \Comment{Eq. \eqref{qp_1}}
		\If {$V_{s,i+1} \geq 1-p$}
		\State \Return i+1 \Comment{breaking point is $i+1$}
		\EndIf
		\State $i++$
		
		\EndWhile
		\State \Return $\oslash$
	\end{algorithmic} 
\end{algorithm} 

\begin{restatable}{lemma}{theoremQPSol}
	\label{thm:qp-sol}
	\Cref{alg:k_resilience} terminates and returns $\max_{\pi}\breakpoint^T_{\game,\objective}$ if  $\max_{\pi}\breakpoint^T_{\game,\objective}\leq k$.
\end{restatable}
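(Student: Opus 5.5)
The plan is to reduce the statement to \Cref{thm:equivalence_transient} and then argue that the $i$-th QP computes the value of the $i$-unfolded game $\game^{\dagger i}$ from $(s_0,i)$. The induction mirrors \Cref{lem:incrementalEqualsUnfolded}, but uses the characterisation of values of stopping SGs from \citep{condon1992complexity} instead of an LP.

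Termination is immediate: the loop runs at most $k+1$ times, and each iteration solves one QP over finitely many variables.

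For correctness, assume first the structural assumptions (A1)--(A3). Under them every $\game^{\dagger i}$ is a stopping SG, so its value vector is the unique fixpoint of the Bellman operator. Condon's result then says this fixpoint is exactly the vector at which the quadratic objective, a sum of products of nonnegative slack terms, attains its minimum $0$ over the feasible polytope. At that point, in each state one of the slacks vanishes, which encodes max/min.

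\textbf{Base case.} For $i=0$ no disturbance is possible. $QP_0$ is precisely Condon's QP for $\game$ without disturbances, which is $\game^{\dagger 0}$, so its optimum gives $V_{s,0}=\mathrm{val}_{\game^{\dagger 0}}((s,0))$.

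\textbf{Inductive step.} Suppose $V_{\cdot,i-1}$ equals the values of $\game^{\dagger (i-1)}$. The layer $i$ of $\game^{\dagger i}$ is a stopping game whose only exits to lower layers go, via $d_s$, to states $(s',i-1)$, whose values are already fixed. Substituting these constants, $QP_i$ is exactly Condon's QP for this layer, with the following correspondence:
\begin{itemize}
\item the variables $V_{s,i,a}$ are the values of the Player~2 gadget states $(s,i,a)$ of \Cref{fig:gadget_unfold};
\item the third summand of $F_i$ is the product of slacks for the two gadget actions $\bot$ and $d_s$;
\item the first two summands handle the Player~1 and Player~2 states.
\end{itemize}
Uniqueness of the fixpoint in the stopping layer then gives $V_{s,i}=\mathrm{val}_{\game^{\dagger i}}((s,i))$.

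Consequently $V_{s_0,i}\ge 1-p$ holds iff the Player~2 value of $\neg\phi$ in $\game^{\dagger i}$ is at least $1-p$. Since stochastic games are determined with memoryless optimal strategies \citep{shapley1953stochastic}, this holds iff every $\tilde\pi$ admits $\tilde\sigma$ with $\prob^{\tilde\pi,\tilde\sigma}(\neg\phi)\ge 1-p$. By \Cref{thm:equivalence_transient} and \Cref{thm:unfolded_sg_equivalence}, this is equivalent to $\breakpoint^T_{\game,\phi}(\pi)\le i$ for all $\pi$, i.e.\ $\max_\pi \breakpoint^T_{\game,\phi}(\pi)\le i$. Monotonicity in $i$ (more disturbances never hurt Player~2) means the first $i$ at which the test succeeds is exactly $\max_\pi\breakpoint^T_{\game,\phi}(\pi)$, and such an $i$ is found whenever that maximum is at most $k$.

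\textbf{Main obstacle.} The delicate step is the inductive identification of $QP_i$ with the layer-$i$ game, specifically justifying that the optimum of $F_i$ is $0$ and is attained uniquely at the value vector. I would argue this by exhibiting the value vector as feasible with $F_i=0$, since optimal memoryless strategies make one factor in each product vanish, and by invoking Condon's uniqueness argument for stopping games, using (A3) to guarantee stopping. The general case without (A1)--(A3) I would handle by the standard reductions:
\begin{itemize}
\item binarising action choices via chains of intermediate states;
\item merging disturbance actions into a Player~2 choice gadget;
\item taking an $\epsilon$-perturbation toward $B$ and letting $\epsilon\to0$, relying on continuity of values.
\end{itemize}
Of these, the $\epsilon$-limit step needs the most care.
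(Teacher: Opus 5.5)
Your argument is correct under (A1)--(A3), but it is organised differently from the paper's.

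\textbf{The paper's route.} It applies Condon's characterisation once, to the whole stopping game $\game^{\dagger k}$. The global QP (objective $\sum_i F_i$ over $\bigcup_i C_i$) has a unique optimum $\overline{v}$ at which every summand vanishes. Each projection $v_i$ is feasible for $C_i$ with $F_i=0$, so it is optimal, since all summands of $F_i$ are nonnegative. Uniqueness of the $QP_i$ optimum is then obtained by splicing a competing solution $u_i$ back into $\overline{v}$.

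\textbf{Your route.} You induct on the layers and invoke fixpoint uniqueness $k+1$ times. Each time it is applied to layer $i$ alone, viewed as a stopping game whose exits $(s',i-1)$ carry the already computed constant payoffs.

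\textbf{What each buys.} The paper only needs Condon's theorem for a genuine simple stochastic game with $0/1$ sinks. Yours needs it for terminal payoffs in $[0,1]$, which is harmless: replace each exit by a chance node to $G$ and $B$, or use that the Bellman operator of a stopping game has a unique fixpoint. In exchange, your induction makes explicit two points the paper glosses over:
\begin{enumerate}
\item $QP_i$ is actually fed $v_{i-1}$, which already requires uniqueness at the previous layer.
\item The splice in the paper's uniqueness step is only legitimate after truncating to layers $\le i$, i.e.\ working in $\game^{\dagger i}$. Otherwise $C_{i+1}$, which refers to the layer-$i$ values, may be violated by the spliced vector.
\end{enumerate}
You also spell out the link from $V_{s_0,i}$ to $\max_\pi\breakpoint^T_{\game,\objective}(\pi)$ via \Cref{thm:equivalence_transient}, determinacy and monotonicity; the paper only asserts it.

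\textbf{Two small remarks.}
\begin{enumerate}
\item As the QPs are written ($V=1$ on $G$, $V=0$ on $B$, Player~1 maximising), $V_{s_0,i}$ is Player~1's value of $\phi$. Your equivalence with Player~2's value of $\neg\phi$ therefore really concerns $1-V_{s_0,i}$. This inconsistency comes from the algorithm's test, not from your argument.
\item The general-case reductions are not part of this lemma's proof in the paper. If you pursue them, continuity as $\epsilon\to0$ alone cannot decide a threshold test exactly. You would need a sufficiently small $\epsilon$ together with rounding, as in Condon's Lemma~8 cited by the paper.
\end{enumerate}
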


\begin{proof}

	Because of our assumptions, $\game^{\dagger k}$ is a stopping game, i.e., i) each state has only two available actions and ii) for every action, with a small probability $\epsilon$,	the game leads to $B$.
	
	Using the method of solving stopping SGs~\cite{condon1990algorithms},
	we can design a QP for $\game^{\dagger k}$ (hereafter referred to as $QP^{\dagger k}$). 
	Note that in this QP, we will minimize $F = \sum_i F_i$ w.r.t. $C=\bigcup_i C_i$
	(For $s\in\gamestates$, $0\leq i\leq k$, $a \in \gameavailactions(s)$, $V_{s,i}$  and $V_{s,i,a}$ are the variables corresponding to the states $(s,i)$ and $(s,i,a)$ respectively).
	Also, from \cite{condon1990algorithms}, $QP^{\dagger k}$ has a unique optimal solution $\overline{v}$. Also, for this optimal solution every summand in $F$ is $0$, in particular, $F_i$ is $0$ for all $i$.
	In this solution the value of $V_{s_0,k}$ is $\max_{\pi}\prob^{\pi}_{\game^{\dagger k}, (s_0,k)}(\phi) \geq p$.
	For $0\leq i\leq k$, we define $v_i$ as the projection of $\overline{v}$ corresponding to the variables $V_{s,i}$'s and $V_{s,i,a}$'s. 
	
	To proof the correctness of \cref{alg:k_resilience}, we will show that $\overline{v}$ is the unique solution of the iterative $QP$.
	Since $\overline{v}$ in the constraint set $C$, $v_i$ is in $C_i$.
	For any assignment satisfying $C_i$, every summand in $F_i$ is non-negative as they are products of two terms that are either both non-negative or both non-positive. Since, for the assignment $v$, $F_i=0$, $v_i$ is indeed an optimal solution. 
	To show the uniqueness, assume there is another solution $u\neq v$, such that $u_i$ is in $C_i$ and $F_i=0$. But then $u$ is in $C$ for which $F=0$, contradicting the fact that $QP^{\dagger k}$ has an unique solution.	
\end{proof}
This proves the following:
\begin{restatable}{lemma}{transWCBreakPoint}
	\label{thm:transWCBreakPoint}
	Given $k\in \N$ in unary, checking if $\max_{\pi}\breakpoint^T_{\game,\objective}\leq k$ is in $\polytime^{\NP}$.
\end{restatable}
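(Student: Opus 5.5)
We will prove the statement by exhibiting \Cref{alg:k_resilience} as a polynomial-time procedure with access to an $\NP$ oracle, relying on \Cref{thm:qp-sol} for correctness. By \Cref{thm:qp-sol}, if $\max_\pi\breakpoint^T_{\game,\objective}\leq k$ the algorithm returns this value after at most $k+1$ rounds, and otherwise it returns $\oslash$. So deciding whether $\max_\pi\breakpoint^T_{\game,\objective}\leq k$ reduces to checking whether the returned value is different from $\oslash$. Since $k$ is in unary, the number of rounds is polynomial in the input size. It therefore suffices to show that a single round, i.e.\ computing the optimal solution $v_i$ of $QP_i$ given $v_{i-1}$, can be done in polynomial time with an $\NP$ oracle, and that the bit-size of $v_i$ stays polynomial.

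\textbf{Single round.} First, $v_i$ is the value vector of the stopping game $\game^{\dagger i}$ restricted to layer $i$. Under the structural assumptions (A1)--(A3), $\game^{\dagger i}$ is a stopping SG, so optimal memoryless pure strategies exist for both players. Once such strategies are fixed, the value vector is the unique solution of a linear system whose coefficients are the transition probabilities. By Cramer's rule, every entry is a rational number whose numerator and denominator have polynomially many bits in $|\game^{\dagger i}|$, and $|\game^{\dagger i}|$ is itself polynomial in $|\game|$ and $k$.

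\textbf{Bit extraction.} By \cite{condon1992complexity}, the question ``is the value of a stopping SG at a given state at least $r$'' is in $\NP$: guess the pure memoryless strategy of one player and verify it with an LP for the resulting MDP. With this oracle we recover each coordinate of $v_i$ bit by bit, or equivalently by binary search over rationals of bounded denominator, using polynomially many queries. Substituting $v_{i-1}$ into $C_i$ keeps the encoding polynomial, because we only ever need the previous layer. Alternatively, and more robustly, we can query the oracle directly on the whole game $\game^{\dagger i}$, whose size grows only polynomially in $k$. This avoids feeding previously computed numbers back into later rounds. Finally, we compare $V_{s_0,i}$ against $1-p$; this covers the safety variant as well, after the conversion to $P_{\leq p}(\lozenge G)$.

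\textbf{Main obstacle.} The hardest part will be the structural assumptions (A1)--(A3). The statement is for general SGDs, so we need a polynomial reduction to games satisfying them. For (A1) and (A2), we plan to use the standard binarization gadget: replace a state with many actions by a chain of binary choice states owned by the same player, and do the same for multiple disturbance actions. A disturbance is counted only at the final leaf of the chain. For (A3), we plan to use Condon's $\epsilon$-stopping perturbation with $\epsilon$ exponentially small but of polynomial bit-length. This perturbation changes the values by less than the gap between distinct rational values of bounded denominator. We must then argue that comparisons with the threshold $1-p$ are preserved, which may require treating the tie case $=1-p$ separately via exact rational rounding. We expect this perturbation and rounding argument to need the most care, while the oracle complexity counting is routine.
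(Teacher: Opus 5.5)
Your proposal is correct and follows essentially the same route as the paper: you iterate the per-layer QPs (equivalently, per-layer stopping-game values) for $k+1$ rounds with $k$ in unary, extract each solution with polynomially many $\NP$ queries via Condon's result while keeping only the previous layer, and remove (A1)--(A3) by binarization and the $\epsilon$-stopping perturbation. Your Cramer's-rule bound on the bit-size of $v_i$ and your exact-rounding treatment of the tie case make explicit two points that the paper leaves implicit.
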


\begin{proof}
	Note that the decision problem corresponding to these specific QPs defined above can be solved in $\NP$~\cite{condon1992complexity}, but in our case we also need to find the solution of the QP as we are using them in the next QP. 
	This can be extracted using polynomially many $\NP$ queries, where in $j^{th}$ query we will ask whether the $j^{th}$ bit of the solution is $1$. Also, in each iteration, we only need to remember the solution of the last iteration. 
	Thus, we can solve this in polynomial time with an $\NP$ oracle if $k$ is given in unary.
\end{proof}
\subsubsection{Iterative QP in the General Case}

Here we will show how to generalize these restrictions:
\paragraph*{Dealing with A1 and A2.}
If there are more than two actions at a state, the objective functions are not quadratic, 
but we can make them quadratic by adding intermediary states.

For example, suppose from a Player 1 state $s$, we have $l$ actions $a_0,a_1,\ldots a_{l-1}$ and $m$ deviation actions $d_0,d_1,\ldots d_{m-1}$.
First, from state $s$, we only allow $d_0$ and have two new actions that will lead to a new intermediary state $s_1$. Again from this state $s_1$, we will allow $d_1$ and have two new actions that will lead to a new intermediary state $s_2$. Once, we cover all deviation actions using $m$ states, we add additional states for the actions at $s$ to cover the actions. Two actions are taken and given to a new vertex $s_{m+1}$. State $s_m$ has then one action leading to an additional state $s_{m+1}$
instead of its previous two actions. This can be done iteratively until there is a binary tree with $s_m$ the root and has only
two actions. 
For this  $|\gameavailactions(s)|+|\gameavaildisturbanceactions(s)|-2$ more states for every state.

\paragraph*{Dealing with A3.}
Any arbitrary stochastic game can be converted to a polynomially larger stopping game using the construction in \cite{shapley1953stochastic}.
The idea is to add a transition with a small probability $\epsilon$
leading to a non-target sink-state for to every action. 
If the $\epsilon$ is chosen sufficiently small, one can infer the value in the original
SG from the modified SG~\cite[Lemma 8]{condon1992complexity}.

\theoremFreqResNP*
\begin{proof}
	If the probability of reaching $B$ is $\geq 1-p$, then the frequency breaking point is $0$ and the transient breaking point is computed using \cref{alg:k_resilience} whose correctness follows from \cref{thm:transWCBreakPoint}.
	If the probability of reaching $R\cup B$ is $< 1-p$, then it is not possible to break the strategy because the probability to reach $G$ under any strategies is $>p$.
	
	If the probability of reaching $B$ is $< 1-p$, and the probability of reaching $R\cup B$ is $\geq 1-p$, then the transient breaking point is infinite since staying outside of $G$ required staying is some other end component and just staying in end components with zero cost ($B$) is not enough.
	Therefore, to break the strategy Player~2 has to force the play to reach other end components and stay there. 
	Now, staying in the end components that have minimal cost is the same as the breaking point.
	
\end{proof}

%%%%%%%%%%%%%%%%%%%%%%%%%%%%%%%%%%%%%%%%%%%%%%%%%%%%%%%%%%%%%%%%%%%%%%%%
% \clearpage
% \input{changes.tex}
% \clearpage
% \input{relevance.tex}

%%%%%%%%%%%%%%%%%%%%%%%%%%%%%%%%%%%%%%%%%%%%%%%%%%%%%%%%%%%%%%%%%%%%%%%%

\end{document}